\begin{document}


\newcommand{\citex}[2]{\cite[\expandafter{#1}]{#2}}  
\newcommand{\dlmf}[3][]{\href{http://dlmf.nist.gov/#2#1#3}{#2#3}}
\newcommand{\wikipedia}[2][en]{\href{http://#1.wikipedia.org/wiki/#2}{#2}}
\newcommand{\elektrodo}{https://gitlab.com/cfgy/elektrodo/tree/p2019mar}

\newcommand{\fourier}{\operatorname{\mathcal{F}}_{x}}
\newcommand{\laplace}{\operatorname{\mathcal{L}}_{t}}
\newcommand{\ud}[1]{\,\mathrm{d}#1}

\newcommand{\e}{\operatorname{e}}
\newcommand{\sech}{\operatorname{sech}}
\newcommand{\arctanh}{\operatorname{arctanh}}

\newcommand{\arcsn}{\operatorname{arcsn}}
\newcommand{\sn}{\operatorname{sn}}
\newcommand{\cn}{\operatorname{cn}}
\newcommand{\dn}{\operatorname{dn}}
\newcommand{\sd}{\operatorname{sd}}
\newcommand{\cd}{\operatorname{cd}}
\newcommand{\nd}{\operatorname{nd}}
\newcommand{\pq}{\operatorname{pq}}
\newcommand{\pr}{\operatorname{pr}}
\newcommand{\qr}{\operatorname{qr}}
\newcommand{\pp}{\operatorname{pp}}

\newcommand{\deriv}[2]{\frac{\ud #1}{\ud #2}}
\newcommand{\parderiv}[2]{\frac{\partial #1}{\partial #2}}

\newcommand{\simu}{_{\mathrm{sim}}}
\newcommand{\unit}{\text{unit}}
\newcommand{\whole}{\text{whole}}
\newcommand{\IDAE}{\text{IDAE}}
\newcommand{\nul}{\text{null}}

\newtheorem{observacion}{Remark}[section]
\newtheorem{definicion}{Definition}[section]
\newtheorem{lema}{Lemma}[section]
\newtheorem{teorema}{Theorem}[section]
\newtheorem{corolario}{Corollary}[section]

\tcolorboxenvironment{observacion}{enhanced jigsaw, breakable, before skip=1em, after skip=1em, colframe=gray}
\tcolorboxenvironment{definicion}{enhanced jigsaw, breakable, before skip=1em, after skip=1em, colframe=gray}
\tcolorboxenvironment{lema}{enhanced jigsaw, breakable, before skip=1em, after skip=1em, colframe=gray}
\tcolorboxenvironment{teorema}{enhanced jigsaw, breakable, before skip=1em, after skip=1em, colframe=gray}
\tcolorboxenvironment{corolario}{enhanced jigsaw, breakable, before skip=1em, after skip=1em, colframe=gray}
\tcolorboxenvironment{proof}{blanker, breakable, left=5mm, before skip=1em, after skip=1em, borderline west={1mm}{0pt}{gray}, parbox=false}
\renewcommand{\qedsymbol}{\textit{QED}.}


\newcommand{\givennames}[1]{#1}
\newcommand{\familynames}[1]{\textsc{#1}}

\newcommand{\documenttitle}{
	Steady-state theory of interdigitated array of electrodes in confined spaces:
	Case of pure diffusion and reversible electrode reactions
}

\title{\documenttitle}

\author{
	\givennames{Cristian F.} \familynames{Guajardo Yévenes} \\
	Biological Engineering Program and \\
	Pilot Plant Development and Training Institute \\
	King Mongkut's University of Technology Thonburi, Thailand \\
	\texttt{cristian.gua@kmutt.ac.th}
	\and
	\givennames{Werasak} \familynames{Surareungchai} \\
	School of Bioresources and Technology and \\
	Nanoscience \& Nanotechnology Graduate Program \\
	King Mongkut's University of Technology Thonburi, Thailand \\
	\texttt{werasak.sur@kmutt.ac.th}
}

\date{March 6, 2019}


\newcommand{\sep}{, }
\CatchFileDef{\resumen}{text-abstract.tex}{}
\CatchFileDef{\pclaves}{text-keywords.tex}{}
\hypersetup{
	pdftitle={\documenttitle},
	pdfauthor={Cristian F. GUAJARDO YÉVENES and Werasak SURAREUNGCHAI},
	pdfsubject={\resumen},
	pdfkeywords={\pclaves}
}


\numberwithin{equation}{section}

\maketitle

\section*{Abstract}


Analytical equations were found for interdigitated electrodes,
which considered reversible electrode reactions and pure diffusion within confined spaces.
A conformal transformation, obtained by the use of Jacobian elliptic functions,
was applied to solve the diffusion equation in steady state.
The obtained steady-state current depends on the ratio of elliptic integrals of the first kind,
in which their moduli are functions of the relative dimensions of the cell.
The current is smaller for shallower cells, but approaches similiar values to
those of semi-infinite geometries when the cell is sufficiently tall.
Approximations using trigonometric and hyperbolic expressions were also found
for the steady-state current in the cases of shallow and tall cells respectively.

\par \vspace{\baselineskip} \noindent
\emph{Keywords:} confined cell\sep
interdigitated array of electrodes\sep
diffusion equation\sep
steady state\sep
voltammogram\sep
current density\sep
limiting current


\section*{Graphical abstract}

\begin{figure}[h]
	\centering
	\includegraphics{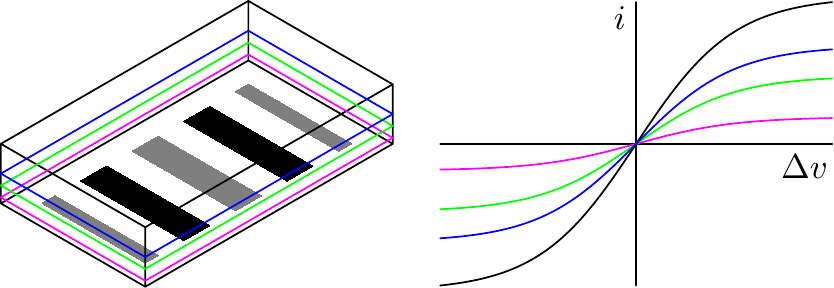}
\end{figure}

\tableofcontents



\clearpage
\section{Introduction}

The use of microelectrodes has been in constant increase since
approximately the beginning of 1980 \cite{Dayton1980, Ewing1981, Wightman1981},
mainly due to the development of microfabrication techniques
and because of the benefits of microelectrodes towards sensing:
reduced ohmic potential drop, fast non-faradaic time constants,
shorter diffusion times, fast establishment of steady-state signals
and increased signal-to-noise ratio \cite{Forster2007, Szunerits2007}.

There are several kinds of microelectrode configurations:
disk, cylinder, disk array, microband array\footnote{%
	composed of only anodes or only cathodes (all band electrodes at the same potential).
}, interdigitated array\footnote{%
	composed of band electrodes organized as alternating anodes and cathodes.
}, ring and recessed electrodes \cite{Aoki1993sep, Forster2007}.
Among all these configurations, interdigitated array of electrodes (IDAE)
is a popular choice and has drawn great attention,
because it can produce high currents from the redox cycling in between
closely arranged generators and collectors \cite{Aoki1993sep, Szunerits2007},
apart from all the known advantages inherited from microelectrodes.

The use of simple mathematical models allows
understanding, prediction of behavior and design of electrode configurations.
In this way, the behavior of an IDAE can be prescribed,
when its physical implementation is fabricated according to
the constraints and assumptions imposed by its mathematical model.
Among these models, pure diffusional transport and
Nernstian boundary conditions are probably the most used for IDAEs,
since they allow great simplifications.

Using this model, simulations have been performed to understand
the time dependence of the current generated at IDAE \cite{Aoki1989jul}.
Also, theoretical results have been obtained in
\cite{Aoki1988dec, Aoki1990apr, Morf2006may}
by analytically solving the diffusion equation in steady state.
From these theoretical results, the work of Aoki stands out,
due to the obtention of exact expressions for the current-potential curve and
limiting current in steady state for reversible \cite{Aoki1988dec} and
irreversible \cite{Aoki1990apr} electrode reactions.

These theoretical results consider that the IDAE is subject to semi-infinite geometry,
that is, they consider that the IDAE is in contact with a large amount of solution around it,
so that the diffusion layer is much smaller than the total size of the electrochemical cell.
In practical terms, this is equivalent to say that
the ratio between the ``height of the cell'' and
the ``center-to-center separation between adjacent electrodes'' is very large.

This semi-infinite condition is in general not valid for every IDAE electrochemical cell.
This can be seen particularly in the case of microfluidic devices,
where the height of the microchannel is clearly finite,
especially when using low-cost fabrication techniques:
softlithography, transparency-film masks, paper and screen-print to name a few
\cite{Duffy1998dec, Xia1998aug, Whitesides2001aug, Dungchai2009jul}.
Typical heights of microfluidic channels fabricated using softlithography
depend on the thickness of the photoresist molds,
which can range between \SIrange{1}{200}{\micro\metre} \cite{Duffy1998dec}.
The width and gap of IDAE bands fabricated using photolithography
is commonly constrained by the resolution of transparency-film masks,
which can range between \SIrange{20}{50}{\micro\metre}
for printers operating between \SIrange{3380}{5080}{dpi} \cite{Duffy1998dec, Whitesides2001aug}.
Therefore, these fabrication techniques can produce microfluidic electrochemical cells
where the ratio between ``height of the cell'' and ``center-to-center separation between adjacent electrodes''
is clearly finite and ranges between $\sim$ \numrange{0,01}{10}.
This implies that the equations obtained in \cite{Aoki1988dec, Aoki1990apr}
may not be always applicable to IDAEs operating within microfluidic channels.

Currently, and due to the lack of analytical expressions for current and current-potential curves,
experiments involving IDAE in microchannels are normally constrasted against simulations of its ideal behavior,
namely, the diffusion equation subject to Nernstian boundary conditions \cite{Goluch2009may,Kanno2014}.
Despite this fact, these studies, together with pure experimental \cite{Lewis2010mar} and theoretical \cite{Strutwolf2005feb, GuajardoYevenes2013sep} results,
have contributed to reveal the behavior of IDAE in confined spaces with stagnant solutions.
These results indicate that higher currents are obtained
when using electrochemical cells with taller microchannels.
In fact, the current approaches similar values to the case of semi-infinite cells,
predicted by \cite{Aoki1988dec, Aoki1990apr},
when the ``height of the microchannel'' $H$ is larger than
the ``center-to-center separation between adjancent electrodes'' $W$.
According to \cite{GuajardoYevenes2013sep},
finite-height microfluidic cells can be regarded as semi-infinite cells,
within 12\% error in bulk concentration, when $H/W \gtrsim 1$.


Therefore, this work intends to elucidate the ideal behavior of IDAE
in electrochemical cells within confined spaces,
and replace the use of simulations,
by obtaining analytical expressions for the current and
voltammogram as a function of the dimensions of the IDAE
and the geometry of the electrochemical cell.
This model would enable the design of electrochemical cells
that output the maximum current available given microfabrication constraints,
and contrast their actual performance with their ideal behavior.


\section{Theory}

\subsection{Definition of the problem}
\label{main:def:problema}

\begin{figure*}[t]
	\centering
	\subcaptionbox{
		\label{main:fig:celda:int_C}}{\includegraphics{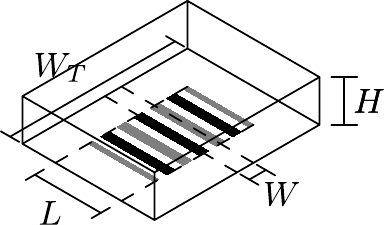}}
	\subcaptionbox{
		\label{main:fig:celda:ext_C}}{\includegraphics{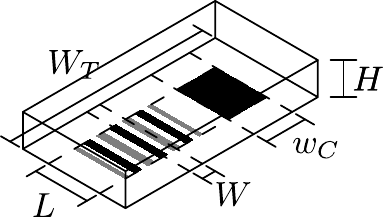}}
	\subcaptionbox{
		\label{main:fig:celda:unitaria}}{\includegraphics{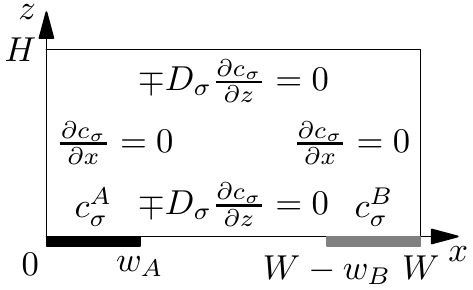}}
	\caption{
		Sketch of an interdigitated array of electrodes (IDAE)
		in a cell of finite height $H$ and total width $W_{T}$.
		(\subref{main:fig:celda:int_C}) Case of one array used as counter electrode.
		(\subref{main:fig:celda:ext_C}) Case of external counter electrode coplanar to the IDAE.
		(\subref{main:fig:celda:unitaria}) 2D unit cell of height $H$,
		width $W$ and half electrode bands $A$ and $B$.
		Fig. (\subref{main:fig:celda:int_C}) can be modeled exactly as
		an assembly of unit cells if the IDAE fits exactly in the cell.
		Fig. (\subref{main:fig:celda:int_C}) and (\subref{main:fig:celda:ext_C})
		can be modeled approximately as an assembly of unit cells
		provided that the number of electrode bands and the length $L$ are large enough.
	}
	\label{main:fig:celda}
\end{figure*}

\subsubsection{Description of the cell}
\label{main:sec:description}


Consider an electrochemical cell with an IDAE configuration, as shown in Fig. \ref{main:fig:celda}.
The cell has a finite height $H$, total width $W_{T}$
and it is surrounded by walls which behave as perfect insulators.


The IDAE is located at the floor of the cell
and it is composed of two arrays of bands, $A$ (black) and $B$ (gray),
of $N_{A}$ and $N_{B}$ bands respectively.
Each band $A$ and $B$ has respectively a width of $2w_{A}$ and $2w_{B}$,
a common length $L$, and they are placed alternatingly
such that two consecutive bands have a center-to-center separation of $W$.
Besides the IDAE, the cell may include a counter electrode $C$
of width $w_{C}$ and length $L$,
which is assumed to lay on the same plane as the IDAE.


Inside the electrochemical cell
there is an oxidated species $O$ and a reduced species $R$,
with diffusion coefficients $D_{O}$ and $D_{R}$ respectively.
Both species are transported solely by diffusion
and react at the surface of the electrodes according to
\begin{equation}
	\ce{$O$ + $n_{e}$ e- <=> $R$}
\end{equation}
where $n_{e}$ corresponds to the number of exchanged electrons.
Here it is assumed that the charge transfer on the electrodes
follows reversible electrode reactions,
and therefore Nernst equation holds even when current flows \cite[Eq. (5:8:6)]{Oldham1994}.

The IDAE is assumed to operate in dual mode, that is,
different potentials are applied to each array of bands.
If one array is potentiostated and its complementary array performs as counter electrode,
then it is said that the IDAE operates with \emph{internal counter electrode}.
On the other hand, if both arrays are potentiostaded independently,
an additional counter electrode would be required,
and the IDAE is said to operate with \emph{external counter electrode}.

\subsubsection{Properties of initial and final conditions}


If the common length $L$ of the electrodes is large enough, then
the concentration profile of the electrochemical species $\sigma$
doesn't change along the depth of the cell, and therefore,
it can be reduced to two dimensions $c_{\sigma}(x,z,t)$.
Here $x$ and $z$ are the horizontal and vertical coordinates respectively,
and $t$ corresponds to time.

In this case, the concentration profile of species $\sigma \in \{O, R\}$
is given initially by $c_{\sigma}(x,z,0^{-}) = c_{\sigma,i}(x,z)$,
and it is assumed to come from a previous steady state.
Under this condition, the initial profile has an average
(along any horizontal line spaning the whole cell)
that is independent of $z$ \cite[Remark 2.2]{GuajardoYevenes2013sep}
\begin{subequations}
	\label{main:eqn:ci:properties}
	\begin{gather}
		\bar{c}_{\sigma,i}^{\whole}
		:= \frac{1}{W_{T}} \int_{\whole} c_{\sigma,i}(x,z) \ud{x}
		\\
		\intertext{%
			due to the fact that a coplanar counter electrode is included
			in the whole cell.
			Similarly, the weighted sum of initial concentrations
			is independent of $(x,z)$
			\cite[Remark 2.1]{GuajardoYevenes2013sep} and equals
		}
		\label{main:eqn:ci:total}
		D_{O} c_{O,i}(x,z) + D_{R} c_{R,i}(x,z)
		= D_{O} \bar{c}_{O,i}^{\whole} + D_{R} \bar{c}_{R,i}^{\whole}
	\end{gather}
\end{subequations}

After the cell is potentiostated,
the concentration profile is shifted out from its initial state,
entering a new steady state $c_{\sigma}(x,z,+\infty) = c_{\sigma,f}(x,z)$
after a sufficiently long time compared with the characteristic time of the cell.
In this final steady state, the average of concentration
(along any horizontal line spaning the whole cell) remains independent of $z$
and equals its initial counterpart $\bar{c}_{\sigma,i}^{\whole}$
\cite[Remark 2.2]{GuajardoYevenes2013sep}
\begin{subequations}
	\label{main:eqn:cf:properties}
	\begin{gather}
		\label{main:eqn:cf:average}
		\frac{1}{W_{T}} \int_{\whole} c_{\sigma,f}(x,z) \ud{x}
		= \bar{c}_{\sigma,i}^{\whole}
		\\
		\intertext{%
			due to the fact that the whole cell includes a coplanar counter electrode.
			Similarly, the weighted sum of concentrations
			remains independent of $(x,z)$ and equals its initial counterpart
			\cite[Remark 2.1]{GuajardoYevenes2013sep} %
		}
		\label{main:eqn:cf:total}
		D_{O} c_{O,f}(x,z) + D_{R} c_{R,f}(x,z)
		= D_{O} \bar{c}_{O,i}^{\whole} + D_{R} \bar{c}_{R,i}^{\whole}
	\end{gather}
\end{subequations}

\subsubsection{Properties of boundary conditions}

The final concentration $c_{\sigma,f}^{E}$ of species $\sigma \in \{O,R\}$,
at the surface $E \in \{A,B,C\}$ of each band and
the external counter electrode (if present), is governed by Nernst equation
\begin{equation}
	\label{main:eqn:eta}
	\eta_{f}^{E}
	:= \ln\! \left( \frac{c_{O,f}^{E}}{c_{R,f}^{E}} \right)
	= \frac{F n_{e}}{RT} (V_{f}^{E} - {V^{o}}')
\end{equation}
However, since the weighted sum of concentrations on the surface of $E$
is related to the initial concentrations, due to Eq. \eqref{main:eqn:cf:total}
\begin{equation}
	\label{main:eqn:cfE:total}
	D_{O} c_{O,f}^{E} + D_{R} c_{R,f}^{E}
	= D_{O} \bar{c}_{O,i}^{\whole} + D_{R} \bar{c}_{R,i}^{\whole}
\end{equation}
then the Nernst equation can be decoupled,
leading to uniform concentrations on all electrodes $E \in \{A,B,C\}$
\begin{subequations}
	\label{main:eqn:nernst}
	\begin{align}
		c_{O,f}^{E} &=
		\frac{
			D_{O} \bar{c}_{O,i}^{\whole} + D_{R} \bar{c}_{R,i}^{\whole}
		}{
			D_{O} + D_{R} \e^{-\eta_{f}^{E}}
		} =
		\frac{
			D_{O} \bar{c}_{O,i}^{\whole} + D_{R} \bar{c}_{R,i}^{\whole}
		}{
			D_{O} \bar{c}_{O,i}^{\whole} + D_{R} \bar{c}_{R,i}^{\whole}
			\e^{-(\eta_{f}^{E} - \eta_{\nul})}
		} \bar{c}_{O,i}^{\whole}
		\\
		c_{R,f}^{E} &=
		\frac{
			D_{R} \bar{c}_{R,i}^{\whole} + D_{O} \bar{c}_{O,i}^{\whole}
		}{
			D_{R} + D_{O} \e^{\eta_{f}^{E}}
		} =
		\frac{
			D_{R} \bar{c}_{R,i}^{\whole} + D_{O} \bar{c}_{O,i}^{\whole}
		}{
			D_{R} \bar{c}_{R,i}^{\whole} + D_{O} \bar{c}_{O,i}^{\whole}
			\e^{(\eta_{f}^{E} - \eta_{\nul})}
		} \bar{c}_{R,i}^{\whole} 
	\end{align}
\end{subequations}
Here, $\eta_{f}^{E}$ and $V_{f}^{E}$ are the normalized
and applied potentials at the electrode $E$ in the final steady state,
${V^{o}}'$ is the formal potential of the redox couple, $F$ is the Faraday constant,
$R$ is the universal gas constant and $T$ is the temperature of the system.

Note that if the final concentrations on the electrodes satisfy
$c_{\sigma,f}^{A} = c_{\sigma,f}^{B} = c_{\sigma,f}^{C} = \bar{c}_{\sigma,i}^{\whole}$,
then there is no gradient of concentration generated inside the cell,
and therefore the current in the cell must equal zero.
This corresponds to the case when the null potential $V_{\nul}$
is applied to the electrodes $\eta_{f}^{E} = \eta_{\nul}$
\begin{equation}
	\label{main:eqn:null}
	\eta_{\nul}
	= \ln\left( \frac{\bar{c}_{O,i}^{\whole}}{\bar{c}_{R,i}^{\whole}} \right)
	= \frac{n_{e} F}{RT}(V_{\nul} - {V^{o}}')
\end{equation}

\subsubsection{Problem reduced to diffusion in the unit cell}


If the IDAE fits exactly within the cell,
such that the bands at both ends of the IDAE have half width,
then the cell in Fig. \ref{main:fig:celda:int_C} can be modeled exactly
as an assembly of two-dimensional unit cells,
like the one shown in Fig. \ref{main:fig:celda:unitaria}.
In case the IDAE doesn't fit exactly in the cell,
the configurations in Figs. \ref{main:fig:celda:int_C},\subref{main:fig:celda:ext_C}
still can be regarded as an assembly of two-dimensional unit cells
provided the following conditions:
(i) The length $L$ is large enough, so that the problem still can be reduced to two dimensions.
(ii) The number of bands is so large that the edge effects at the ends of the IDAE are negligible,
and it is still possible to consider symmetry boundary conditions for the unit cell
\cite[271]{Aoki1988dec}.

Under these conditions, the final concentration $c_{\sigma,f}(x,z)$
of the electrochemical species $\sigma \in \{O,R\}$
can be reduced to a problem of steady-state two-dimensional diffusion%
\footnote{
	for $\pm$ or $\mp$, the upper sign corresponds to $\sigma=O$,
	and the lower sign, to $\sigma=R$.
}
inside a representative unit cell

\begin{subequations}
	\label{main:eqn:pde}
	\begin{align}
		\parderiv{^2 c_{\sigma,f}}{x^2}(x,z)
		+ \parderiv{^2 c_{\sigma,f}}{z^2}(x,z) 
		&= 0
		\\
		\label{main:eqn:pde:cb:izda-dcha}
		\parderiv{c_{\sigma,f}}{x}(0,z)
		=\parderiv{c_{\sigma,f}}{x}(W,z) &= 0,\, \forall z \in [0,H]
		\\
		\label{main:eqn:pde:cb:arriba}
		\mp D_{\sigma} \parderiv{c_{\sigma,f}}{z}(x,H) &= 0,\, \forall x \in [0,W]
		\\
		\label{main:eqn:pde:cb:abajo:gap}
		\mp D_{\sigma} \parderiv{c_{\sigma,f}}{z}(x,0) &= 0,\, \forall x \notin A \cup B 
		\\
		\label{main:eqn:pde:cb:abajo:A}
		c_{\sigma,f}(x,0) &= c_{\sigma,f}^{A},\, \forall x \in A,
		\\
		\label{main:eqn:pde:cb:abajo:B}
		c_{\sigma,f}(x,0) &= c_{\sigma,f}^{B},\, \forall x \in B
	\end{align}
\end{subequations}
where $D_{\sigma}$ is the diffusion coefficient of the species $\sigma \in \{O,R\}$,
Eqs. (\ref{main:eqn:pde:cb:izda-dcha}--\ref{main:eqn:pde:cb:abajo:gap}) are symmetry/insulation boundary conditions,
and $c_{\sigma,f}^{A}$ and $c_{\sigma,f}^{B}$ are the final concentrations
at the surface of the bands $A$ and $B$ respectively, see Eqs. \eqref{main:eqn:nernst}.

\subsection{Exact solution in steady state}

\subsubsection{Transformation of the unit-cell domain}

Alternated concentration and insulation boundary conditions at the bottom of the unit cell,
Eqs. (\ref{main:eqn:pde:cb:abajo:gap}--\ref{main:eqn:pde:cb:abajo:B}),
make it difficult to obtain an analytical solution to Eqs. \eqref{main:eqn:pde}.
Nevertheless, through domain transformations,
it is possible to arrange these alternated boundary conditions,
so they can be placed at different walls in a transformed cell.

One convenient way to obtain such transformation is through complex conformal mappings,
since they leave the diffusion equation (in steady state),
as well as the concentration and non-flux/insulation boundary conditions,
invariant under domain changes \cite[\S5.7]{Driscoll2002}.
In particular, the complex Jacobian elliptic functions $\sn()$ and $\cd()$ are of interest,
since they conformally map a square domain into the upper half-plane
\cite[\S 2.5]{Driscoll2002} \cite[\S\dlmf{22.18.}{ii}]{dlmf}.
Also the Möbius functions are important,
since they are able to reorganize the upper half-plane,
by mapping into itself \cite[\S2.3]{Driscoll2002}.

One can now regard the concentration $c_{\sigma,f}(x,z)$
as a function $c_{\sigma,f}(\bm{r})$ of a complex variable
$\bm{r} = x + \bm{i} z$, where $\bm{i}^{2} = -1$,
and find a complex function $\bm{\rho} = T_{r}^{\rho}(\bm{r})$
that can transform the unit cell from the IDAE domain $\bm{r} = x + \bm{i}z$
to a parallel-plates domain $\bm{\rho} = \xi + \bm{i}\zeta$
as shown in Fig. \ref{main:fig:transformacion}.

\begin{figure*}[t]
	\centering
	\includegraphics{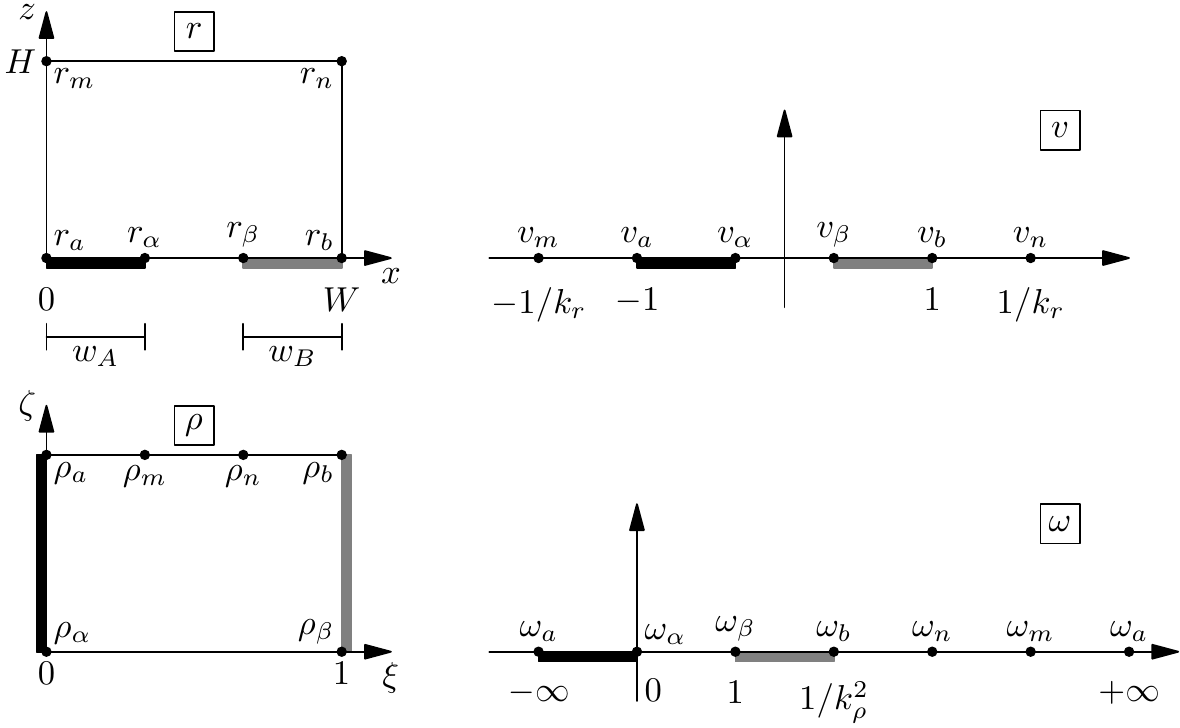}
	\caption{
		Complex transformation of the IDAE domain $\bm{r}=(x,z)$
		into the conformal parallel-plates domain $\bm{\rho}=(\xi,\zeta)$,
		by using the auxiliary complex domains $\bm{v}$ and $\bm{\omega}$.
		First, the transformation $\bm{r} \to \bm{v}$
		maps the interior of the unit cell into the upper half plane
		by using a Jacobian elliptic function.
		Later, the transformation $\bm{v} \to \bm{\omega}$
		reorganizes the structure of the upper half plane
		through a Möbius transformation.
		Finally, the transformation $\bm{\omega} \to \bm{\rho}$
		maps the upper half plane to the interior of the conformal parallel-plates cell,
		by using a composition of squared root and a inverse Jacobian elliptic function.
	}
	\label{main:fig:transformacion}
\end{figure*}

\begin{lema}
	\label{main:lem:T}
	The domain $\bm{r}=(x,z)$ of the IDAE unit cell can be conformally transformed
	into a parallel-plates configuration in the domain $\bm{\rho}=(\xi,\zeta)$,
	through the use of a complex transformation $\bm{\rho} = T_{r}^{\rho}(\bm{r})$
	(see Fig. \ref{main:fig:transformacion}).
	This domain transformation is given by
	\begin{subequations}
		\label{main:eqn:T}
		\begin{align}
			\bm{\rho} = T_{r}^{\rho}(\bm{r})
			&= T_{\omega}^{\rho} \circ T_{v}^{\omega} \circ T_{r}^{v}(\bm{r})
			\\
			\label{main:eqn:omega-rho}
			\bm{\rho} = T_{\omega}^{\rho}(\bm{\omega})
			&= \frac{1}{K(k_{\rho})} \arcsn(\sqrt{\bm{\omega}}, k_{\rho})
			\\
			\label{main:eqn:v-omega}
			\bm{\omega} = T_{v}^{\omega}(\bm{v})
			&= \frac{(\bm{v}-\bm{v}_{\alpha})}{(\bm{v}-\bm{v}_{a})}
			\frac{(\bm{v}_{\beta}-\bm{v}_{a})}{(\bm{v}_{\beta}-\bm{v}_{\alpha})}
			\\
			\label{main:eqn:r-v}
			\bm{v} = T_{r}^{v}(\bm{r})
			&= -\cd\!\left(K(k_{r}) \frac{2\bm{r}}{W}, k_{r}\right)
		\end{align}
	\end{subequations}
	where the moduli and their complements are
	\begin{subequations}
		\label{main:eqn:kr-krho}
		\begin{align}
			\label{main:eqn:krho}
			k_{\rho}^{2} &=
			\frac{
				2(\bm{v}_{\beta}-\bm{v}_{\alpha})
			}{
				(1-\bm{v}_{\alpha})(1+\bm{v}_{\beta})
			},
			&
			{k_{\rho}'}^{2} &=
			\frac{(1+\bm{v}_{\alpha})}{(1-\bm{v}_{\alpha})}
			\frac{(1-\bm{v}_{\beta})}{(1+\bm{v}_{\beta})}
			\\
			\label{main:eqn:kr}
			k_{r} &= Q^{-1}(\e^{-\pi 2H/W}),
			&
			k_{r}' &= Q^{-1}(\e^{-\pi W/2H})
		\end{align}
	\end{subequations}
	and the following points on the boundary of the IDAE domain $\bm{r}$
	\begin{subequations}
		\begin{align}
			\bm{r}_{a} &= (0, 0), & \bm{r}_{\alpha} &= (w_{A}, 0) \\
			\bm{r}_{b} &= (W, 0), & \bm{r}_{\beta}  &= (W-w_{B}, 0)
		\end{align}
	\end{subequations}
	are transformed to the auxiliary domain $\bm{v}$ as
	\begin{subequations}
		\label{main:eqn:vAB}
		\begin{align}
			\label{main:eqn:vAB:A}
			\bm{v}_{a} = T_{r}^{v}(\bm{r}_{a}) &= -1, 
			& 
			\bm{v}_{\alpha} = T_{r}^{v}(\bm{r}_{\alpha}) &= 
			-\cd\!\left(K(k_{r}) \frac{2 w_{A}}{W}, k_{r}\right) 
			\\
			\label{main:eqn:vAB:B}
			\bm{v}_{b} = T_{r}^{v}(\bm{r}_{b}) &= \phantom{-}1, 
			& 
			\bm{v}_{\beta} = T_{r}^{v}(\bm{r}_{\beta}) &= 
			\phantom{-}\cd\!\left(K(k_{r}) \frac{2 w_{B}}{W}, k_{r}\right)
		\end{align}
	\end{subequations}
	and to the parallel-plates domain $\bm{\rho}$ as
	\begin{subequations}
		\label{main:eqn:rhoAB}
		\begin{align}
			\bm{\rho}_{a} = T_{r}^{\rho}(\bm{r}_{a}) 
			&= \left(0, \frac{K'(k_{\rho})}{K(k_{\rho})}\right),
			&
			\bm{\rho}_{\alpha} = T_{r}^{\rho}(\bm{r}_{\alpha}) &= (0,0)
			\\
			\bm{\rho}_{b} = T_{r}^{\rho}(\bm{r}_{b}) 
			&= \left(1, \frac{K'(k_{\rho})}{K(k_{\rho})}\right),
			&
			\bm{\rho}_{\beta} = T_{r}^{\rho}(\bm{r}_{\beta}) &= (1,0)
		\end{align}
	\end{subequations}
\end{lema}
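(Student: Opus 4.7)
The plan is to verify the lemma by checking each of the three conformal maps in turn, confirming how it acts on the boundary of the preceding domain, and finally assembling the composition. Throughout, I rely on the well-known fact that conformal maps carry the Laplace equation with Dirichlet/Neumann data into the same problem on the image domain, so it suffices to track the geometry of the boundary pieces (Dirichlet arcs where the electrodes sit, Neumann arcs at the insulating walls and symmetry lines).

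\textbf{Step 1 (the map $T_r^v$).} First I would show that $-\cd(K(k_r)\,2\bm{r}/W, k_r)$ maps the open rectangle $(0,W)\times(0,H)$ conformally onto the upper half-plane $\{\operatorname{Im}\bm v>0\}$. The standard image of a real period rectangle under $\cd$ is a half-plane, so the key point is the choice of modulus. The horizontal scaling $2/W$ ensures the real period $2K(k_r)$ fits exactly across $[0,W]$; the height $H$ must then match the imaginary semi-period $K'(k_r)$, i.e. $K'(k_r)/K(k_r)=2H/W$. Inverting the nome relation $q = \exp(-\pi K'/K)$ yields $k_r=Q^{-1}(\e^{-2\pi H/W})$ as claimed in~\eqref{main:eqn:kr}. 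A direct evaluation of $\cd$ on the four corners and the four edges gives~\eqref{main:eqn:vAB} and shows that the bottom edge $z=0$ is sent to $(-\infty,-1]\cup[-1,1]\cup[1,\infty)$ (with the finite interval covering the stretch from $\bm r_a$ to $\bm r_b$), while the three insulation edges of the rectangle (the two verticals and the top $z=H$) are sent, together, to the remainder of the real axis.

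\textbf{Step 2 (the Möbius map $T_v^\omega$).} Next I would verify that~\eqref{main:eqn:v-omega} is the Möbius transformation sending $\bm v_\alpha\mapsto 0$, $\bm v_a\mapsto\infty$, $\bm v_\beta\mapsto 1$. Since $\bm v_\alpha,\bm v_a,\bm v_\beta$ are real and ordered $\bm v_a<\bm v_\alpha<\bm v_\beta$, the map preserves the upper half-plane and carries the real line onto itself, so the full $\bm v$ boundary goes to $\mathbb R$. The image of the fourth point $\bm v_b=+1$ is computed by direct substitution and equals $k_\rho^{-2}$ with $k_\rho^2$ as in~\eqref{main:eqn:krho}; this can be read off as a cross-ratio of $(\bm v_b,\bm v_\alpha;\bm v_a,\bm v_\beta)$. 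At this stage the four ``corner'' points $\bm v_{\alpha}, \bm v_\beta, \bm v_b, \bm v_a$ map to $0, 1, k_\rho^{-2}, \infty$, and I would check that the Dirichlet arcs (the images of the electrode surfaces $A$ and $B$) are precisely $[0,1]$ and $[k_\rho^{-2},\infty]$ on the real $\bm\omega$-axis, while the Neumann arcs land on $[-\infty,0]\cup[1,k_\rho^{-2}]$.

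\textbf{Step 3 (the map $T_\omega^\rho$).} I would then show that $\bm\omega\mapsto\sqrt{\bm\omega}$ conformally opens the upper half-plane $\{\operatorname{Im}\bm\omega>0\}$ onto the first quadrant, sending $0,1,k_\rho^{-2},\infty$ to $0,1,k_\rho^{-1},\infty$; afterwards $\arcsn(\cdot,k_\rho)/K(k_\rho)$ maps the first quadrant onto the rectangle $[0,1]\times[0,K'(k_\rho)/K(k_\rho)]$ (a standard Schwarz–Christoffel inversion), sending $0,1,k_\rho^{-1},\infty$ respectively to $(0,0)$, $(1,0)$, $(1,K'/K)$, $(0,K'/K)$. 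Combining this with Step~2, the image of the boundary points of $\bm r$ is exactly~\eqref{main:eqn:rhoAB}. In particular, electrode $A$ is the segment $\zeta=K'(k_\rho)/K(k_\rho)$, electrode $B$ is the segment $\zeta=0$, and the four insulating arcs collapse into the two vertical sides $\xi=0$ and $\xi=1$ of the target rectangle: a genuine parallel-plates geometry.

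\textbf{Expected obstacle.} The purely analytic identities for $\cd$ and $\arcsn$ are textbook, and the Möbius cross-ratio is routine; what requires care is the orientation bookkeeping across the composition, i.e.\ making sure each Dirichlet segment on the IDAE floor ends up on the intended plate of the target rectangle (and not swapped with its Neumann neighbours). This is largely handled by respecting the counterclockwise order $\bm r_a\to\bm r_\alpha\to\bm r_\beta\to\bm r_b$ along the bottom and tracking how each conformal step permutes the four distinguished points; nevertheless, the algebraic manipulation to reduce the cross-ratio to the symmetric form~\eqref{main:eqn:krho} and to confirm the complementary-modulus identity ${k_\rho'}^2=(1+\bm v_\alpha)(1-\bm v_\beta)/[(1-\bm v_\alpha)(1+\bm v_\beta)]$ is the one place where the computation, while elementary, is delicate.
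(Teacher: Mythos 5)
Your three-stage architecture (elliptic map of the rectangle to the upper half-plane, M\"obius reorganization, square root followed by $\arcsn$) is exactly the paper's construction, and the corner images \eqref{main:eqn:vAB} and \eqref{main:eqn:rhoAB}, the determination of $k_{r}$ from $K'(k_{r})/K(k_{r})=2H/W$, and the cross-ratio identification of $\bm{\omega}_{b}=k_{\rho}^{-2}$ are all handled correctly. However, there is a concrete error in the boundary bookkeeping --- precisely the point you yourself flag as delicate --- and it is not harmless: from Step 2 onward you have swapped the Dirichlet and Neumann arcs. The half band of $A$ is the segment from $\bm{r}_{a}$ to $\bm{r}_{\alpha}$, whose image in the $\bm{v}$-plane is $[\bm{v}_{a},\bm{v}_{\alpha}]=[-1,\bm{v}_{\alpha}]$. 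The M\"obius map \eqref{main:eqn:v-omega} sends $\bm{v}_{a}\mapsto\bm{\infty}$ and $\bm{v}_{\alpha}\mapsto 0$, and since $\bm{v}_{\beta}\notin[-1,\bm{v}_{\alpha}]$, that segment is carried to the arc of the real $\bm{\omega}$-axis joining $0$ to $\bm{\infty}$ which avoids $1$, i.e.\ the \emph{negative} real axis --- not to $[0,1]$ as you claim. Likewise the electrode arc for $B$ is $[1,k_{\rho}^{-2}]$, while $[0,1]$ is the image of the insulating gap and $[k_{\rho}^{-2},\infty]$ that of the insulating walls. Propagating your swapped assignment through Step 3 places the electrodes on the horizontal sides $\zeta=0$ and $\zeta=K'(k_{\rho})/K(k_{\rho})$ of the target rectangle and the insulation on the vertical sides, which is the opposite of what the lemma must produce: $A$ has to land on $\xi=0$ and $B$ on $\xi=1$, with insulation on $\zeta=0$ and $\zeta=K'(k_{\rho})/K(k_{\rho})$, so that the transformed problem \eqref{main:eqn:pde:parallel} admits the linear solution $c_{\sigma,f}^{A}+[c_{\sigma,f}^{B}-c_{\sigma,f}^{A}]\xi$ of Theorem \ref{main:teo:cf}. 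With your assignment that theorem would fail.

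A smaller internal inconsistency sits in Step 1: you state that the bottom edge $z=0$ is sent to $(-\infty,-1]\cup[-1,1]\cup[1,\infty)$ while the three insulating edges go to ``the remainder of the real axis''; in fact the bottom edge is sent to $[-1,1]$ only, and the two vertical walls together with the roof supply the two unbounded rays. Once both points are corrected, your argument coincides with the paper's proof, which otherwise proceeds along the same lines (the paper constructs $(T_{r}^{v})^{-1}$ from the special values of $\arcsn$ and then inverts, rather than asserting the forward $\cd$ map directly, but that is a cosmetic difference).
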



\noindent For a detailed construction of this transformation see
\emph{Supplementary information} \S\ref{si:proof:T}.

Several special functions have been used for the definition of the conformal transformation $T_{r}^{\rho}$:
$\cd(\bm{u},k)$ and $\arcsn(\bm{u},k)$ correspond to Jacobian elliptic functions
\citex{Eqs. (\dlmf[E]{22.2.}{8}) and (\dlmf[E]{22.15.}{12})}{dlmf}
analogous to their circular counterparts $\cos()$ and $\arcsin()$ respectively,
$K(k)$ and $K'(k)$ correspond to the \emph{complete elliptic integral of the first kind}
and its associated function respectively
\citex{Eqs. (\dlmf[E]{19.2.}{4}), (\dlmf[E]{19.2.}{8}) and (\dlmf[E]{19.2.}{9})}{dlmf},
and $Q(k)$ corresponds to the \emph{elliptic nome function} \citex{Eq. (\dlmf[E]{22.2.}{1})}{dlmf}.

The conformal transformation in Lemma \ref{main:lem:T} agrees with that
in \cite[Eqs. (9), (10), (14), (15) and (19)]{Aoki1988dec}%
\footnote{
	Note that in \cite{Aoki1988dec}
	the parameter $p=k_{\rho}^{2}$ is used instead of the modulus $k_{\rho}$.
}
for semi-infinite geometries,
since $k_{r} \to 0^{+}$, $\cd(\cdot, k_{r}) \to \cos(\cdot)$
and $K(k_{r}) \to \pi/2$ when $H \to +\infty$.

\subsubsection{Concentration profile}

If the concentration profile is now written in terms of
the conformal parallel-plates domain $(\xi,\zeta) = T_{r}^{\rho}(x,z)$
by using Lema \ref{main:lem:T}
\begin{equation}
	\gamma_{\sigma,f}(\xi,\zeta)
	= \gamma_{\sigma,f} \circ T_{r}^{\rho}(x,z)
	= c_{\sigma,f}(x,z)
\end{equation}
then the Laplacian operator in the IDAE and the parallel-plates domains
are related \cite[\S2.1 Problem 7, Eq. (5.4.17)]{Ablowitz2003apr},
\cite[Eq. (5.20)]{Driscoll2002}, \cite[Eq. (6.3)]{Olver2018mar}
\begin{equation}
	\parderiv{^{2}c_{\sigma,f}}{x^{2}}(x,z)
	+ \parderiv{^{2}c_{\sigma,f}}{z^{2}}(x,z)
	= \left|\parderiv{\bm{r}}{\bm{\rho}}(\xi,\zeta)\right|^{-2}
	\left[
		\parderiv{^{2} \gamma_{\sigma,f}}{\xi^2}(\xi,\zeta) 
		+ \parderiv{^{2} \gamma_{\sigma,f}}{\zeta^2}(\xi,\zeta)
	\right]
\end{equation}
meaning that the diffusion equation in steady state
is invariant under conformal domain transformations.
The same is true for insulation/symmetry boundary conditions, that is,
they remain invariant under conformal transformations,
since the angles between the iso-concentration lines and the flux lines are preserved.

Therefore, the steady-state diffusion problem in Eqs. \eqref{main:eqn:pde}
is transformed to the conformal parallel-plates domain $\bm{\rho}=(\xi,\zeta)$
\begin{subequations}
	\label{main:eqn:pde:parallel}
	\begin{align}
		\parderiv{^2 \gamma_{\sigma,f}}{\xi^2}(\xi,\zeta) 
		+ \parderiv{^2 \gamma_{\sigma,f}}{\zeta^2}(\xi,\zeta)
		&= 0
		\\
		\label{main:eqn:pde:parallel:simetria}
		\parderiv{\gamma_{\sigma,f}}{\zeta}(\xi,0) 
		= \parderiv{\gamma_{\sigma,f}}{\zeta}(\xi,\zeta_{a})
		&= 0
		\\
		\gamma_{\sigma,f}(0,\zeta) &= c_{\sigma,f}^{A}
		\\
		\gamma_{\sigma,f}(1,\zeta) &= c_{\sigma,f}^{B}
	\end{align}
\end{subequations}
where $\zeta_{a} = \Im\bm{\rho}_{a} = K'(k_{\rho})/K(k_{\rho})$
corresponds to the imaginary part of $\bm{\rho}_{a}$ defined in Eq. \eqref{main:eqn:rhoAB}.
This leads to the following solution for the concentration profile in steady state

\begin{teorema}
	\label{main:teo:cf}
	If the IDAE can be modeled as an assembly of unit cells and
	the final concentrations on the bands $A$ and $B$ are uniform,
	such as in \S\ref{main:def:problema},
	then the concentration profile in the final steady state is given by
	\begin{equation}
		\label{main:eqn:cf}
		c_{\sigma,f}(x,z) = c_{\sigma,f}^{A} + [c_{\sigma,f}^{B} - c_{\sigma,f}^{A}] \xi(x,z)
	\end{equation}
	where $c_{\sigma,f}^{E}$ is the steady-state concentration of
	the species $\sigma \in \{O,R\}$ on each band $E \in \{A,B\}$,
	defined in Eq. (\ref{main:eqn:nernst}),
	and $\xi(x,z) = \Re\, T_{r}^{\rho}(x,z)$ corresponds to
	the real part of the conformal transformation $T_{r}^{\rho}(x,z)$,
	defined in Eq. (\ref{main:eqn:T}).
\end{teorema}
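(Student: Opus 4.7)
The plan is to exploit Lemma \ref{main:lem:T} to move the problem to the conformal parallel-plates rectangle, solve the transformed boundary-value problem by inspection, and then pull the answer back. Because conformal maps preserve both the Laplacian in steady state and the Dirichlet/insulation character of the boundary conditions (as noted in the paragraph preceding the theorem), solving \eqref{main:eqn:pde:parallel} is equivalent to solving the original problem \eqref{main:eqn:pde}.

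In the transformed rectangle $[0,1]\times[0,\zeta_{a}]$ the boundary data are constant in $\zeta$: Dirichlet values $c_{\sigma,f}^{A}$ and $c_{\sigma,f}^{B}$ on the two vertical sides, and homogeneous Neumann on the two horizontal sides. The natural ansatz is therefore the separated one $\gamma_{\sigma,f}(\xi,\zeta)=f(\xi)$. Laplace's equation reduces to $f''(\xi)=0$, the Neumann conditions at $\zeta=0$ and $\zeta=\zeta_{a}$ are satisfied automatically, and the two Dirichlet conditions fix
$$f(\xi)=c_{\sigma,f}^{A}+[c_{\sigma,f}^{B}-c_{\sigma,f}^{A}]\,\xi.$$
Uniqueness of this bounded harmonic extension follows from the standard maximum-principle (or energy) argument for the Laplacian on a rectangle with mixed Dirichlet--Neumann data, so no nontrivial $\zeta$-dependent harmonic can be added.

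Substituting back through $T_{r}^{\rho}$, and noting that $f$ depends only on the real part $\xi(x,z)=\Re T_{r}^{\rho}(x,z)$, yields the announced formula
$$c_{\sigma,f}(x,z)=c_{\sigma,f}^{A}+[c_{\sigma,f}^{B}-c_{\sigma,f}^{A}]\,\xi(x,z).$$

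The only delicate point is that $T_{r}^{\rho}$ ceases to be conformal at the corner preimages $\bm{r}_{a},\bm{r}_{\alpha},\bm{r}_{b},\bm{r}_{\beta}$, i.e.\ at the electrode edges where Dirichlet data meet insulating gap; there the gradient of $c_{\sigma,f}$ may become unbounded. This, however, is an isolated set and is handled by a standard removable-singularity argument for harmonic functions, so the pointwise identity above remains valid throughout the open unit cell and extends continuously to the boundary away from those four edge points. I expect this corner regularity to be the only step requiring more than a line of justification; the heart of the proof is simply the observation that conformal transport of the BVP \eqref{main:eqn:pde} reduces it to a one-dimensional linear interpolation in $\xi$.
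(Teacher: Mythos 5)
Your proposal is correct and follows essentially the same route as the paper: transport the boundary-value problem to the parallel-plates rectangle via $T_{r}^{\rho}$, observe that the solution there is the linear interpolation $c_{\sigma,f}^{A}+[c_{\sigma,f}^{B}-c_{\sigma,f}^{A}]\xi$, and pull back. The only difference is cosmetic — the paper justifies $\zeta$-independence by reflecting the cell infinitely along the $\zeta$-axis, whereas you invoke a separated ansatz plus a maximum-principle uniqueness argument (and add a welcome remark on the corner singularities), but the substance is identical.
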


\begin{proof}
	Due to the symmetry boundary conditions in Eq. \eqref{main:eqn:pde:parallel:simetria},
	the parallel-plates cell can be extended infinitely along the $\zeta$-axis.
	The concentration profile $\gamma_{\sigma,f}(\xi,\zeta)$ for
	the infinite parallel-plates cell doesn't depend on the $\zeta$-coordinate,
	therefore the steady-state solution of Eq. \eqref{main:eqn:pde:parallel}
	is given by a linear interpolation of the concentration at each of the plates
	\begin{equation}
		\gamma_{\sigma,f}(\xi,\zeta)
		= c_{\sigma,f}^{A} + [c_{\sigma,f}^{B} - c_{\sigma,f}^{A}] \xi
	\end{equation}
	The desired result is finally obtained by applying
	the conformal transformation $(\xi,\zeta) = T_{r}^{\rho}(x,z)$
	in Lema \ref{main:lem:T} to return to the IDAE domain.
\end{proof}

Note that the expression for the concentration profile
given by Eqs. (\ref{main:eqn:cf}) and (\ref{main:eqn:nernst})
agrees with that in \cite[Eqs. (20) and (21)]{Aoki1988dec}%
\footnote{
	Note that \cite[$H$ in Eq. (21)]{Aoki1988dec} has a typo,
	and the $+$ sign, at the middle of the expression,
	should be replaced by a $-$ sign.
} for semi-infinite geometries,
which depends directly on the real part of the conformal transformation $T_{r}^{\rho}$ in Lemma \ref{main:lem:T}.

\subsubsection{Current density}


\begin{corolario}
	\label{main:cor:jf}
	Under the conditions of Theorem \ref{main:teo:cf},
	the current density $j_{f}(x)$ in the final steady state is given by
	\begin{equation}
		\label{main:eqn:jf}
		\mp \frac{j_{f}(x)}{F n_e D_{\sigma} [c_{\sigma,f}^{B} - c_{\sigma,f}^{A}]}
		= \parderiv{\xi}{z}(x,0)
		= -\Im \parderiv{\bm{\rho}}{\bm{r}}(x) 
	\end{equation}
	where $\Im\{\}$ corresponds to the imaginary part and the complex derivative of $\bm{\rho} = T_{r}^{\rho}(\bm{r})$ is given by
	\begin{equation}
		\label{main:eqn:dTdr}
		\parderiv{\bm{\rho}}{\bm{r}}(\bm{r}) 
		= \bm{i}\, \frac{k_{r}'}{W} \frac{K(k_{r})}{K(k_{\rho})}
		\frac{
			(1-\bm{v}_{\alpha})^{1/2} (1+\bm{v}_{\beta})^{1/2}
		}{
			(\bm{v}-\bm{v}_{\alpha})^{1/2} (\bm{v}-\bm{v}_{\beta})^{1/2}
		}
		\nd\!\left(K(k_{r}) \frac{2\bm{r}}{W}, k_{r}\right)
	\end{equation}
	of which its parameters are defined in Eqs. \eqref{main:eqn:kr-krho} and \eqref{main:eqn:vAB}, and $\nd()$ corresponds to a Jacobian elliptic function, defined in \citex{Eq. (\dlmf[E]{22.2.}{6})}{dlmf}.
\end{corolario}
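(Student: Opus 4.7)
My approach is to combine Fick's first law with Theorem~\ref{main:teo:cf} and the Cauchy--Riemann equations to obtain both equalities in Eq.~\eqref{main:eqn:jf}, and then to apply the chain rule through the three pieces of Lemma~\ref{main:lem:T} to obtain the explicit derivative formula \eqref{main:eqn:dTdr}. For the first equality, I would write the current density at the bottom of the unit cell as $j_f(x) = \mp F n_e D_\sigma\,\partial_z c_{\sigma,f}(x,0)$, with the upper/lower sign matching the convention for $\sigma \in \{O,R\}$, and substitute the steady-state profile $c_{\sigma,f}(x,z) = c_{\sigma,f}^A + [c_{\sigma,f}^B - c_{\sigma,f}^A]\,\xi(x,z)$ from Theorem~\ref{main:teo:cf}. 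For the second equality, writing $\bm{\rho} = \xi + \bm{i}\zeta$ and applying the Cauchy--Riemann equations to the holomorphic map $T_r^\rho$ gives $\partial\bm{\rho}/\partial\bm{r} = \partial_x \xi + \bm{i}\,\partial_x \zeta = \partial_x \xi - \bm{i}\,\partial_z \xi$, so that $\partial_z \xi = -\Im\,\partial\bm{\rho}/\partial\bm{r}$, as claimed.

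For the explicit expression \eqref{main:eqn:dTdr}, I apply the chain rule $\partial\bm{\rho}/\partial\bm{r} = (d\bm{\rho}/d\bm{\omega})(d\bm{\omega}/d\bm{v})(d\bm{v}/d\bm{r})$ to the three transformations of Lemma~\ref{main:lem:T}. The first factor arises from $d\arcsn(u,k)/du = [(1-u^2)(1-k^2u^2)]^{-1/2}$ combined with the $1/(2\sqrt{\bm{\omega}})$ coming from the inner square root; the second is routine rational-function differentiation of the Möbius map \eqref{main:eqn:v-omega}; the third uses $d\cd(u,k)/du = -{k'}^2\,\sd(u,k)\,\nd(u,k)$ together with the Jacobian $2K(k_r)/W$ of the argument of $\cd$.

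The only non-mechanical step, and the main obstacle, is the algebraic simplification that collapses this product to the compact form of Eq.~\eqref{main:eqn:dTdr}. I would first rewrite $\bm{\omega}$, $1-\bm{\omega}$ and $1-k_\rho^2 \bm{\omega}$ as rational functions of $\bm{v}$: here the defining relation $k_\rho^2 = 2(\bm{v}_\beta-\bm{v}_\alpha)/[(1-\bm{v}_\alpha)(1+\bm{v}_\beta)]$ from Eq.~\eqref{main:eqn:krho} is exactly what reduces the Möbius prefactor to $(1-\bm{v}_\alpha)^{1/2}(1+\bm{v}_\beta)^{1/2}$. I would then use the Jacobian identity $1-\cd^2 = {k'_r}^2\,\sd^2$, equivalently $(1-\bm{v})(1+\bm{v}) = {k'_r}^2\,\sd^2(\cdot,k_r)$, to cancel the $\sd(\cdot,k_r)$ produced by $d\bm{v}/d\bm{r}$ against the $\sqrt{(1-\bm{v})(1+\bm{v})}$ coming from $\sqrt{(1-\bm{\omega})(1-k_\rho^2\bm{\omega})}$, leaving only the $\nd(\cdot,k_r)$ factor. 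Careful bookkeeping of the branch choices for these complex square roots then accounts both for the overall factor $\bm{i}$ and for writing $(\bm{v}-\bm{v}_\beta)^{1/2}$ rather than $(\bm{v}_\beta-\bm{v})^{1/2}$ in the denominator.
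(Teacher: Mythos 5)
Your proposal follows the paper's proof essentially step for step: Fick's law applied to the profile of Theorem \ref{main:teo:cf}, the Cauchy--Riemann equations for the second equality in Eq.~\eqref{main:eqn:jf}, and the chain rule through $T_{\omega}^{\rho}\circ T_{v}^{\omega}\circ T_{r}^{v}$ followed by rewriting $\bm{\omega}$, $1-\bm{\omega}$ and $1-k_{\rho}^{2}\bm{\omega}$ as rational functions of $\bm{v}$ and invoking $1-\cd^{2}={k_{r}'}^{2}\sd^{2}$ (i.e.\ $1-\bm{v}^{2}={k_{r}'}^{2}\sd^{2}$) to collapse the product to Eq.~\eqref{main:eqn:dTdr}. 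The only point where you are slightly more optimistic than the paper is the final sign: the computation genuinely leaves a $\pm\bm{i}$ branch ambiguity, which the paper fixes not by bookkeeping of the square roots but by the physical requirement that the current density on band $B$ be positive when $c_{O,f}^{B}>c_{O,f}^{A}$; otherwise the two arguments coincide.
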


\begin{proof}
	Applying Fick's law to Eq. \eqref{main:eqn:cf} leads to
	\begin{equation}
		\label{main:eqn:jf:pre}
		j_{f}(x) = \mp F n_e D_{\sigma} \parderiv{c_{\sigma,f}}{z}(x,0)
		= \mp F n_e D_{\sigma} [c_{\sigma,f}^{B} - c_{\sigma,f}^{A}]
		\parderiv{\xi}{z}(x,0)
	\end{equation}
	Later, using the Cauchy-Riemann equations
	\cite[Theorem 3.2]{Olver2018mar}
	\begin{equation}
		\label{main:eqn:dxidz}
		\parderiv{\xi}{z} = -\parderiv{\zeta}{x} 
		= -\Im \parderiv{\bm{\rho}}{x} = -\Im \parderiv{\bm{\rho}}{\bm{r}} 
	\end{equation}
	leads to Eq. \eqref{main:eqn:jf}, which ends the main proof.
	The rest of the proof concerns about obtaining
	the complex derivative of $T_{r}^{\rho}$ in Eq. \eqref{main:eqn:dTdr},
	which is detailed in \emph{Supplementary information} \S\ref{si:proof:dTdr}.
\end{proof}

The expression for the current density in Eq. \eqref{main:eqn:jf}
agrees with that in \cite[Eqs. (26) and (19)]{Aoki1988dec}
for semi-infinite geometries.
Both results depend directly on the imaginary part of
the complex derivative $\partial\bm{\rho}(x)/\partial\bm{r}$ and
the product \cite[$c^{*}H$ given by Eqs. (6) and (21)]{Aoki1988dec}%
\footnote{
	Note that \cite[$H$ in Eq. (21)]{Aoki1988dec} has a typo,
	and the $+$ sign, at the middle of the expression,
	should be replaced by a $-$ sign.
}
corresponds to $[c_{\sigma,f}^B - c_{\sigma,f}^{A}]$.

The expression for $\Im\partial\bm{\rho}(x)/\partial\bm{r}$ in Eq. \eqref{main:eqn:dTdr}
also agrees with that in \cite[Eqs. (27) and (17)]{Aoki1988dec}
for semi-infinite geometries,
which can be seen with the help of the identity
\begin{equation}
	\cos(\alpha + \beta) + \cos(\alpha - \beta)
	= (1 + \cos(2\alpha))^{1/2} (1 + \cos(2\beta))^{1/2}
\end{equation}
and because $k_{r} \to 0^{+}$, $k_{r}' \to 1^{-}$, $K(k_{r}) \to \pi/2$,
$\cd(\cdot, k_{r}) \to \cos(\cdot)$, $\nd(\cdot, k_{r}) \to 1$,
when $H \to +\infty$.

\subsubsection{Current per band}


\begin{corolario}
	\label{main:cor:if}
	Under the conditions of Theorem \ref{main:teo:cf},
	the current in the final steady state $i_{f}^{E}$
	flowing through one band $E \in \{A,B\}$ is
	\begin{equation}
		\label{main:eqn:if}
		\pm \frac{
			i_{f}^{A}/L
		}{
			F n_{e} D_{\sigma} [c_{\sigma,f}^{A} - c_{\sigma,f}^{B}]
		}
		= \pm \frac{
			i_{f}^{B}/L
		}{
			F n_{e} D_{\sigma} [c_{\sigma,f}^{B} - c_{\sigma,f}^{A}]
		}
		= 2 \frac{K'(k_{\rho})}{K(k_{\rho})}
	\end{equation}
	where the modulus $k_{\rho}$ and its complement $k_{\rho}'$ are defined in Eq. (\ref{main:eqn:krho}).
\end{corolario}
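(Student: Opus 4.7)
The plan is to integrate the current density from Corollary \ref{main:cor:jf} over the width of a single band and recognize the resulting integral as the difference of boundary values of $\zeta = \Im\,T_{r}^{\rho}$ at the endpoints of the band, which are given explicitly in Lemma \ref{main:lem:T}. The passage from the pointwise flux to a total current via a potential theoretic antiderivative is standard for harmonic problems and is exactly what makes the conformal transformation useful: in the $\bm\rho$ domain the concentration is affine in $\xi$, so the conjugate harmonic $\zeta$ plays the role of a stream function whose boundary values directly measure total flux.

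Concretely, I would first write $i_{f}^{A}/L = 2\int_{0}^{w_{A}} j_{f}(x)\ud{x}$, the factor $2$ coming from the fact that the unit cell contains only the half-band $[0,w_A]$ of band $A$, while the other half lies in the adjacent unit cell and contributes equally by symmetry. Substituting the expression
\begin{equation*}
j_{f}(x) = \mp F n_{e} D_{\sigma}\,[c_{\sigma,f}^{B}-c_{\sigma,f}^{A}]\,\parderiv{\xi}{z}(x,0)
\end{equation*}
from Corollary \ref{main:cor:jf} reduces the task to evaluating $\int_{0}^{w_{A}} \partial_{z}\xi(x,0)\ud{x}$.

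Second, I would invoke the Cauchy–Riemann identity $\partial_{z}\xi = -\partial_{x}\zeta$ (already used in the proof of Corollary \ref{main:cor:jf}), which turns the integrand into a total $x$-derivative:
\begin{equation*}
\int_{0}^{w_{A}} \parderiv{\xi}{z}(x,0)\ud{x} = -\int_{0}^{w_{A}} \parderiv{\zeta}{x}(x,0)\ud{x} = \zeta(0,0)-\zeta(w_{A},0) = \Im\bm{\rho}_{a} - \Im\bm{\rho}_{\alpha}.
\end{equation*}
From Eq. \eqref{main:eqn:rhoAB} these imaginary parts are $K'(k_{\rho})/K(k_{\rho})$ and $0$, so the integral evaluates to $K'(k_{\rho})/K(k_{\rho})$. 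Reassembling with the prefactor and rearranging the sign yields the claimed expression $\pm i_{f}^{A}/L = 2Fn_{e}D_{\sigma}[c_{\sigma,f}^{A}-c_{\sigma,f}^{B}]\,K'(k_{\rho})/K(k_{\rho})$.

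The same argument applied to band $B$, with endpoints $\bm{r}_{\beta}$ and $\bm{r}_{b}$ mapped to $\bm{\rho}_{\beta}=(1,0)$ and $\bm{\rho}_{b}=(1,K'(k_{\rho})/K(k_{\rho}))$, produces the same magnitude but with an opposite sign in the concentration difference, matching the statement of the corollary and confirming charge conservation $i_{f}^{A} = -i_{f}^{B}$. The only real subtlety is bookkeeping: the factor $2$ from the half-band geometry, the $\mp$ from the convention tying the sign of the flux to $\sigma\in\{O,R\}$, and the orientation of the $\zeta$-difference at band $B$ versus band $A$; I do not expect any analytical difficulty, because the ``hard'' integral has already been absorbed into the construction of $T_{r}^{\rho}$ in Lemma \ref{main:lem:T}.
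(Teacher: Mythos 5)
Your proposal is correct and follows essentially the same route as the paper's proof: integrate the flux over the band, convert $\partial_{z}\xi$ to $-\partial_{x}\zeta$ via Cauchy--Riemann so the integral becomes a difference of boundary values of $\Im\,T_{r}^{\rho}$, and read those values off from Eq. \eqref{main:eqn:rhoAB}. The only cosmetic difference is that you insert the factor $2$ for the half-band at the outset, whereas the paper integrates over the full band and then invokes symmetry to reduce to the half-band; the sign bookkeeping in both bands checks out.
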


\begin{proof}
	The current $i_{f}^{E}$ can be obtained by integrating the flux through one band $E \in \{A,B\}$
	\begin{equation}
		i_{f}^{E} 
		= \mp \int_{E} F n_{e} D_{\sigma} \parderiv{c_{\sigma,f}}{z}(x,0) L \ud{x}
		= \mp \int_{E} F n_{e} D_{\sigma} [c_{\sigma,f}^{B} - c_{\sigma,f}^{A}] \parderiv{\xi}{z}(x,0) L \ud{x}
	\end{equation}
	Using the Cauchy-Riemann identities in Eq. \eqref{main:eqn:dxidz}
	the current can be further simplified
	\begin{equation}
		\label{main:eqn:int_dxi7dz_dx}
		\mp \frac{
			i_{f}^{E}/L
		}{
			F n_{e} D_{\sigma} [c_{\sigma,f}^{B} - c_{\sigma,f}^{A}]
		}
		= \int_{E} \parderiv{\xi}{z}(x,0) \ud{x} 
		= -\Im \int_{E} \parderiv{\bm{\rho}}{x}(x,0) \ud{x}
		= -\Im \int_{E} \partial\bm{\rho}(\bm{r})
	\end{equation}
	Due to symmetry this integral can be taken in a half band
	\begin{equation}
		\label{main:eqn:int_drho}
		-\Im \int_{A} \partial\bm{\rho}(\bm{r})
		= -2\,\Im\bm{\rho}(\bm{r})\Big|_{\bm{r}_{a}}^{\bm{r}_{\alpha}},\quad
		-\Im \int_{B} \partial\bm{\rho}(\bm{r})
		= -2\,\Im\bm{\rho}(\bm{r})\Big|_{\bm{r}_{\beta}}^{\bm{r}_{b}}
	\end{equation}
	Since $\bm{\rho} = T_{r}^{\rho}(\bm{r})$ and due to Eq. (\ref{main:eqn:rhoAB}),
	the result in Eq. (\ref{main:eqn:if}) can be obtained.
\end{proof}

Note that Eq. (\ref{main:eqn:if}) agrees with the result
in \cite[Eqs. (28)]{Aoki1988dec}%
\footnote{
	Note that in \cite{Aoki1988dec} the parameter $p=k_{\rho}^{2}$
	is used instead of the modulus $k_{\rho}$.
} for semi-infinite geometries,
where the product \cite[$c^{*}H$ given by Eqs. (6) and (21)]{Aoki1988dec}%
\footnote{
	Note that \cite[$H$ in Eq. (21)]{Aoki1988dec} has a typo,
	and the $+$ sign, at the middle of the expression,
	should be replaced by a $-$ sign.
}
corresponds to $[c_{\sigma,f}^B - c_{\sigma,f}^{A}]$.

\subsubsection{Voltammogram}

The voltammogram in steady state corresponds just to the expression of current,
obtained in Eq. \eqref{main:eqn:if}, as the potential applied to the bands is scanned.
During the scan, all parameters of the expression for the current remain unchanged,
except for the difference of concentrations $c_{\sigma,f}^{A} - c_{\sigma,f}^{B}$,
which varies according to the potential due to Eqs. \eqref{main:eqn:nernst}.

Therefore, the shape of the voltammogram is proportional to
the shape of $c_{\sigma,f}^{A} - c_{\sigma,f}^{B}$ as the potential is scanned.
This difference is analyzed in two cases:
when the IDAE has an external and an internal counter electrode.

\paragraph{Case of external counter electrode}

This case is the simplest to analyze,
but its experimental setup is relatively complex,
since it requires a bipotentiostat connected to the IDAE
and an external counter electrode.

\begin{teorema}
	\label{main:teo:cE-cE':extC}
	Consider the electrodes $A$ and $B$ with an external counter electrode,
	which undergo reversible electrode reations
	and satisfy Eq. \eqref{main:eqn:cfE:total}
	(like in \S\ref{main:def:problema} with Fig. \ref{main:fig:celda:ext_C}).
	If the potential at $E \in \{A, B\}$ is scanned and
	its complementary electrode $E'$ is fixed to an extreme potential,
	such that $c_{\sigma,f}^{E'} = 0$,
	then the difference of final concentrations is given by
	\begin{equation}
		\label{main:eqn:cE-cE':extC}
		[c_{\sigma,f}^{E} - c_{\sigma,f}^{E'}]
		= c_{\sigma,f}^{E}
		= \frac{
			D_{\sigma} \bar{c}_{\sigma,i}^{\whole}
			+ D_{\sigma'} \bar{c}_{\sigma',i}^{\whole}
		}{
			D_{\sigma} \bar{c}_{\sigma,i}^{\whole}
			+ D_{\sigma'} \bar{c}_{\sigma',i}^{\whole}
			\e^{\mp (\eta_{f}^{E} - \eta_{\nul})}
		} \bar{c}_{\sigma,i}^{\whole}
	\end{equation}
where $\sigma'$ is the complementary redox especies of $\sigma \in \{O, R\}$,
$D_{\sigma'}$ and $D_{\sigma}$ are their diffusion coefficients,
$\bar{c}_{\sigma,i}^{\whole}$ and $\bar{c}_{\sigma,'i}^{\whole}$
are the average concentrations of the whole cell in the initial steady state,
$\eta_{f}^{E}$ is the normalized potential applied to $E$,
defined in Eq. \eqref{main:eqn:eta},
and $\eta_{\nul}$ is the normalized null potential in Eq. \eqref{main:eqn:null}.
\end{teorema}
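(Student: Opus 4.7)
The plan is to obtain the statement by direct substitution into the Nernst-type expressions for the surface concentrations that were already derived in Eq. \eqref{main:eqn:nernst}; no new analysis is required, since the theorem is essentially a repackaging of those expressions under the additional hypothesis $c_{\sigma,f}^{E'} = 0$ imposed by the extreme potential applied to the complementary electrode.

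First I would observe that the first equality $[c_{\sigma,f}^{E} - c_{\sigma,f}^{E'}] = c_{\sigma,f}^{E}$ in Eq. \eqref{main:eqn:cE-cE':extC} is immediate from the hypothesis on the complementary electrode and requires no calculation. Second, I would invoke Eq. \eqref{main:eqn:nernst} to read off $c_{\sigma,f}^{E}$ as a function of the applied normalized potential $\eta_{f}^{E}$, the null potential $\eta_{\nul}$, the diffusion coefficients $D_{O}, D_{R}$, and the initial averages $\bar c_{O,i}^{\whole}, \bar c_{R,i}^{\whole}$. The two rows of Eq. \eqref{main:eqn:nernst}, one for $\sigma = O$ and one for $\sigma = R$, differ only in the interchange $O \leftrightarrow R$ and in the sign of the exponent attached to $(\eta_{f}^{E} - \eta_{\nul})$; introducing the complementary-species symbol $\sigma'$ together with the $\mp$ convention (upper sign for $\sigma = O$, lower sign for $\sigma = R$, consistent with the footnote preceding Eq. \eqref{main:eqn:pde}) unifies the two rows into the single compact expression displayed in the theorem.

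A short consistency check worth including is that the hypothesis $c_{\sigma,f}^{E'} = 0$ is indeed attainable by an ``extreme potential'' imposed on $E'$: substituting into Eq. \eqref{main:eqn:nernst}, one sees that $c_{O,f}^{E'} \to 0$ as $\eta_{f}^{E'} \to -\infty$ and $c_{R,f}^{E'} \to 0$ as $\eta_{f}^{E'} \to +\infty$, so each redox species admits a limiting applied potential for which its surface concentration at $E'$ vanishes. Since the theorem takes this vanishing as a hypothesis rather than deriving it, no further argument is needed here.

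There is no genuine obstacle in the proof; the only risk is bookkeeping, namely keeping the $\mp$ sign correctly paired with $\sigma \in \{O,R\}$ throughout, and verifying that the assumption $c_{\sigma,f}^{E'} = 0$ is compatible with the weighted-sum constraint Eq. \eqref{main:eqn:cfE:total} (which it is, since that constraint does not preclude one of the two concentrations from being zero at a single electrode).
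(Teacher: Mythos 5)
Your proposal is correct and follows essentially the same route as the paper, whose entire proof is the single observation that Eq.~\eqref{main:eqn:cE-cE':extC} follows directly from Eq.~\eqref{main:eqn:nernst} once $c_{\sigma,f}^{E'} = 0$ is imposed. Your additional remarks (unifying the two rows of Eq.~\eqref{main:eqn:nernst} via the $\sigma'$/$\mp$ convention, and checking that the vanishing surface concentration is attainable in the extreme-potential limit) are harmless elaborations of the same one-line argument.
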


\begin{proof}
	Eq. \eqref{main:eqn:cE-cE':extC} is obtained directly from
	Eq. (\ref{main:eqn:nernst}), when $c_{\sigma,f}^{E'} = 0$.
\end{proof}

\paragraph{Case of internal counter electrode}

In this case the experimental setup is simpler,
since it requires a conventional potentiostat connected to both arrays of the IDAE.
Moreover, when the internal counter electrode serves as reference,
no potentiostat would be necessary and a voltage source with a sensitive ammeter would suffice as instrumentation \cite[end of p. 33]{Rahimi2009aug}.

However, the analysis is not direct, since
the potential at the counter electrode bands
is controlled automatically by the potentiostat,
and therefore it is unknown before performing the experiment.
Moreover, when the internal counter electrode acts as reference,
even the potential at the working electrode is unknown and only the voltage
(difference of potential) between working and counter electrodes would be known.

One solution is to consider the case of bands of equal width $2w_{A} = 2w_{B}$.
Here the average in Eq. \eqref{main:eqn:cf:average} is reduced to
$\bar{c}_{\sigma,i}^{\whole} = (c_{\sigma,f}^{A} + c_{\sigma,f}^{B})/2$
\cite[Eq. (15)]{Morf2006may}, due to symmetry, or equivalently
\begin{equation}
	\label{main:eqn:cfE:average}
	[c_{\sigma,f}^{A} - \bar{c}_{\sigma,i}^{\whole}] = -[c_{\sigma,f}^{B} - \bar{c}_{\sigma,i}^{\whole}]
\end{equation}
This allows an \emph{a priori} estimation of
the concentration at the counter electrode bands.
However, this also restricts the magnitude of the concentrations on the bands,
since they must be non-negative, and therefore,
they cannot decrease indefinitely below the average $\bar{c}_{\sigma,i}^{\whole}$
\begin{equation}
	- D_{\sigma} \bar{c}_{\sigma,i}^{\whole}
	\leq D_{\sigma} [c_{\sigma,f}^{A} - \bar{c}_{\sigma,i}^{\whole}]
	= - D_{\sigma} [c_{\sigma,f}^{B} - \bar{c}_{\sigma,i}^{\whole}]
	\leq D_{\sigma} \bar{c}_{\sigma,i}^{\whole}
\end{equation}
Due to Eq. \eqref{main:eqn:cf:total}, a similar situation occurs with the complementary species,
and has been graphically illustrated in \cite[Fig. 2]{GuajardoYevenes2013sep}.

\begin{lema}
	\label{main:lem:cE-cbar=cE'-cbar}
	Consider the electrodes $A$ and $B$,
	which undergo reversible electrode reactions
	and satisfy Eqs. \eqref{main:eqn:cfE:total} and \eqref{main:eqn:cfE:average}
	(like in \S\ref{main:def:problema} with Fig. \ref{main:fig:celda:int_C}
	and bands of equal width).
	If the electrode $E \in \{A,B\}$ and its complementary electrode $E'$
	perform as working and counter electrodes respectively, 
	then the difference of final concentrations of species $\sigma \in \{O,R\}$ is given by
	\begin{equation}
		\label{main:eqn:cE':intC}
		[c_{\sigma,f}^{E} - c_{\sigma,f}^{E'}]
		=  2[c_{\sigma,f}^{E} - \bar{c}_{\sigma,i}^{\whole}]
		= -2[c_{\sigma,f}^{E'} - \bar{c}_{\sigma,i}^{\whole}]
	\end{equation}
	
	Nevertheless, this difference cannot reach its ideal maximum,
	obtainable from Eq. \eqref{main:eqn:cE-cE':intC}, being limited from above and below by
	\begin{equation}
		\label{main:eqn:cE-cE':cotas}
		-2 D_{\lambda} \bar{c}_{\lambda,i}^{\whole}
		\leq D_{\sigma} [c_{\sigma,f}^{E} - c_{\sigma,f}^{E'}] 
		\leq 2 D_{\lambda} \bar{c}_{\lambda,i}^{\whole}
	\end{equation}
	where the limiting species $\lambda \in \{O,R\}$ is such that $D_{\lambda} \bar{c}_{\lambda,i}^{\whole} = \min(D_{O} \bar{c}_{O,i}^{\whole}, D_{R} \bar{c}_{R,i}^{\whole})$.
	This last expression determines the limiting current of the cell.
\end{lema}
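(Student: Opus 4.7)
The plan is to establish the three assertions in order: first the identity \eqref{main:eqn:cE':intC}, then the two-sided bound \eqref{main:eqn:cE-cE':cotas}, and finally the link to the limiting current. The first piece is pure algebra. Starting from the symmetry relation \eqref{main:eqn:cfE:average}, I would write
\begin{equation*}
c_{\sigma,f}^{E} - c_{\sigma,f}^{E'}
= [c_{\sigma,f}^{E} - \bar{c}_{\sigma,i}^{\whole}] - [c_{\sigma,f}^{E'} - \bar{c}_{\sigma,i}^{\whole}]
= 2 [c_{\sigma,f}^{E} - \bar{c}_{\sigma,i}^{\whole}]
= -2 [c_{\sigma,f}^{E'} - \bar{c}_{\sigma,i}^{\whole}],
\end{equation*}
which is valid for either choice $E\in\{A,B\}$ with $E'$ its complement.

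For the bound, the two ingredients are the weighted-sum constraint \eqref{main:eqn:cfE:total} applied at each of the two electrodes, and the fact that any physical concentration is non-negative: $c_{\sigma,f}^{E}, c_{\sigma',f}^{E} \geq 0$, where $\sigma'$ denotes the complementary redox species. Rewriting \eqref{main:eqn:cfE:total} as
\begin{equation*}
D_{\sigma} [c_{\sigma,f}^{E} - \bar{c}_{\sigma,i}^{\whole}]
= -D_{\sigma'} [c_{\sigma',f}^{E} - \bar{c}_{\sigma',i}^{\whole}],
\end{equation*}
the two non-negativity constraints at electrode $E$ immediately give
\begin{equation*}
-D_{\sigma} \bar{c}_{\sigma,i}^{\whole}
\leq D_{\sigma}[c_{\sigma,f}^{E} - \bar{c}_{\sigma,i}^{\whole}]
\leq D_{\sigma'} \bar{c}_{\sigma',i}^{\whole}.
\end{equation*}
I would then repeat the argument at the counter band $E'$; because of the mirror identity \eqref{main:eqn:cE':intC} just proved, this yields a second pair of inequalities with the roles of $D_{\sigma}\bar{c}_{\sigma,i}^{\whole}$ and $D_{\sigma'}\bar{c}_{\sigma',i}^{\whole}$ interchanged.

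Intersecting the two pairs leaves the symmetric bound
\begin{equation*}
-\min(D_{O}\bar{c}_{O,i}^{\whole}, D_{R}\bar{c}_{R,i}^{\whole})
\leq D_{\sigma}[c_{\sigma,f}^{E} - \bar{c}_{\sigma,i}^{\whole}]
\leq \min(D_{O}\bar{c}_{O,i}^{\whole}, D_{R}\bar{c}_{R,i}^{\whole}),
\end{equation*}
which after multiplication by $2$ and the substitution $2[c_{\sigma,f}^{E}-\bar{c}_{\sigma,i}^{\whole}] = c_{\sigma,f}^{E}-c_{\sigma,f}^{E'}$ is precisely \eqref{main:eqn:cE-cE':cotas} with $D_{\lambda}\bar{c}_{\lambda,i}^{\whole}=\min(D_{O}\bar{c}_{O,i}^{\whole}, D_{R}\bar{c}_{R,i}^{\whole})$. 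The only subtle step is the bookkeeping: one must check that the two electrode-wise bounds genuinely tighten each other into a symmetric interval whose endpoints depend only on the \emph{smaller} of the two $D_{\sigma}\bar{c}_{\sigma,i}^{\whole}$ products, which is the main obstacle — but no calculation, just careful case-splitting on which of $D_O\bar{c}_{O,i}^{\whole}, D_R\bar{c}_{R,i}^{\whole}$ is smaller.

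Finally, the limiting-current claim follows by inserting the extreme value of $c_{\sigma,f}^{E}-c_{\sigma,f}^{E'}$ permitted by \eqref{main:eqn:cE-cE':cotas} into Corollary~\ref{main:cor:if}: since $i_{f}^{E}/L$ is proportional to $D_{\sigma}[c_{\sigma,f}^{B}-c_{\sigma,f}^{A}]$ with the geometric prefactor $2K'(k_{\rho})/K(k_{\rho})$, the saturation of the concentration difference at $\pm 2 D_{\lambda}\bar{c}_{\lambda,i}^{\whole}/D_{\sigma}$ caps $|i_{f}^{E}|$ at $2 F n_{e} L D_{\lambda} \bar{c}_{\lambda,i}^{\whole} K'(k_{\rho})/K(k_{\rho})$, which is the limiting current and completes the proof.
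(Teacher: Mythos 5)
Your proposal is correct and follows essentially the same route as the paper's own argument in the supplementary information: non-negativity of both species' concentrations at each electrode, combined with the weighted-sum constraint \eqref{main:eqn:cfE:total} and the mirror identity \eqref{main:eqn:cfE:average}, intersected over the two bands to produce the symmetric bound governed by $D_{\lambda}\bar{c}_{\lambda,i}^{\whole} = \min(D_{O}\bar{c}_{O,i}^{\whole}, D_{R}\bar{c}_{R,i}^{\whole})$. The only difference is that the paper additionally spends two subsections deriving \eqref{main:eqn:cfE:average} from unit-cell and whole-cell averaging arguments (equal band widths, large number of bands), whereas you take it as given — which the lemma's hypotheses permit.
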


\noindent See \emph{Supplementary information} \S\ref{si:proof:cE-cbar} for a detailed proof.

\begin{observacion}
	Note that, in the case of external counter electrode,
	the difference of concentrations is bounded by
	\begin{equation}
		0 \leq D_{\sigma} [c_{\sigma,f}^{E} - c_{\sigma,f}^{E'}] \leq
		D_{\sigma} \bar{c}_{\sigma,f}^{\whole} + D_{\sigma'} \bar{c}_{\sigma',f}^{\whole}
	\end{equation}
	due to Eq. \eqref{main:eqn:cE-cE':extC} when $\eta_{f}^{E} \to \pm \infty$.
	This determines the limiting current in the case of external counter electrode.
\end{observacion}

The result in Lemma \ref{main:lem:cE-cbar=cE'-cbar} allows us to estimate
the unknown concentration on the counter electrode,
thus facilitating an analytical expresion for the voltammogram

\begin{teorema}
	Under the assumptions of Lemma \ref{main:lem:cE-cbar=cE'-cbar},
	the difference of concentrations (voltammogram) in terms of the normalized potential is given by
	\begin{subequations}
		\label{main:eqn:voltammogram:intC}
		\begin{equation}
			\label{main:eqn:cE-cE':intC}
			[c_{\sigma,f}^{E} - c_{\sigma,f}^{E'}]
			= \frac{
				D_{\sigma'} \bar{c}_{\sigma',i}^{\whole}
				- D_{\sigma'} \bar{c}_{\sigma',i}^{\whole}
				\e^{\mp (\eta_{f}^{E} - \eta_{\nul})}
			}{
				D_{\sigma} \bar{c}_{\sigma,i}^{\whole}
				+ D_{\sigma'} \bar{c}_{\sigma',i}^{\whole}
				\e^{\mp (\eta_{f}^{E} - \eta_{\nul})}
			} \cdot 2\bar{c}_{\sigma,i}^{\whole}
		\end{equation}
		or in terms of the normalized voltage 
		\begin{equation}
			\label{main:eqn:etaE-etaE':intC}
			\begin{split}
				\pm (\eta_{f}^{E} - \eta_{f}^{E'})
				&= 2 \arctanh\left(
					\frac{
						c_{\sigma,f}^{E} - c_{\sigma,f}^{E'}
					}{
						2\bar{c}_{\sigma,i}^{\whole}
					}
				\right)
				\\
				& + 2 \arctanh\left(
					\frac{
						D_{\sigma} \bar{c}_{\sigma,i}^{\whole}
					}{
						D_{\sigma'} \bar{c}_{\sigma',i}^{\whole}
					}
					\cdot \frac{
						c_{\sigma,f}^{E} - c_{\sigma,f}^{E'}
					}{
						2\bar{c}_{\sigma,i}^{\whole}
					}
				\right)
			\end{split}
		\end{equation}
	\end{subequations}
	where $\sigma'$ is the complementary redox especies of $\sigma$,
	$D_{\sigma'}$ and $D_{\sigma}$ are their diffusion coeficients,
	$\bar{c}_{\sigma,i}^{\whole}$ and $\bar{c}_{\sigma,'i}^{\whole}$ are the average concentrations of the whole cell in the initial steady state,
	$\eta_{f}^{E}$ and $\eta_{f}^{E'}$ are the normalized potentials
	applied to the electrodes $E$ and $E'$, given in Eq. (\ref{main:eqn:eta}),
	and $\eta_{\nul}$ is the normalized null potential in Eq. \eqref{main:eqn:null}.
\end{teorema}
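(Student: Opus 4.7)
The plan is to chain together Lemma~\ref{main:lem:cE-cbar=cE'-cbar} with the decoupled Nernst formula~\eqref{main:eqn:nernst} for the first equation, and with the original Nernst definition~\eqref{main:eqn:eta} plus the conservation relation~\eqref{main:eqn:cfE:total} for the second. These two equations are tightly coupled: one is the explicit concentration-versus-potential expression that results when the counter electrode is effectively pinned at the bulk value, and the other is its functional inversion when the average of the two electrode concentrations is pinned to $\bar{c}_{\sigma,i}^{\whole}$.

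For Eq.~\eqref{main:eqn:cE-cE':intC}, first I would apply Lemma~\ref{main:lem:cE-cbar=cE'-cbar} to rewrite $[c_{\sigma,f}^{E} - c_{\sigma,f}^{E'}] = 2[c_{\sigma,f}^{E} - \bar{c}_{\sigma,i}^{\whole}]$, then substitute the explicit formula for $c_{\sigma,f}^{E}$ from Eqs.~\eqref{main:eqn:nernst} and subtract $\bar{c}_{\sigma,i}^{\whole}$. Placing both terms over the common denominator $D_{\sigma} \bar{c}_{\sigma,i}^{\whole} + D_{\sigma'} \bar{c}_{\sigma',i}^{\whole} \e^{\mp (\eta_{f}^{E} - \eta_{\nul})}$, the $D_{\sigma} \bar{c}_{\sigma,i}^{\whole}$ contribution in the numerator cancels and what survives is exactly $D_{\sigma'} \bar{c}_{\sigma',i}^{\whole}(1 - \e^{\mp (\eta_{f}^{E} - \eta_{\nul})})$, multiplied by $2\bar{c}_{\sigma,i}^{\whole}$, which is the claimed expression.

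For the voltage form, I would introduce the shorthand $\Delta := c_{\sigma,f}^{E} - c_{\sigma,f}^{E'}$. Lemma~\ref{main:lem:cE-cbar=cE'-cbar} immediately gives $c_{\sigma,f}^{E} = \bar{c}_{\sigma,i}^{\whole} + \Delta/2$ and $c_{\sigma,f}^{E'} = \bar{c}_{\sigma,i}^{\whole} - \Delta/2$, and then applying Eq.~\eqref{main:eqn:cfE:total} at each electrode yields the complementary concentrations $c_{\sigma',f}^{E} = \bar{c}_{\sigma',i}^{\whole} - (D_{\sigma}/D_{\sigma'})\Delta/2$ and $c_{\sigma',f}^{E'} = \bar{c}_{\sigma',i}^{\whole} + (D_{\sigma}/D_{\sigma'})\Delta/2$, with opposite signs forced by mass conservation. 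Next, from the Nernst definition~\eqref{main:eqn:eta},
\[
\eta_{f}^{E} - \eta_{f}^{E'} = \ln\frac{c_{O,f}^{E}}{c_{O,f}^{E'}} - \ln\frac{c_{R,f}^{E}}{c_{R,f}^{E'}},
\]
and each logarithm assumes the form $\ln\frac{1+x}{1-x}$: for the species labeled $\sigma$, $x = \Delta/(2\bar{c}_{\sigma,i}^{\whole})$, and for the complementary species, $x = (D_{\sigma}/D_{\sigma'})\Delta/(2\bar{c}_{\sigma',i}^{\whole})$, which can be rewritten as $(D_{\sigma}\bar{c}_{\sigma,i}^{\whole})/(D_{\sigma'}\bar{c}_{\sigma',i}^{\whole}) \cdot \Delta/(2\bar{c}_{\sigma,i}^{\whole})$ after pulling out an extra factor of $\bar{c}_{\sigma,i}^{\whole}/\bar{c}_{\sigma,i}^{\whole}$. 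Applying $\ln\frac{1+x}{1-x} = 2\arctanh(x)$ twice produces the two arctanh terms in Eq.~\eqref{main:eqn:etaE-etaE':intC}; the overall $\pm$ on the left-hand side arises because swapping $\sigma = O$ for $\sigma = R$ exchanges the roles of the two logarithms and flips the sign of the combined expression (recall $\Delta_R = -(D_O/D_R)\Delta_O$).

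The main obstacle is purely bookkeeping: keeping the $\pm/\mp$ conventions consistent with the paper's rule (upper sign for $\sigma = O$), tracking how the conservation law~\eqref{main:eqn:cfE:total} forces opposite signs on the complementary species, and recognizing that the factor $D_{\sigma}/D_{\sigma'}$ naturally rescales the second arctanh argument while the first is bare. Beyond Lemma~\ref{main:lem:cE-cbar=cE'-cbar}, the Nernst relations already established, and the elementary identity $\ln\frac{1+x}{1-x} = 2\arctanh(x)$, no further analytic machinery is required; the proof reduces to careful algebra.
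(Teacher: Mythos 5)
Your proof of Eq.~\eqref{main:eqn:cE-cE':intC} is exactly the paper's: substitute the decoupled Nernst expression~\eqref{main:eqn:nernst}, subtract $\bar{c}_{\sigma,i}^{\whole}$, and double via Lemma~\ref{main:lem:cE-cbar=cE'-cbar}. For Eq.~\eqref{main:eqn:etaE-etaE':intC}, however, you take a genuinely different and somewhat more direct route. The paper (Supplementary \S\ref{si:proof:etaE-etaE'}) starts from the already-derived voltammogram~\eqref{main:eqn:cE-cE':intC}, writes the analogous relation at the complementary electrode, \emph{inverts} both to obtain $\e^{\mp(\eta_f^{E}-\eta_{\nul})}$ and $\e^{\mp(\eta_f^{E'}-\eta_{\nul})}$ as Möbius-type expressions in the normalized concentration difference, multiplies them to form $\e^{\pm(\eta_f^{E}-\eta_f^{E'})}$, regroups the four linear factors, and only then applies $\arctanh z = \tfrac12\ln\frac{1+z}{1-z}$. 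You instead bypass the inversion entirely: you parametrize all four surface concentrations ($\sigma$ and $\sigma'$ at $E$ and $E'$) in terms of $\Delta$ using the lemma and the conservation law~\eqref{main:eqn:cfE:total}, then read off $\eta_f^{E}-\eta_f^{E'}$ directly from the definition~\eqref{main:eqn:eta} as a difference of two logarithms of concentration ratios, each already in the form $\ln\frac{1+x}{1-x}$. I checked the sign bookkeeping: for $\sigma=O$ the $R$-logarithm enters with an overall minus and its argument is $\frac{1-x}{1+x}$, so the two minus signs cancel and both $\arctanh$ terms come out positive, matching the statement; the $\sigma=R$ case follows by the symmetry you note via $\Delta_R = -(D_O/D_R)\Delta_O$. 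Both arguments are elementary algebra of the same length; yours has the advantage of not needing to invert the rational expression for the concentration, while the paper's has the advantage of deriving the voltage form as a strict consequence of the potential form~\eqref{main:eqn:cE-cE':intC} alone. Your proposal is correct.
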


\begin{proof}
	Eq. \eqref{main:eqn:cE-cE':intC} is obtained when substracting
	Eq. (\ref{main:eqn:nernst}) with $\bar{c}_{\sigma,i}^{\whole}$,
	and later by applying $[c_{\sigma,f}^{E} - c_{\sigma,f}^{E'}] =
	2 [c_{\sigma,f}^{E} - \bar{c}_{\sigma,f}^{\whole}]$.
	The proof for Eq. \eqref{main:eqn:etaE-etaE':intC} can be found in
	\emph{Supplementary information} \S\ref{si:proof:etaE-etaE'}.
\end{proof}

\subsection{Approximations for shallow and tall cells}

For calculating the steady-state current through the cell,
one must evaluate the ratio $K'(k_{\rho})/K(k_{\rho})$ in Eq. \eqref{main:eqn:if},
which depends on several elliptic functions as seen in Eqs. 
\eqref{main:eqn:kr-krho} and \eqref{main:eqn:vAB}.
Currently, commercial and \emph{free and open source software} (FOSS) are available to aid in such calculations:
\textsf{Mathematica}, \textsf{Sage} and \textsf{SciPy}%
\footnote{
	Jacobian elliptic functions, the nome function and their inverses
	are available through the library \textsf{mpmath}
}
to name a few examples \cite[\S\dlmf{22.}{22}]{dlmf}.
However, standard scientific calculators and standard office software
(with \textsf{MS Office} and \textsf{LibreOffice} as common examples)
are not able to compute such special functions,
and therefore, approximations using trigonometric/hyperbolic functions are needed.

One convenient way to find such approximations is by using
the \emph{nome function} $q = Q(k)$ \cite[\S VI.3 Eq. (16)]{Nehari1952}, \citex{Eqs. (\dlmf[E]{19.2.}{9}) and (\dlmf[E]{22.2.}{1})}{dlmf}
\begin{equation}
	\label{elipticas:eqn:nomo}
	\frac{\ln Q(k)}{-\pi} = \frac{-\pi}{\ln Q(k')} = \frac{K'(k)}{K(k)}
\end{equation}
as a way to compute the ratio%
\footnote{
	this ratio is closely related to the \emph{lattice parameter}
	$\tau = \bm{i} K'(k)/K(k)$ and the \emph{nome}
	$q = \exp(\bm{i} \pi \tau)$
	\citex{\S\dlmf{20.}{1}, \S\dlmf{22.}{1}, and Eqs. (\dlmf[E]{22.2.}{1}) and (\dlmf[E]{22.2.}{12})}{dlmf}.
} $K'(k)/K(k)$.
This is because the Taylor expansion of the nome function $Q(k)$ is known
and converges relatively fast
\cite[below Eq. (12)]{Kneser1927} \citex{Eq. (\dlmf[E]{19.5.}{5})}{dlmf}

\begin{equation}
	\label{main:eqn:nomo:taylor}
	Q(k) =
	\frac{k^{2}}{16} + 8\left( \frac{k^{2}}{16} \right)^{2}
	+  84 \left( \frac{k^{2}}{16} \right)^{3}
	+ 992 \left( \frac{k^{2}}{16} \right)^{4} + O(k^{10})
\end{equation}
Therefore, for a sufficiently small modulus $k$ or $k'$,
the ratio $K'(k)/K(k)$ can be approximated with enough accuracy by using only the first term of the previous series
\begin{subequations}
	\label{main:eqn:K'K}
	\begin{align}
		\frac{K'(k)}{K(k)} \bigg|_{k \approx 0}
		&=  -\frac{1}{\pi} \ln Q(k) \bigg|_{k \approx 0}
		\approx -\frac{1}{\pi} \ln\!\left( \frac{k^{2}}{16} \right)
		\label{main:eqn:K'K:k0}
		\\
		\frac{K'(k)}{K(k)} \bigg|_{k' \approx 0}
		&=  -\pi [\ln Q(k')]^{-1} \bigg|_{k' \approx 0}
		\approx -\pi \left[ \ln\!\left( \frac{{k'}^{2}}{16} \right) \right]^{-1}
		\label{main:eqn:K'K:k'0}
	\end{align}
\end{subequations}

Also the following alternative representations of the moduli $k_{\rho}$ and $k_{\rho}'$ are useful for finding the desired approximations

\begin{lema}
	\label{main:lem:krho:alt}
	The modulus $k_{\rho}$ and the complementary modulus $k_{\rho}'$ have alternative representations to those given in Eq. (\ref{main:eqn:krho}).
	For $k_{\rho}$ the proposed alternative representation depends on the gap between consecutive bands $g = W - 2w_{E}$,
	when the width of both electrode bands is equal $2w_{E} = 2w_{A} = 2w_{B}$
	\begin{equation}
		\label{main:eqn:krho:alt}
		k_{\rho}^{2}
		=
		\frac{
			\displaystyle
			4\sn\!\left( K(k_{r}) \frac{g}{W}, k_{r} \right)
		}{
			\displaystyle
			\left[1 + \sn\!\left( K(k_{r}) \frac{g}{W}, k_{r} \right) \right]^{2}
		}
	\end{equation}

	For $k_{\rho}'$ the proposed alternative representation is given directly by
	\begin{equation}
		\label{main:eqn:krho':alt}
		{k_{\rho}'}^{2}
		=
		{k_{r}'}^{4}\,
		\frac{
			\displaystyle
			\sd\!\left( K(k_{r}) \frac{w_{A}}{W}, k_{r} \right)^{2}
		}{
			\displaystyle
			\cn\!\left( K(k_{r}) \frac{w_{A}}{W}, k_{r} \right)^{2}
		}
		\frac{
			\displaystyle
			\sd\!\left( K(k_{r}) \frac{w_{B}}{W}, k_{r} \right)^{2}
		}{
			\displaystyle
			\cn\!\left( K(k_{r}) \frac{w_{B}}{W}, k_{r} \right)^{2}
		}
	\end{equation}
\end{lema}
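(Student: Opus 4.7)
The plan is to derive both representations by substituting the specific values of $\bm{v}_{\alpha}$ and $\bm{v}_{\beta}$ from Eq. \eqref{main:eqn:vAB} directly into the definitions of $k_{\rho}$ and $k_{\rho}'$ in Eq. \eqref{main:eqn:krho}, and then simplifying the resulting expressions with a small number of standard Jacobi elliptic identities.

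For Eq. \eqref{main:eqn:krho:alt}, I would impose $2w_{A}=2w_{B}=2w_{E}$ and abbreviate $u := K(k_{r})\cdot 2w_{E}/W$. By Eq. \eqref{main:eqn:vAB} this yields $\bm{v}_{\alpha} = -\cd(u,k_{r})$ and $\bm{v}_{\beta} = \cd(u,k_{r})$, so that Eq. \eqref{main:eqn:krho} collapses to $k_{\rho}^{2} = 4\cd(u,k_{r})/[1+\cd(u,k_{r})]^{2}$. The remaining step is to invoke the quarter-period translation identity $\sn(K(k_{r})-u,k_{r}) = \cd(u,k_{r})$, and to note that $K(k_{r})-u = K(k_{r})(W-2w_{E})/W = K(k_{r})\,g/W$, which immediately rewrites the expression in the stated form.

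For Eq. \eqref{main:eqn:krho':alt}, I would first factor ${k_{\rho}'}^{2}$ as $[(1+\bm{v}_{\alpha})/(1-\bm{v}_{\alpha})]\cdot[(1-\bm{v}_{\beta})/(1+\bm{v}_{\beta})]$ and set $u_{E} := K(k_{r})\,w_{E}/W$, so that after substitution the expression becomes the product over $E\in\{A,B\}$ of factors of the form $[1-\cd(2u_{E},k_{r})]/[1+\cd(2u_{E},k_{r})]$. The core step is then the single-variable identity $[1-\cd(2u,k_{r})]/[1+\cd(2u,k_{r})] = {k_{r}'}^{2}\,\sd^{2}(u,k_{r})/\cn^{2}(u,k_{r})$. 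To establish this, I would expand $\cn(2u,k_{r})$ and $\dn(2u,k_{r})$ with their Jacobi double-angle formulas over the common denominator $1-k_{r}^{2}\sn^{4}(u,k_{r})$. The numerator of $\dn(2u)+\cn(2u)$ then collapses to $2\cn^{2}(u)\dn^{2}(u)$ via the polynomial factorization $1-(1+k_{r}^{2})x+k_{r}^{2}x^{2} = (1-x)(1-k_{r}^{2}x)$ applied with $x = \sn^{2}(u,k_{r})$, while the numerator of $\dn(2u)-\cn(2u)$ reduces to $2{k_{r}'}^{2}\sn^{2}(u,k_{r})$. Taking the ratio proves the single-variable identity, and multiplying the resulting factors for $E=A$ and $E=B$ gives Eq. \eqref{main:eqn:krho':alt}.

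The main obstacle is selecting the correct Jacobi identities: the quarter-period shift $\sn(K-u)=\cd(u)$ for the first representation, and the Jacobi double-angle formulas combined with the factorization $(1-x)(1-k_{r}^{2}x)$ for the second. Once these identities are in hand, the remaining manipulations are routine algebra.
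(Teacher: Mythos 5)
Your proposal is correct and follows essentially the same route as the paper: substitute the values of $\bm{v}_{\alpha}$ and $\bm{v}_{\beta}$ from Eq.~\eqref{main:eqn:vAB} into Eq.~\eqref{main:eqn:krho}, use the quarter-period shift $\cd(\bm{u},k)=\sn(K(k)-\bm{u},k)$ for the first representation, and the identity $[1-\cd(2\bm{u},k)]/[1+\cd(2\bm{u},k)]={k'}^{2}\sd(\bm{u},k)^{2}/\cn(\bm{u},k)^{2}$ for the second. The only difference is that you derive this last identity from the double-argument formulas (correctly), whereas the paper simply cites it from Carlson.
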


\begin{proof}
	Eq. (\ref{main:eqn:krho:alt}) is obtained from Eqs. (\ref{main:eqn:krho}) and (\ref{main:eqn:vAB}) when $2w_{A} = 2w_{B} = 2w_{E}$,
	later substituting $\cd()$ by $\sn()$
	\begin{equation}
		\cd(\bm{u},k) = \sn(\bm{u} - K(k) + 2K(k), k) = \sn(K(k) - \bm{u})
	\end{equation}
	(which is obtained from translation by quarter/half period
	\citex{Table \dlmf[T]{22.4.}{3}}{dlmf} and
	$\sn(-\bm{z},k) = -\sn(\bm{z},k)$ \citex{Table \dlmf[T]{22.6.}{1}}{dlmf})
	and finally by substituting $2w_{E} = W - g$.
	
	Similarly, Eq. (\ref{main:eqn:krho':alt}) is obtained from Eqs. (\ref{main:eqn:krho}) and (\ref{main:eqn:vAB}),
	and later using the identity \cite[Eqs. (1.10) and (4.1)]{Carlson2004nov}
	\begin{equation}
		\frac{1 - \cd(2\bm{u},k)}{1 + \cd(2\bm{u},k)}
		= {k'}^{2} \frac{\sd(\bm{u},k)^{2}}{\cn(\bm{u},k)^{2}}
	\end{equation}
\end{proof}

The last step remaining is to find trigonometric/hyperbolic approximations
of the moduli $k_{\rho}$ and $k_{\rho}'$,
so they can be plugged into Eqs. (\ref{main:eqn:K'K}).
This is shown below for the case of tall electrochemical cells
with small and large electrode bands.

\subsubsection{Case of tall cells}

In this case, the Jacobian elliptic functions $\sn()$ and $\sd()$ in Lemma \ref{main:lem:krho:alt}
can be approximated by their trigonometric counterpart $\sin()$.
In the same manner, $\cn()$ can be approximated by $\cos()$.
See \citex{Eqs. (\dlmf[E]{22.10.}{4})--(\dlmf[E]{22.10.}{6})}{dlmf} or \cite[Eqs. (10.1)--(10.3)]{Fenton1982jul}.

\begin{teorema}
	In case of tall electrochemical cells ($H$ is large),
	the modulus $k_{\rho}$ and the complementary modulus $k_{\rho}'$ are given by
	\begin{subequations}
		\label{main:eqn:moduli:Hinf}
		\begin{align}
			\bigg. k_{\rho}^{2} \bigg|_{
				\scriptsize \shortstack{
					$H \to +\infty$ \\
					$g \approx 0$
				}
			}
			&= \frac{
				4\sin(\pi g/2W)
			}{
				[1 + \sin(\pi g/2W)]^{2}
			}
			\approx 4 \left( \frac{\pi}{2}\frac{g}{W} \right)
			\label{main:eqn:krho:Hinf}
			\\
			\bigg. {k_{\rho}'}^{2} \bigg|_{
				\scriptsize \shortstack{
					$H \to +\infty$ \\
					$2w_{A} \approx 0$ \\
					$2w_{B} \approx 0$
				}
			}
			&= \tan\! \left( \frac{\pi}{2} \frac{w_{A}}{W} \right)^{2}
			\tan\! \left( \frac{\pi}{2} \frac{w_{B}}{W} \right)^{2}
			\approx \left( \frac{\pi}{2} \frac{w_{A}}{W} \right)^{2}
			\left( \frac{\pi}{2} \frac{w_{B}}{W} \right)^{2}
			\label{main:eqn:krho':Hinf}
		\end{align}
	\end{subequations}
	Therefore, the ratio $K'(k_{\rho})/K(k_{\rho})$
	for large and small electrodes is given respectively by
	\begin{subequations}
		\label{main:eqn:K'K:Hinf}
		\begin{align}
			\left. \frac{K'(k_{\rho})}{K(k_{\rho})} \right|_{
				\scriptsize \shortstack{
					$H \to +\infty$ \\
					$g \approx 0$
				}
			}
			&\approx -\frac{1}{\pi} \ln\left( \frac{\pi}{8} \frac{g}{W} \right)
			\label{main:eqn:K'K:Hinf:wEinf}
			\\
			\left. \frac{K'(k_{\rho})}{K(k_{\rho})} \right|_{
				\scriptsize \shortstack{
					$H \to +\infty$ \\
					$2w_{A} \approx 0$ \\
					$2w_{B} \approx 0$
				}
			}
			&\approx -\pi \left[
				2\ln\left( \frac{\pi}{4} \frac{w_{A}}{W} \right)
				+ 2\ln\left( \frac{\pi}{4} \frac{w_{B}}{W} \right)
			\right]^{-1}
			\label{main:eqn:K'K:Hinf:wE0}
		\end{align}
	\end{subequations}
	Here $g = W - 2w_{E}$ corresponds to the gap between consecutive bands,
	when they have equal width $2w_{E} = 2w_{A} = 2w_{B}$.
\end{teorema}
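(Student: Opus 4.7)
The plan is to start from the alternative representations in Lemma \ref{main:lem:krho:alt} and push through the double limit $H\to+\infty$ together with either $g\to 0$ or $2w_{A}, 2w_{B}\to 0$. Since $k_{r}=Q^{-1}(\e^{-2\pi H/W})$, the tall-cell regime forces $k_{r}\to 0^{+}$ (and hence $k_{r}'\to 1^{-}$), and consequently $K(k_{r})\to \pi/2$. This is the entry point: the argument of every Jacobian elliptic function that appears in Eqs. \eqref{main:eqn:krho:alt} and \eqref{main:eqn:krho':alt} becomes of the form $(\pi/2)\cdot(\text{length ratio})$.

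The next step is to replace the Jacobian elliptic functions by their trigonometric limits. As $k_{r}\to 0$, the standard reductions give $\sn(\bm{u},k_{r})\to \sin\bm{u}$, $\cn(\bm{u},k_{r})\to\cos\bm{u}$, $\dn(\bm{u},k_{r})\to 1$, so $\sd(\bm{u},k_{r})=\sn/\dn\to\sin\bm{u}$, and the ratio $\sd/\cn\to\tan\bm{u}$. Substituting into Eq. \eqref{main:eqn:krho:alt} with $\bm{u}=K(k_{r})g/W\to (\pi/2)(g/W)$ yields the first equality in Eq. \eqref{main:eqn:krho:Hinf}; substituting into Eq. \eqref{main:eqn:krho':alt} with ${k_{r}'}^{4}\to 1$ yields the first equality in Eq. \eqref{main:eqn:krho':Hinf}. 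The second equalities in each display follow by linearizing $\sin x\approx x$ (and dropping the $[1+\sin]^{2}\to 1$ denominator) and $\tan x\approx x$ for small arguments, which is justified by the assumed smallness of $g/W$ or $w_{A}/W, w_{B}/W$.

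Finally, to obtain Eqs. \eqref{main:eqn:K'K:Hinf}, I plug these approximations into the nome-based formulas \eqref{main:eqn:K'K}. For Eq. \eqref{main:eqn:K'K:Hinf:wEinf} the modulus $k_{\rho}$ itself is small, so I use \eqref{main:eqn:K'K:k0}: with $k_{\rho}^{2}/16\approx \pi g/(8W)$, one reads off $K'(k_{\rho})/K(k_{\rho})\approx -\pi^{-1}\ln(\pi g/8W)$. For Eq. \eqref{main:eqn:K'K:Hinf:wE0} the complementary modulus $k_{\rho}'$ is small, so I use \eqref{main:eqn:K'K:k'0}: with ${k_{\rho}'}^{2}/16\approx (\pi w_{A}/4W)^{2}(\pi w_{B}/4W)^{2}$, taking logarithms splits the product and gives the claimed sum $2\ln(\pi w_{A}/4W)+2\ln(\pi w_{B}/4W)$ in the denominator.

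The only real obstacle is bookkeeping: keeping track of the two independent small parameters (the modulus $k_{r}$ controlling the Jacobian-to-trigonometric reduction, and the dimensionless ratio $g/W$ or $w_{E}/W$ controlling the linearization of $\sin$ and $\tan$), and making sure the algebraic simplification of the $16$ inside the nome series lands exactly on the constants $\pi/8$ and $\pi/4$ shown. No delicate analytic estimate is needed beyond the routine first-order Taylor expansions and the already-established Taylor expansion \eqref{main:eqn:nomo:taylor} of the nome, truncated at its leading term.
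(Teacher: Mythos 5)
Your proposal is correct and follows essentially the same route as the paper's proof in the supplementary information: both reduce the Jacobian elliptic functions in Lemma \ref{main:lem:krho:alt} to their trigonometric limits as $q_{r} = Q(k_{r}) \to 0^{+}$ (the paper quotes the limits in the normalized form $\sin(\tfrac{\pi}{2}\bm{u}/K(k))$, which absorbs your separate step $K(k_{r})\to\pi/2$, but the result is identical), then linearize for small $g/W$ or $w_{E}/W$ and feed the moduli into the leading-order nome formulas \eqref{main:eqn:K'K}. The constants $\pi/8$ and $\pi/4$ come out exactly as you compute them.
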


The results in Eqs. \eqref{main:eqn:krho:Hinf} and \eqref{main:eqn:K'K:Hinf:wEinf}
were first obtained by \cite[Eq. (32)]{Aoki1988dec},
for the case of large electrodes.
Almost two decades later, Eqs. \eqref{main:eqn:krho':Hinf} and \eqref{main:eqn:K'K:Hinf:wE0}
were obtained by \cite[Eqs. (2), (6) and (7)]{Morf2006may},
for the case of small electrodes.
See \emph{Supplementary information} \S\ref{si:proof:K'K:Hinf} for a proof of both results.
Also see Table \ref{main:tab:if_approx} in \emph{Results and discussion}
\S\ref{main:approx} to know the width of the electrode bands for which the approximations hold.


\subsubsection{Case of shallow cells}

Similarly, hyperbolic approximations \cite[Eqs. (11.1)--(11.3)]{Fenton1982jul}
of Jacobian elliptic functions in Lemma \ref{main:lem:krho:alt}
can be used for shallow electrochemical cells with small and large electrode bands.

\begin{teorema}
	In case of shallow electrochemical cells ($H$ is small),
	the modulus $k_{\rho}$ and the complementary modulus $k_{\rho}'$ are given by
	\begin{subequations}
		\label{main:eqn:moduli:H0}
		\begin{align}
			\bigg. k_{\rho}^{2} \bigg|_{
				\scriptsize \shortstack{
					$H \approx 0$ \\
					$g \approx 0$
				}
			}
			&\approx
			4 \coth\!\left( \frac{\pi}{4} \frac{W}{H} - \ln\sqrt{2} \right) 
			\tanh\!\left( \frac{\pi}{4} \frac{g}{W} \frac{W}{H} \right)
			\label{main:eqn:krho:H0}
			\\
			\bigg. {k_{\rho}'}^{2} \bigg|_{
				\scriptsize \shortstack{
					$H \approx 0$}}
			&\approx
			16 \e^{-\pi W/H} \sinh\!\left(
				\frac{\pi}{2} \frac{w_{A}}{W} \frac{W}{H}
			\right)^{2}
			\sinh\!\left( \frac{\pi}{2} \frac{w_{B}}{W} \frac{W}{H} \right)^{2}
			\label{main:eqn:krho':H0}
		\end{align}
	\end{subequations}
	Therefore, the normalized current in Eq. (\ref{main:eqn:if})
	for large and small electrodes is given respectively by
	\begin{subequations}
		\label{main:eqn:K'K:H0}
		\begin{align}
			\left. \frac{K'(k_{\rho})}{K(k_{\rho})} \right|_{
				\scriptsize \shortstack{
					$H \approx 0$ \\
					$g \approx 0$
				}
			}
			&\approx
			-\frac{1}{\pi} \left[
				\ln\tanh\!\left( \frac{\pi}{4} \frac{g}{H} \right)
				- \ln\tanh\!\left( \frac{\pi}{4} \frac{W}{H} - \ln\sqrt{2} \right) 
				- \ln 4
			\right]
			\label{main:eqn:K'K:H0:wEinf}
			\\
			\left. \frac{K'(k_{\rho})}{K(k_{\rho})} \right|_{
				\scriptsize \shortstack{
					$H \approx 0$ \\
					$2w_{A} \approx 0$ \\
					$2w_{B} \approx 0$
				}
			}
			&\approx
			-\pi \left[
				2 \ln\sinh\! \left( \frac{\pi}{2} \frac{w_{A}}{H} \right)
				+ 2 \ln\sinh\! \left( \frac{\pi}{2} \frac{w_{B}}{H} \right) 
				- \pi \frac{W}{H}
			\right]^{-1}
			\label{main:eqn:K'K:H0:wE0}
		\end{align}
	\end{subequations}
	Here the gap between consecutive bands equals $g = W - 2w$,
	when they have equal width $2w = 2w_{A} = 2w_{B}$.
\end{teorema}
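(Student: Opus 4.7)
The plan rests on three observations that all follow from $H$ being small: $k_{r}' \to 0^{+}$, $K(k_{r})$ grows like $\pi W/(4H)$, and the Jacobian elliptic functions appearing in Lemma~\ref{main:lem:krho:alt} admit Fenton's hyperbolic approximations valid as $k \to 1^{-}$. Concretely, Eq.~\eqref{main:eqn:kr} together with the inverse of the nome series \eqref{main:eqn:nomo:taylor} gives $k_{r}' \approx 4\e^{-\pi W/(4H)}$; combining $K'(k_{r})/K(k_{r}) = 2H/W$ from Eq.~\eqref{elipticas:eqn:nomo} with $K(k_{r}') \to \pi/2$ yields $K(k_{r}) \approx \pi W/(4H)$. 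These two estimates fix the scale of every argument inside the Jacobian elliptic functions.

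Equation~\eqref{main:eqn:krho':H0} then falls out directly. Substituting $\sd(u,k_{r}) \approx \sinh u$ and $\cn(u,k_{r}) \approx \sech u$ into Eq.~\eqref{main:eqn:krho':alt}, each ratio $(\sd/\cn)^{2}$ becomes $(\sinh u \cosh u)^{2} = \frac{1}{4}\sinh^{2}(2u)$ with $2u = 2K(k_{r}) w_{E}/W \approx \pi w_{E}/(2H)$. Multiplying the $A$ and $B$ contributions together with the prefactor $k_{r}'^{4} \approx 256\,\e^{-\pi W/H}$, the numerical constant collapses to $256/16 = 16$ and reproduces the claimed expression.

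Deriving Eq.~\eqref{main:eqn:krho:H0} is the delicate step, and the one I expect to absorb most of the effort. The naive substitution $\sn(u,k_{r}) \to \tanh u$, with $u = K(k_{r}) g/W \approx \pi g/(4H)$, plugged into Eq.~\eqref{main:eqn:krho:alt} gives $4\tanh u/(1+\tanh u)^{2} = 1 - \e^{-4u}$, which matches the target $4\coth(\pi W/(4H) - \ln\sqrt{2}) \tanh(\pi g/(4H))$ only to leading order in $g/H$. Capturing the $\coth$ prefactor requires the next order in the hyperbolic expansion of $\sn(u,k_{r})$, equivalently the first correction in the nome $q_{r}' = \e^{-\pi W/(2H)}$ to its theta-function representation. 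The identity $k_{r}'^{2}/4 \approx 4 q_{r}' = 2\,\e^{-2(K(k_{r}) - \ln\sqrt{2})}$ hints at how the shift $\ln\sqrt{2}$ will emerge, but showing that the combined $(1+\sn u)^{-2}$ denominator and $O(k_{r}'^{2})$ numerator really repackage into the single factor $\coth(K(k_{r}) - \ln\sqrt{2})$ is the main bookkeeping obstacle.

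Once both moduli are controlled, Eqs.~\eqref{main:eqn:K'K:H0} follow by picking the appropriate nome-based approximation from Eq.~\eqref{main:eqn:K'K}. For small electrodes $k_{\rho}'$ is small, so Eq.~\eqref{main:eqn:K'K:k'0} applies: substituting Eq.~\eqref{main:eqn:krho':H0}, the factor $16$ cancels inside $\ln(k_{\rho}'^{2}/16)$ and one reads off $2\ln\sinh + 2\ln\sinh - \pi W/H$ as the denominator of Eq.~\eqref{main:eqn:K'K:H0:wE0}. For large electrodes $k_{\rho}$ is itself small, so Eq.~\eqref{main:eqn:K'K:k0} applies: using $\ln\coth = -\ln\tanh$ and $\ln(1/4) = -\ln 4$ assembles the $\ln\tanh - \ln\tanh - \ln 4$ combination of Eq.~\eqref{main:eqn:K'K:H0:wEinf}.
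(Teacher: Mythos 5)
Your handling of ${k_{\rho}'}^{2}$ and of the final passage from the moduli to $K'(k_{\rho})/K(k_{\rho})$ via Eqs.~\eqref{main:eqn:K'K} is correct and matches the paper's route (the amplitude prefactors of the Fenton approximations for $\sd$ and $\cn$ tend to the values you implicitly assume, so the $256/16=16$ bookkeeping goes through). The genuine gap is exactly the step you yourself flag as unresolved: Eq.~\eqref{main:eqn:krho:H0} is never derived, and the plan you sketch for it points in the wrong direction. You propose to retain the $[1+\sn]^{-2}$ denominator of Eq.~\eqref{main:eqn:krho:alt} and to extract the $\coth$ prefactor from a next-order correction to $\sn$, hoping the two effects ``repackage'' into a single factor. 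The paper does neither. Because this formula is only asserted for $g \approx 0$, the quantity $\sn(K(k_{r})g/W,k_{r})$ is small and $[1+\sn]^{2}\approx 1$, so the denominator is simply discarded, leaving $k_{\rho}^{2}\approx 4\sn(K(k_{r})g/W,k_{r})$. The $\coth$ then comes not from a higher-order term of the expansion but from the amplitude factor already present in the leading hyperbolic approximation $\sn(\bm{u},k)\approx (k^{2})^{-1/4}\tanh(\pi\bm{u}/2K'(k))$, combined with $(k^{2})^{1/4}\approx(1-2q')/(1+2q')=\tanh(\pi K(k)/2K'(k)-\ln\sqrt{2})$; using $K(k_{r})/K'(k_{r})=W/2H$ this gives $(k_{r}^{2})^{-1/4}=\coth\!\left(\frac{\pi}{4}\frac{W}{H}-\ln\sqrt{2}\right)$ for the prefactor and $\tanh\!\left(\frac{\pi}{4}\frac{g}{H}\right)$ for the other factor.

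Note also that your plan, carried out literally (denominator kept, correction extracted), would not land on the stated product form $4\coth(\cdot)\tanh(\cdot)$ but on a more complicated expression agreeing with it only when $g/H$ is small --- which is precisely the regime in which the denominator could have been dropped at the outset. So the ``main bookkeeping obstacle'' dissolves once the hypothesis $g\approx 0$ is invoked in the right place, namely before any hyperbolic approximation of $\sn$ is made. The remaining ingredients of your proposal (the estimates $k_{r}'\approx 4\e^{-\pi W/4H}$ and $K(k_{r})\approx \pi W/4H$, and the selection of Eq.~\eqref{main:eqn:K'K:k0} for large electrodes versus Eq.~\eqref{main:eqn:K'K:k'0} for small ones) are sound and agree with the paper.
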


See \emph{Supplementary information} \S\ref{si:proof:K'K:H0} for a proof.
Also see Table \ref{main:tab:if_approx} in \emph{Results and discussion}
\S\ref{main:approx} to know the cell dimensions for which the approximations hold.


\section{Results and discussion}

The scripts\footnote{%
	All python scripts are available online at \url{\elektrodo}.%
} for obtaining the plots and simulations in this section
were written in \href{https://www.python.org/}{\textsf{Python}}
using the \href{https://scipy.org/}{\textsf{SciPy}} stack \cite{Jones2016jan}.
In particular, Jacobian elliptic functions
(required for plotting the theoretical concentration profile, current density and current)
were obtained from the library \href{http://mpmath.org/}{\textsf{mpmath}} \cite{Johansson2015dec}.
The simulations of the concentration profile and current were computed
using the finite element library \href{http://www.ctcms.nist.gov/fipy/}{\textsf{FiPy}} \cite{Guyer2009may}.

\subsection{Concentration profile}

\begin{figure}[t]
	\centering
	\subcaptionbox{
		$H/W = 1$.
		\label{main:fig:xi:HW10}
	}{
		\includegraphics{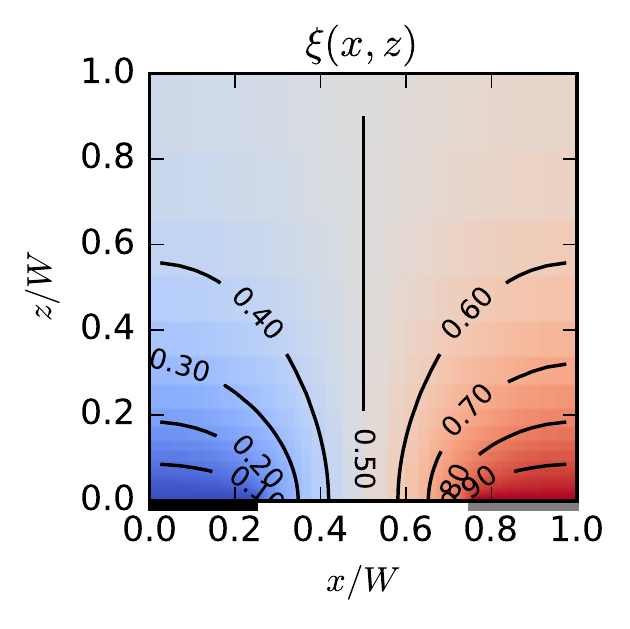}
	}
	\subcaptionbox{
		$H/W \approx \num{0,3}$.
		\label{main:fig:xi:HW03}
	}{
		\includegraphics{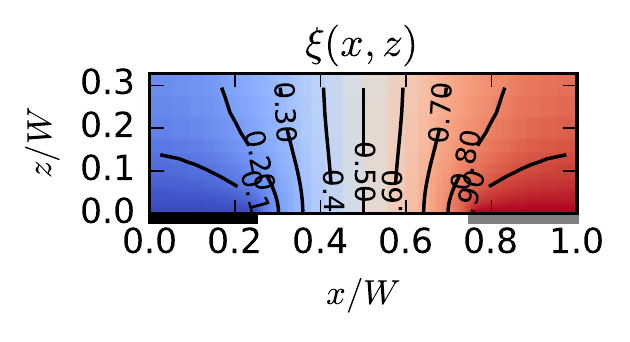}
	}
	\caption{
		Normalized concentration
		$\xi(x,z) = [c_{\sigma,f}(x,z) - c_{\sigma,f}^{A}]/[c_{\sigma,f}^{B} - c_{\sigma,f}^{A}]$
		in the final steady state Eq. \eqref{main:eqn:cf}
		(iso-concentration lines shown in black).
		Plots consider electrodes of equal size $2w_{A} = 2w_{B} = \num{0,5}W$
		and different aspect ratios $H/W$ for the unit cell.
		The absolute error between theoretical concentration and its simulated counterpart was not greater than $\approx\num{0,003}$ for this two cases.
		(See \emph{Supplementary information} \S\ref{si:sec:xi} for error plots).
	}
	\label{main:fig:xi}
\end{figure}

The concentration profile in Eq. \eqref{main:eqn:cf}
was constrasted against simulations of Eqs. \eqref{main:eqn:pde} for
different values of aspect ratio $H/W \in \mathopen] 0, \num{1,25} \mathclose[$
and electrode widths $2w_{A}/W = 2w_{B}/W \in \mathopen] 0, 1 \mathclose[$.
In 43 of 45 analyzed cases, the maximum absolute error was not greater than
$\approx \num{0,0045}$ (which corresponds to two decimal places of accuracy),
confirming the correctness of the expression obtained in Eq. \eqref{main:eqn:cf}.
See Table \ref{si:tab:exhaustive:xi} in \emph{Supplementary information} \S\ref{si:sec:exhaustive} for more details.

Fig. \ref{main:fig:xi} shows the behavior of the concentration profile
under the cases of high and low aspect ratios $H/W$
(simulation errors and other details are shown in \emph{Supplementary information} \S\ref{si:sec:xi}).
In case of high aspect ratio $H/W$ (Fig. \ref{main:fig:xi:HW10}),
the concentration becomes uniform far from the surface of the electrodes
(near the roof of the cell)
and full radial diffusion is present,
thus mimicking the behavior of semi-infinite configurations
\cite[Figs. 6 and 9]{Strutwolf2005feb} \cite[p. 6406 and Fig. S3]{Heo2013}
\cite[Fig. 39]{Heo2014jun} \cite[309]{Kanno2014}.
Moreover, according to \cite[Theorem 2.6]{GuajardoYevenes2013sep},
$H/W \gtrsim 1$ corresponds to the condition under which
the concentration of a finite-height cell and
its semi-infinite counterpart have similar profiles,
which agrees with Fig. \ref{main:fig:xi:HW10}.

In the case of low aspect ratio $H/W$ (Figs. \ref{main:fig:xi:HW03}),
the concentration profile doesn't exhibit a region of uniform concentration,
and radial diffusion is truncated by the low roof the cell.
Similar results obtained in
\cite[Fig. 6]{Strutwolf2005feb} \cite[\S3.2]{GuajardoYevenes2013sep}
\cite[p.6406, Fig. S3]{Heo2013} \cite[Fig. 39]{Heo2014jun}
\cite[p.309]{Kanno2014} confirm these facts.
Moreover, when the aspect ratio is low enough
(compare supplementary Figs. \ref{si:fig:xi_sim:HW05} and \ref{si:fig:xi_sim:HW03}),
the vertical gradient of concentration is almost inexistent,
being the horizontal gradient the only one that is clearly appreciable,
which resembles that of a parallel-plates configuration
in the gap between consecutive electrode bands.

\subsection{Current density}

\begin{figure}
	\centering
	\subcaptionbox{Selected values of $H/W$.}{\includegraphics{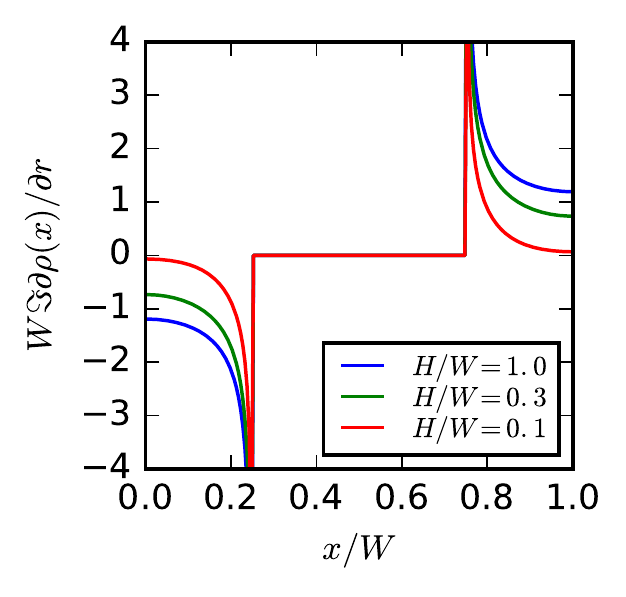}}
	\subcaptionbox{Range of values $H/W \in [0,1]$.}{\includegraphics{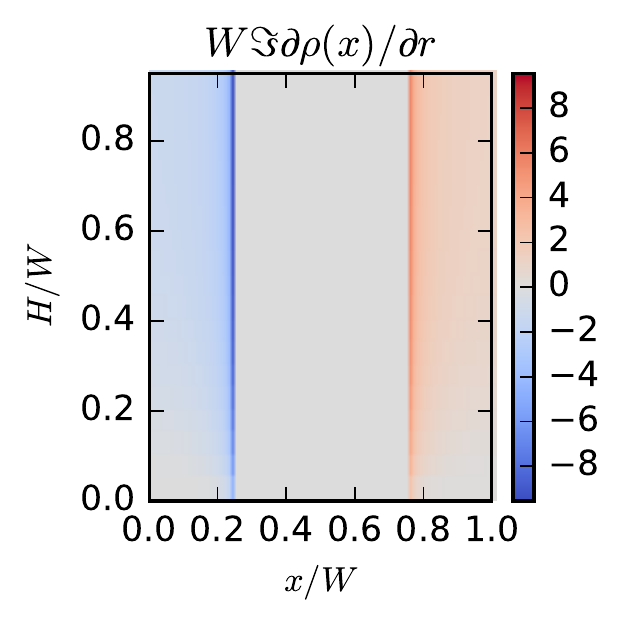}}
	\caption{
		Plots of the normalized current density
		$W \Im \partial\bm{\rho}(x)/ \partial\bm{r} = \pm W j_{f}(x)/(F n_{e} D_{\sigma} [c_{\sigma,f}^{B} - c_{\sigma,f}^{A}])$
		in Eq. \eqref{main:eqn:jf}
		for electrode bands of equal width $2w_{A} = 2w_{B} = \num{0,5}W$
		and different values of $H/W$.
	}
	\label{main:fig:im_dT7dr}
\end{figure}

The aspect ratio of the unit cell also affects the shape of the current density on the IDAE.

Fig. \ref{main:fig:im_dT7dr} shows the normalized current density,
at the bottom of the unit cell, for different aspect ratios $H/W$,
when the width of the electrode bands equals $2w_{A} = 2w_{B} = \num{0,5}W$.
As predicted by Eq. (\ref{main:eqn:dTdr}), 
the current density becomes singular at the edges of the electrode bands.
Also, it can be seen that the current density is highly non-uniform near the edges,
and becomes approximately uniform near the center of the electrode bands.

Fig. \ref{main:fig:im_dT7dr} also shows that
the magnitude of the plateau of current density varies
according to the aspect ratio $H/W$ of the unit cell.
For $H/W \gtrsim 1$ the plateau remains the highest
and the current density is approximately independent of $H/W$,
mainly because radial diffusion is completely developed
(see Fig. \ref{main:fig:xi:HW10}), as in the case of semi-infinite cells.
In the range of approximately $\num{0,3} \lesssim H/W \lesssim 1$,
the current density decreases slowly as the aspect ratio $H/W$ decreases,
due to truncation of the radial diffusion caused by the lower roof of the cell.
In the range of approximately $0 \leq H/W \lesssim \num{0,3}$,
the current density decreases quickly as $H/W$ decreases,
with plateaus approaching zero for very low $H/W$,
mainly because the vertical diffusion in the cell becomes almost nonexistent,
and only horizontal diffusion between consecutive bands takes place
(see Fig. \ref{main:fig:xi:HW03}).

\subsection{Current per band}

\begin{figure}
	\centering
	\includegraphics{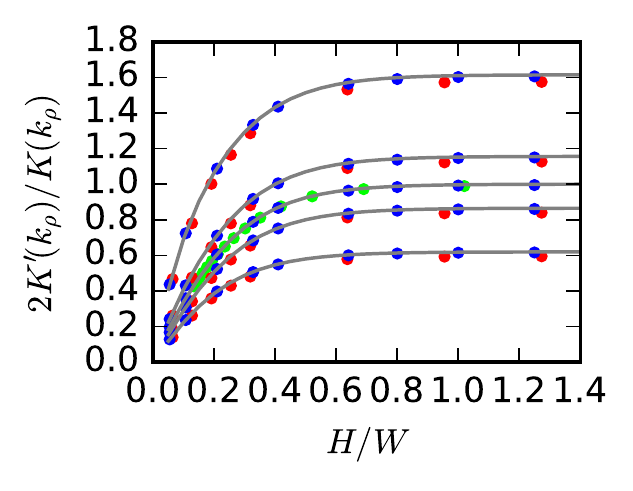}
	\caption{
		Comparison between simulated and theoretical normalized current per band
		$2 K'(k_{\rho})/K(k_{\rho}) = \pm (i_{f}^{E}/L)/(F n_{e} D_{\sigma} [c_{\sigma,f}^{E} - c_{\sigma,f}^{E'}])$
		for $2w_{A}/W = 2w_{B}/W \in \{ \numlist[list-final-separator={,}]{0,2; 0,4; 0,5; 0,6; 0,8} \}$
		(black lines in order from bottom to top).
		Black line: Theoretical expresion.
		Blue dots: Simulations in this work.
		Red dots: Simulation data from \cite[Fig. 7a]{GuajardoYevenes2013sep}.
		Green dots: Simulation data from \cite[Fig. 7a]{Strutwolf2005feb}.
		See \emph{Supplementary information} \S\ref{si:sec:exhaustive} for data.
	}
	\label{main:fig:K'Krho:simu}
\end{figure}

The normalized current in Eq. \eqref{main:eqn:if} was contrasted
against simulations of Eqs. \eqref{main:eqn:pde}
for different values of aspect ratio $H/W \in \mathopen]0, \num{1,25} \mathclose[$
and electrode widths $2w_{A}/W = 2w_{B}/W \in \mathopen] 0, 1 \mathclose[$,
see Fig. \ref{main:fig:K'Krho:simu}.
In 44 of 45 analyzed cases, the maximum absolute error was not greater than $\approx \num{0,0078}$,
thus confirming the correctness of Eq. \eqref{main:eqn:if}.
The theoretical result also agrees with simulation results
previously reported in the literature
\cite[Fig. 7a]{Strutwolf2005feb} \cite[Fig. 7a]{GuajardoYevenes2013sep},
which are also shown in Fig. \ref{main:fig:K'Krho:simu}.
For more details on the data see Tables \ref{si:tab:exhaustive:itau}, \ref{si:tab:exaustive:itau:Guajardo2013}
and \ref{si:tab:exaustive:itau:Strutwolf2005} in
\emph{Supplementary information} \S\ref{si:sec:exhaustive}.

\subsubsection{Effect of the counter electrode in the collection efficiency}

The expression in Eq. \eqref{main:eqn:if} shows that the currents at bands
of both arrays are equal in magnitude but opposite $i_{f}^{A} = -i_{f}^{B}$.
This suggests that 100\% collection efficiency is a necessary condition for this expression to hold,
which is immediately satisfied when the IDAE operates with internal counter electrode
\cite[start of p. 280]{Aoki1988dec} \cite[p. 7558 mid col. 2]{Rahimi2011}.

However, in the case of external counter electrode,
the collection efficiency may not reach 100\%
\cite[start of p. 280]{Aoki1988dec} \cite[p. 7558 mid col. 2]{Rahimi2011}.
In case the roof of the cell is low ($H/W \lesssim 1$),
the collection efficiency is close to 100\%
\cite[Fig. 5b]{Strutwolf2005feb} \cite[Fig. 5e]{Kanno2014}
and Eq. \eqref{main:eqn:if} approximately holds.
However, when the roof of the cell becomes taller ($H/W \gtrsim 1$),
the collection efficiency departs from 100\% and decreases
\cite[Fig. 5a]{Strutwolf2005feb} \cite[Fig. 5e]{Kanno2014}.
In this last case, a correction that takes into account the effect of
external counter is needed to accurately predict the current through the IDAE.
This kind of correction was done semi-empirically in \cite[Eq. (33)]{Aoki1988dec}
and later in \cite[Eqs. (13) and (20)]{Morf2006may},
both for the case of semi-infinite cells ($H \to +\infty$).

\subsubsection{Influence of the cell geometry}

\begin{figure}
	\centering
	\subcaptionbox{
		Case of bands of equal width.
		\label{main:fig:K'Krho:wAwB}
	}{\includegraphics{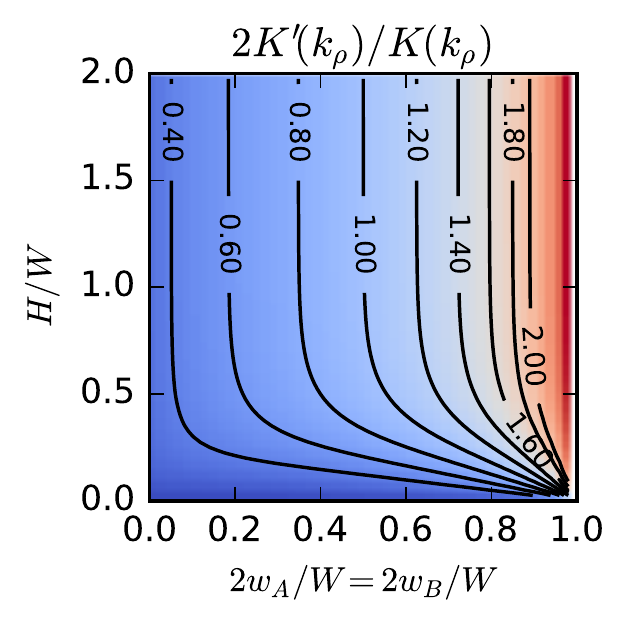}}
	\subcaptionbox{
		Case $H/W = \num{0.1}$.
		\label{main:fig:K'Krho:HW01}
	}{\includegraphics{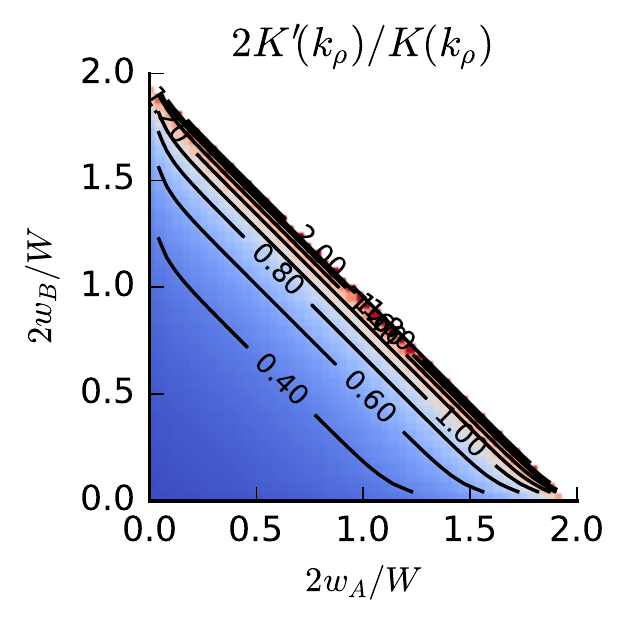}}
	
	\subcaptionbox{
		Case $H/W = \num{0.3}$.
		\label{main:fig:K'Krho:HW03}
	}{\includegraphics{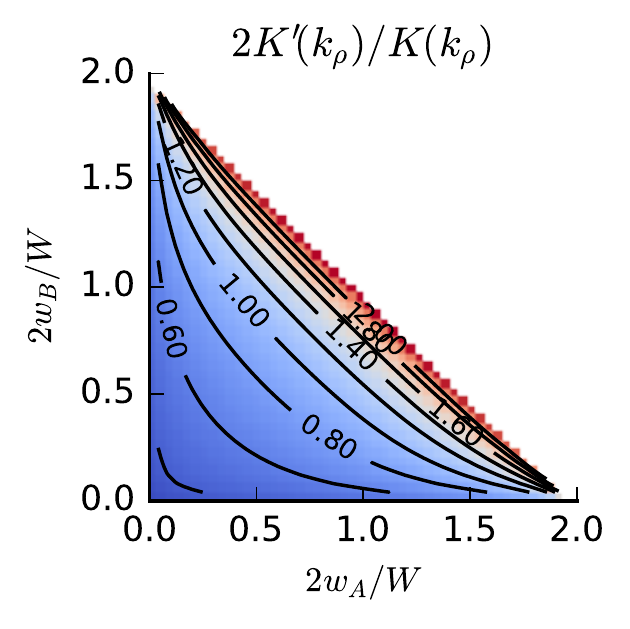}}
	\subcaptionbox{
		Case $H/W = 1$.
		\label{main:fig:K'Krho:HW10}
	}{\includegraphics{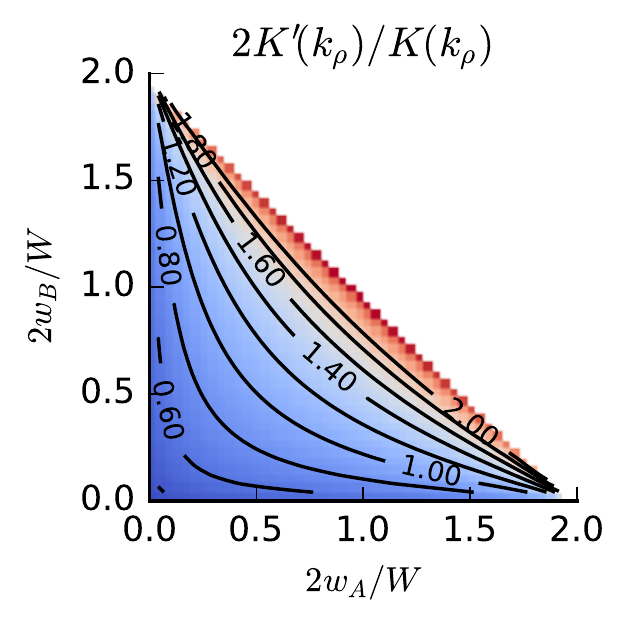}}
	\caption{
		Different slices and contour lines of the normalized current per band
		$2 K'(k_{\rho})/K(k_{\rho}) = \pm (i_{f}^{E}/L)/(F n_{e} D_{\sigma} [c_{\sigma,f}^{E} - c_{\sigma,f}^{E'}])$.
		The normalized current is a function of the relative dimensions of the cell: $2w_{A}/W$, $2w_{B}/W$ and $H/W$.
	}
	\label{main:fig:K'Krho:theo}
\end{figure}

According to Corollary \ref{main:cor:if},
the current in steady state through the IDAE depends on the ratio
$2K'(k_{\rho})/K(k_{\rho})$, which is a purely geometrical factor,
since the modulus $k_{\rho}$ depends only on
the relative dimensions of the unit cell $2w_{A}/W$, $2w_{B}/W$ and $H/W$,
as seen in Eqs. (\ref{main:eqn:kr-krho}) and (\ref{main:eqn:vAB}).
This implies that, for known electrochemical species in the cell
($D_{\sigma}$ and $n_{e}$ are known)
and for known potentials applied to the IDAE
(therefore $c_{\sigma,f}^{A}$ and $c_{\sigma,f}^{B}$ are also known),
the performance of the cell can be optimized just by adjusting
the widths of the electrode bands and the aspect ratio of the unit cell.

Visualization of the ratio $2K'(k_{\rho})/K(k_{\rho})$,
as a function of its three geometrical parameters ($2w_{A}/W$, $2w_{B}/W$ and $H/W$),
is helpful to find the right combination of cell dimensions that
allow an optimal performance.
However, this ratio is difficult to visualize,
since it corresponds to a three-dimensional scalar field.
Therefore, Fig. \ref{main:fig:K'Krho:theo} shows two-dimensional slices and
contour lines of the ratio $2K'(k_{\rho})/K(k_{\rho})$,
as a function of the relative dimensions of the cell.
These slices were chosen such that they provide enough information to understand
how the normalized current $2K'(k_{\rho})/K(k_{\rho})$ behaves
for different values of $2w_{A}/W$, $2w_{B}/W$ and $H/W$.

Figs. \ref{main:fig:K'Krho:simu} and \ref{main:fig:K'Krho:theo} show that
the current increases as both the width of the electrode bands and
the aspect ratio of the unit cell increase
(similarly occurs for the case of elevated electrodes in the simulations of
\cite[Fig. 5]{Goluch2009may} \cite[Table 1]{Heo2013} \cite[Table 2]{Heo2014jun}).
For aspect ratios $H/W \gtrsim 1$, the current becomes independent of $H/W$,
and only depends on the width of the electrode bands.
For $\num{0,3} \lesssim H/W \lesssim 1$ and fixed band widths,
the current decreases slowly as the aspect ratio $H/W$ decreases.
For $0 \leq H/W \lesssim \num{0,3}$ and fixed band widths,
the current decreases much faster as the ratio $H/W$ decreases.
This variation in the magnitude of the current, in relation to the aspect ratio $H/W$,
has the same explanation as the one given for the current density in the previous section,
and relates to the truncation of the radial diffusion by a low roof of the cell.

Also, Figs. \ref{main:fig:K'Krho:simu} and \ref{main:fig:K'Krho:wAwB}
confirm that there is a certain limit about $H/W \gtrsim 1$
where the cell is so high that diffusion is not affected by the roof of the cell,
and the current approaches that of semi-infinite diffusion,
as predicted earlier in \cite[\S3.2 and Fig. 7a]{Strutwolf2005feb}
\cite[\S3.3 and Fig. 7a]{GuajardoYevenes2013sep} \cite[p. 309 and Fig. 5]{Kanno2014}.
Similar results have been also found through simulations for
the case of elevated electrodes \cite[p.451 and Fig. 5b]{Goluch2009may}.

Finally, note that the amount of material used for fabricating the electrodes
can be also optimized for a given current.
This amount of material is proportional to $m = 2w_{A}/W + 2w_{B}/W$
and corresponds to an anti-diagonal line in the domain of Figs.
\ref{main:fig:K'Krho:HW01}, \subref{main:fig:K'Krho:HW03}
and \subref{main:fig:K'Krho:HW10}.
Therefore, the minimum amount of utilized material that
produces a desired current is obtained when $2w_{A} = 2w_{B}$.
This can be seen by selecting a desired iso-current line,
and intersecting it with the anti-diagonal $m = 2w_{A}/W + 2w_{B}/W$.
This produces two intersection points,
which are symmetric with respect to the diagonal line  $2w_{A}/W = 2w_{B}/W$.
The minimum of $m$ is obtained by moving the anti-diagonal towards the origin,
which makes the two intersection points on the iso-current line converge to a single point, thus obtaining $2w_{A} = 2w_{B}$.

\subsection{Voltammogram shape}

Consider the current\footnote{
	for $\pm$ or $\mp$, the upper sign corresponds to $\sigma=O$,
	and the lower sign, to $\sigma=R$.
} through one array of the IDAE,
consisting of $N_{E}$ bands of length $L$, see Eq. \eqref{main:eqn:if}
\begin{equation}
	\label{main:eqn:NEif}
	N_{E} i_{f}^{E}	= 
	\pm N_{E}L\; 2\frac{K'(k_{\rho})}{K(k_{\rho})}
	\cdot F n_{e} D_{\sigma} [c_{\sigma,f}^{E} - c_{\sigma,f}^{E'}]
\end{equation}
Here one can distinguish two clear factors with different roles: A geometrical
factor $N_{E}L\; 2K'(k_{\rho})/K(k_{\rho})$ and an electrochemical factor
$F n_{e} D_{\sigma} [c_{\sigma,f}^{E} - c_{\sigma,f}^{E'}]$.

\begin{figure}[t]
	\centering
	\subcaptionbox{
		\label{main:fig:cE-cE':extC}
		Case of external counter electrode.
		Solid lines: Result in Eq. \eqref{main:eqn:cE-cE':extC}.
	}{\includegraphics{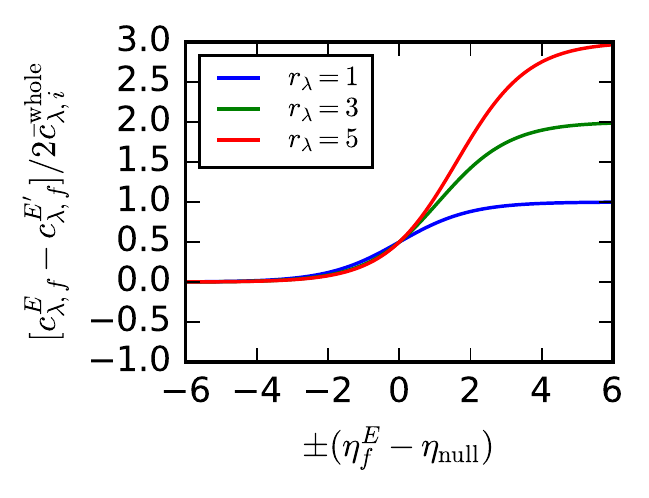}}
	\hfill
	\subcaptionbox{
		\label{main:fig:cE-cE':intC}
		Case of internal counter electrode.
		Solid lines: Result in Eq. \eqref{main:eqn:cE-cE':intC}.
		Dashed line: Saturation limit in Eq. \eqref{main:eqn:cE-cE':cotas}
		due to depletion.
	}[68mm]{\includegraphics{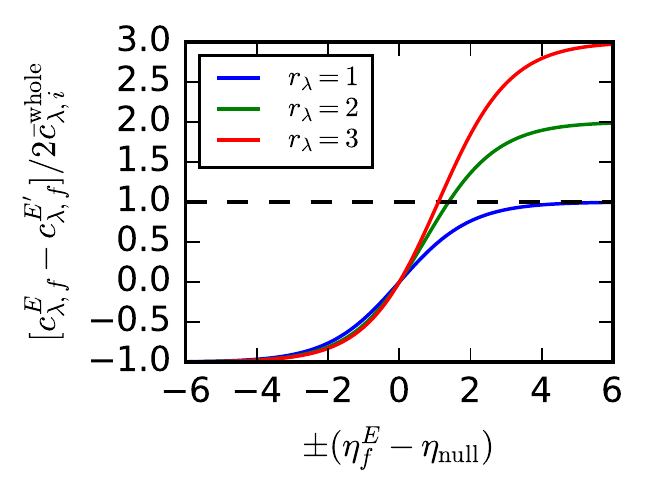}}
	\caption{
		Plot of the normalized difference of concentration
		$[c_{\lambda,f}^{E} - c_{\lambda,f}^{E'}]/2\bar{c}_{\lambda,i}^{\whole}$
		as a function of the normalized potential
		$\eta_{f}^{E} - \eta_{\mathrm{null}}
		= (n_{e} F/RT)(V_{f}^{E} - V_{\mathrm{null}})$
		for different ratios $r_{\lambda}
		= (D_{\lambda'} \bar{c}_{\lambda',i}^{\whole})
		/ (D_{\lambda} \bar{c}_{\lambda,i}^{\whole})$.
		Here $\lambda$ and $\lambda'$ correspond to the limiting species
		and its complementary species, such that
		$D_{\lambda} \bar{c}_{\lambda,i}^{\whole}
		= \min(D_{O} \bar{c}_{O,i}^{\whole}, D_{R} \bar{c}_{R,i}^{\whole})$.
	}
\end{figure}

The inherent shape of the voltammogram is solely due to the electrochemical factor,
which is proportional to the difference of concentrations
in Eqs \eqref{main:eqn:cE-cE':extC} and \eqref{main:eqn:voltammogram:intC}.
This inherent shape can be amplified or atenuated by the geometrical factor,
which is shown in Figs. \ref{main:fig:K'Krho:simu} and \ref{main:fig:K'Krho:theo},
and expressed approximately in Eqs. \eqref{main:eqn:K'K:Hinf} and \eqref{main:eqn:K'K:H0}.

Therefore, here we examine the shape of the voltammogram just by looking at
the behavior of the difference of concentrations in two cases:
With external and internal counter electrodes.

The first, and the most commonly found in the literature,
corresponds to the case where the counter electrode is external to the IDAE,
which means that each array of the IDAE is potentiostated individually
(commonly, the potential of one array is scanned,
while the complementary array is fixed to a sufficiently negative potential).

Fig. \ref{main:fig:cE-cE':extC} shows the plot of the normalized difference
of concentrations
when using an external counter electrode, see Eq. \eqref{main:eqn:cE-cE':extC}.
This normalized difference (and thus the current) is unipolar
(its is either always positive or always negative), and it increases with the
weighted sum of initial concentrations of both electrochemical species.
Moreover, the steady-state current reaches plateaus (limiting current)
that are proportional to this weighted (total, when $D_{O} = D_{R}$)
sum of concentrations  \cite[p. 7558 start of col. 2]{Rahimi2011},
when the applied potential at the scanning electrode exceeds $|\eta_{f}^{E} - \eta_{\nul}| \gtrsim 4$.

The second is the case where the counter electrode is internal to the IDAE,
which means that one of the arrays is potentiostated at will,
while the complementary array performs as counter electrode
(its potential is controlled automatically by the potentiostat).

\begin{figure}[t]
	\centering
	\subcaptionbox{
		\label{main:fig:cE-cE':intC:noR}
		Normalized difference of concentration
		$[c_{\lambda,f}^{E} - c_{\lambda,f}^{E'}]/2\bar{c}_{\lambda,i}^{\whole}$
		as a function of the normalized voltage
		$\eta_{f}^{E} - \eta_{f}^{E'}$.
		See Eq. \eqref{main:eqn:etaE-etaE':intC}.
	}{\includegraphics{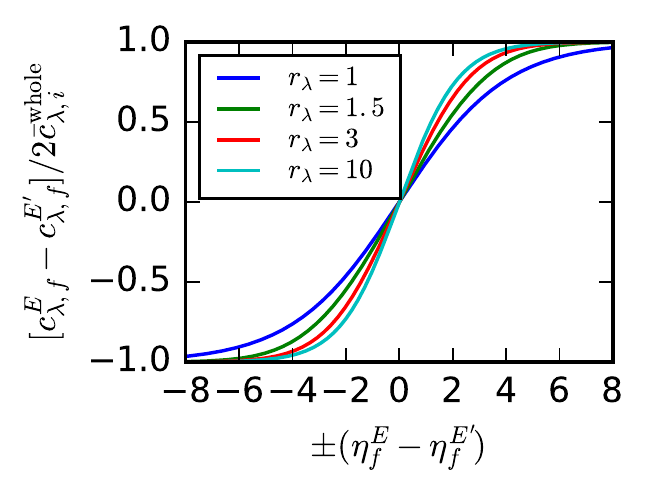}}
	\hfill
	\subcaptionbox{
		\label{main:fig:etaE=DetaE}
		Dependence of the normalized potentials $\eta_{f}^{E} - \eta_{\nul}$ (solid) and
		$\eta_{f}^{E'} - \eta_{\nul}$ (dashed) as a function of
		the normalized voltage $\eta_{f}^{E} - \eta_{f}^{E'}$.
		See Eqs. \eqref{main:eqn:voltammogram:intC}.
	}{\includegraphics{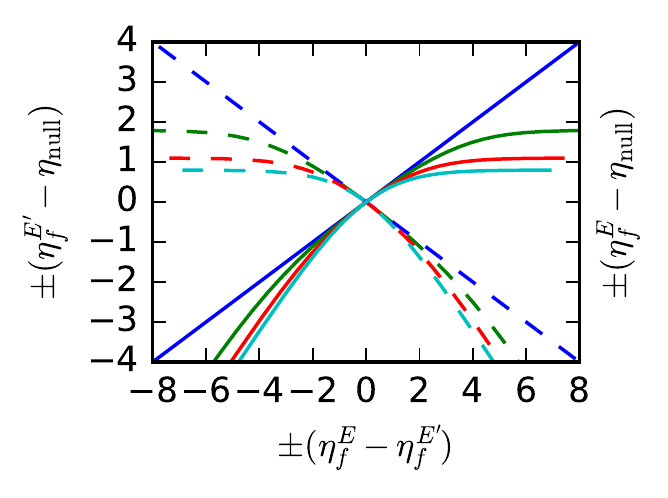}}
	\caption{
		Case when the internal counter electrode is used as reference electrode,
		analyzed for different ratios $r_{\lambda}
		= (D_{\lambda'} \bar{c}_{\lambda',i}^{\whole})
		/ (D_{\lambda} \bar{c}_{\lambda,i}^{\whole})$.
		Here $\lambda$ and $\lambda'$ are the determinant and its complementary species,
		such that $D_{\lambda} \bar{c}_{\lambda,i}^{\whole}
		= \min(D_{O} \bar{c}_{O,i}^{\whole}, D_{R} \bar{c}_{R,i}^{\whole})$,
		$\eta_{f}^{E} - \eta_{\mathrm{null}} = (F n_{e}/RT)(V_{f}^{E}
		- V_{\mathrm{null}})$ is the normalized working potential,
		$\eta_{f}^{E'} - \eta_{\mathrm{null}} = (F n_{e}/RT)(V_{f}^{E}
		- V_{\mathrm{null}})$ is the normalized counter potential,
		and $\eta_{f}^{E} - \eta_{f}^{E'}$ is the normalized voltage.
	}
\end{figure}

Fig. \ref{main:fig:cE-cE':intC} shows the plot of the normalized difference of concentrations
when using an internal counter electrode, see Eq. \eqref{main:eqn:cE-cE':intC}.
This normalized difference (and thus the current) is bipolar
(it is positive and negative in the same plot),
and presents plateaus when the applied potential is sufficiently high
$|\eta_{f}^{E} - \eta_{\nul}| \gtrsim 4$.
Ideally, the magnitude of the difference of concentrations
(steady-state current) should increase as the ratio
$r_{\lambda} = (D_{\lambda'} \bar{c}_{\lambda',i})/(D_{\lambda} \bar{c}_{\lambda,i})$
between the electrochemical species increases, but it is actually limited by
Eq. \eqref{main:eqn:cE-cE':cotas} (dashed line),
due to depletion of its limiting species $\lambda$
\cite[Fig. 2]{GuajardoYevenes2013sep}.
This causes the limiting current to be proportional to
the concentration of the limiting species, instead of the weighted
sum of concentrations of both species
\cite[p. 7558 mid col. 2]{Rahimi2011} \cite[Fig. 2]{Rahimi2010mar} \cite[Fig. 3.4]{Rahimi2009aug}.
This fact has been also discussed in \cite[\S2.3]{Morf2006may}
and \cite[\S2.3]{GuajardoYevenes2013sep}
and implies that simultaneous presence of both electrochemical species
is a necessary condition to obtain steady state currents.

\begin{figure}[t]
	\centering
	\subcaptionbox{
		\label{main:fig:fitting:extC}
		Case of external counter electrode.
		Dots: Experimental voltammogram of \SI{1}{\nano\mole\per\micro\litre}
		ferrocene (reduced sp.) with $V_{f}^{E'} - V^{R} = \SI{-0,15}{\volt}$
		from \cite[Fig. 7]{Aoki1988dec}\footnotemark for generator (green) and collector (blue).
		Lines: Curve fitting of the current at the generator using Eq. \eqref{main:eqn:model:extC}
		with $k_{0} = \SI{33,5}{\micro\ampere}$ and $k_{1} = \SI{-0,413}{\volt}$
		(green) and its reflection (blue).
	}{\includegraphics{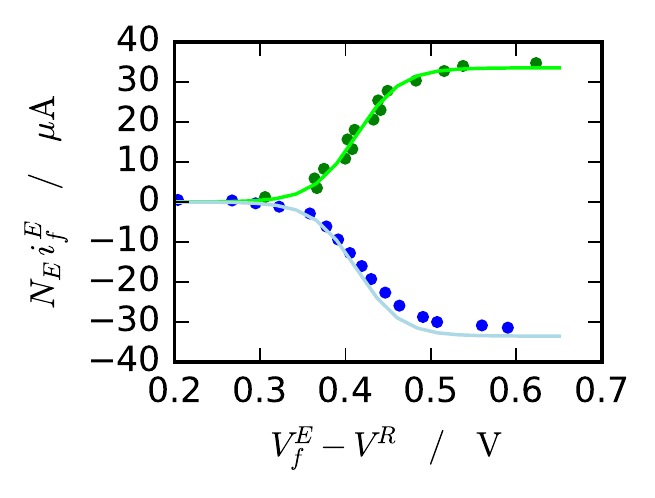}}.
	\subcaptionbox{
		\label{main:fig:fitting:intC}
		Case of internal counter electrode.
		Dots: Experimental voltammogram of \SI{0,20}{\nano\mole\per\micro\litre}
		of each ferrocyanide (reduced sp.) and ferricyanide (oxidized sp.)
		from \cite[Fig. 2]{Rahimi2011} or equivalently from
		\cite[Fig. 3.12]{Rahimi2009aug}.
		Line: Curve fitting using Eq. \eqref{main:eqn:model:intC}
		with $k_{0} = \SI{0,499}{\micro\ampere}$.
	}{\includegraphics{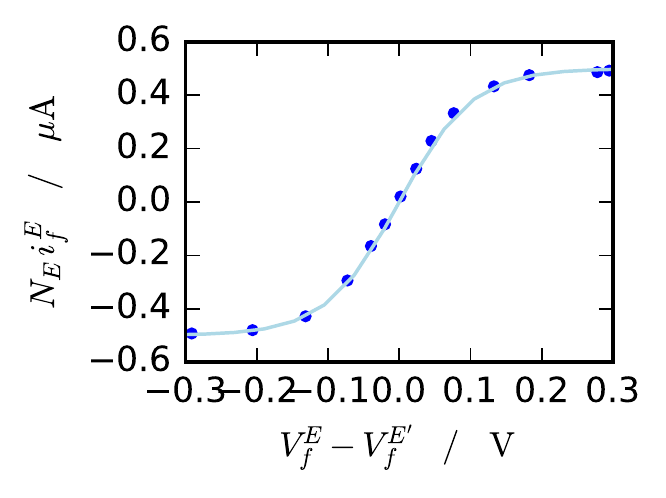}}
	\caption{
		Curve fitting of experimental voltammogram by using the proposed models
		for external and internal counter electrodes.
		$N_{E} i_{f}^{E}$ and $V_{f}^{E}$ corresponds to the total current
		and the potential applied to the working array,
		$V_{f}^{E'}$ is the potential at the counter array and
		$V^{R}$ is the potential at the reference electrode (saturated calomel electrode).
	}
	\label{main:fig:fitting}
\end{figure}

Note that, in case the internal counter electrode serves as reference electrode,
only a voltage (difference of potentials) can be applied to the cell.
Fig. \ref{main:fig:cE-cE':intC:noR} shows that
a sigmoidal shape is maintained under this condition,
of which its plateaus (limiting current) are also proportional to
the concentration of the limiting species, see Eq. \eqref{main:eqn:etaE-etaE':intC}.
Fig \ref{main:fig:etaE=DetaE} shows how the applied voltage is
distributed among the potentials of the working and counter electrodes.
When the voltage applied to the cell increases in one direction,
the potential at one electrode increases indefinitely, 
whereas the potential at the other electrode saturates in the opposite direction.
This is due to depletion of the limiting species $\lambda$
at the electrode with greatest potential (in absolute value).
This phenomenon has been observed experimentally in
\cite[Fig. 3]{Rahimi2010mar} \cite[Fig. 3.6]{Rahimi2009aug},
where figures similar to Fig. \ref{main:fig:etaE=DetaE} were obtained.
See \emph{Supplementay information} \S\ref{si:sec:voltammogram:intC:noR}
for more details on how Figs. \ref{main:fig:cE-cE':intC:noR} and
\ref{main:fig:etaE=DetaE} were generated.

\footnotetext{
	\cite[Fig. 7]{Aoki1988dec} actually used the set of electrodes (E)
	instead of (D).
	This can be seen by dividing the limiting current from
	\cite[Fig. 7]{Aoki1988dec} by the number of bands of (E)
	$\approx \SI{34}{\micro\ampere}/50 =\SI{0,68}{\micro\ampere}$,
	which corresponds to set (E) in \cite[Fig. 8]{Aoki1988dec}.
}

Finally, we show that the models in Eqs. \eqref{main:eqn:cE-cE':extC} and
\eqref{main:eqn:voltammogram:intC} are also applicable to fit experimental data.
Combining Eq. \eqref{main:eqn:NEif} with Eq. \eqref{main:eqn:cE-cE':extC},
for the case of external counter electrode, leads to\footnote{
	the upper sign corresponds to the case where
	the concentration of oxidized species at the complementary array is zero (at very negative potential), and the lower, to the case
	where the concentration of reduced species is zero
	(at very positive potential), see Corollary \ref{main:cor:if}.
}
\begin{equation}
	\label{main:eqn:model:extC}
	N_{E} i_{f}^{E}
	= \pm \frac{
		k_{0}
	}{
		\displaystyle
		1 + \exp\left(
			\mp \frac{F n_{e}}{R T} (V_{f}^{E} - V^{R} + k_{1})
		\right)
	}
\end{equation}
where the voltammogram is mirrored horizontally and vertically when changing
the polatiry at the complementary electrode \cite[\S3.3]{Wahl2018jul}.
Fig. \ref{main:fig:fitting:extC} shows that the model in Eq. \eqref{main:eqn:model:extC}
correctly fits the experimental data for the generator in \cite[Fig. 7]{Aoki1988dec},
whereas the data for the collector is slightly overestimated,
thus showing that the collection efficiency is near 100\%.

For the case of internal counter electrode with bands of equal width,
we combine Eqs. \eqref{main:eqn:NEif} with \eqref{main:eqn:etaE-etaE':intC}
when $D_{O} \bar{c}_{O,i}^{\whole} = D_{R} \bar{c}_{R,i}^{\whole}$, obtaining
\begin{equation}
	\label{main:eqn:model:intC}
	N_{E} i_{f}^{E}
	= k_{0}\tanh\left( \frac{F n_{e}}{4RT} (V_{f}^{E} - V_{f}^{E'}) \right)
\end{equation}
Fig. \ref{main:fig:fitting:intC} shows that this model also fits correctly
the experimental data in \cite[Fig. 2]{Rahimi2011} \cite[Fig. 3.12]{Rahimi2009aug},
both at the linear and the limiting current regions.
For details on the experimental data used in both curve fittings
see \emph{Supplementary information} \S\ref{si:sec:fitting}.

\begin{observacion}
	Note that the models in Eqs. \eqref{main:eqn:cE-cE':extC}
	and \eqref{main:eqn:voltammogram:intC} for the difference of concentrations
	(shape of the steady-state voltammogram), in particular Eqs.
	\eqref{main:eqn:model:extC} and \eqref{main:eqn:model:intC},
	are valid for any electrochemical cell satisfying the following conditions:
	(i) The cell has reversible electrode reactions (Nernstian boundary conditions).
	(ii) The current is proportional to the difference between concentrations at both electrodes.
	(iii) The weighted sum of concentrations at each electrode satisfies
	a relation similar to Eq. \eqref{main:eqn:cfE:total}.
	(iv) In case of internal counter electrode, the concentrations at both
	electrodes satisfy a relation similar to Eq. \eqref{main:eqn:cfE:average}.
\end{observacion}
\noindent This is because the previous conditions are
properties that depend only on the boundary conditions at the electrodes,
which are independent of the whole domain of the cell.

\subsection{Approximations for shallow and tall cells}
\label{main:approx}

The approximations of the normalized current $2K'(k_{\rho})/K(k_{\rho})$
in Eqs. \eqref{main:eqn:K'K:Hinf} for tall electrochemical cells
are already available in the literature,
and were obtained first by Aoki and colleagues,
for the case of large electrodes \cite[Eq. (32) and the equation above it]{Aoki1988dec},
and later by Morf and colleagues, for the case of small electrodes \cite[Eqs. (6) and (7)]{Morf2006may}.
Therefore, these results serve as a validation of the exact results
obtained in Corollary \ref{main:cor:if}.

\begin{figure}[t]
	\centering
	\subcaptionbox{Approximation in Eq. (\ref{main:eqn:K'K:Hinf:wEinf}) for large electrodes.}{
		\includegraphics{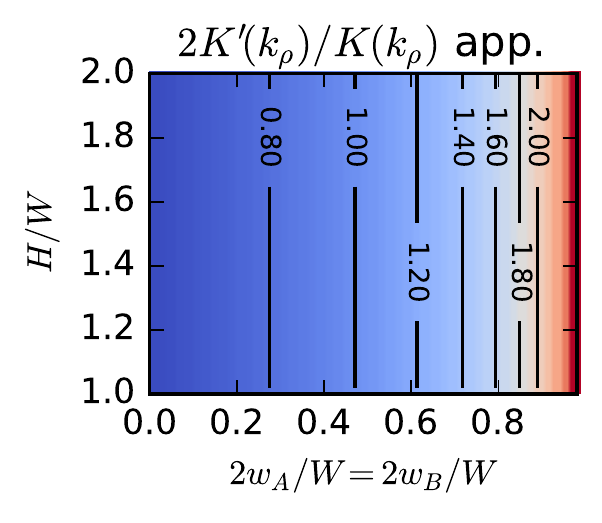}
		\includegraphics{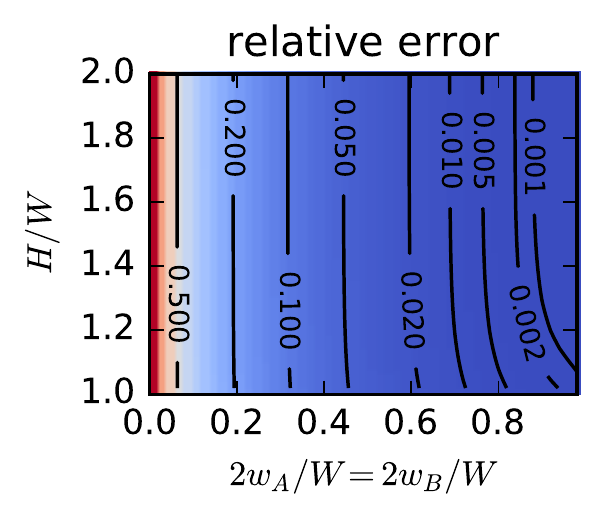}
	} \\
	\subcaptionbox{Approximation in Eq. (\ref{main:eqn:K'K:Hinf:wE0}) for small electrodes.}{
		\includegraphics{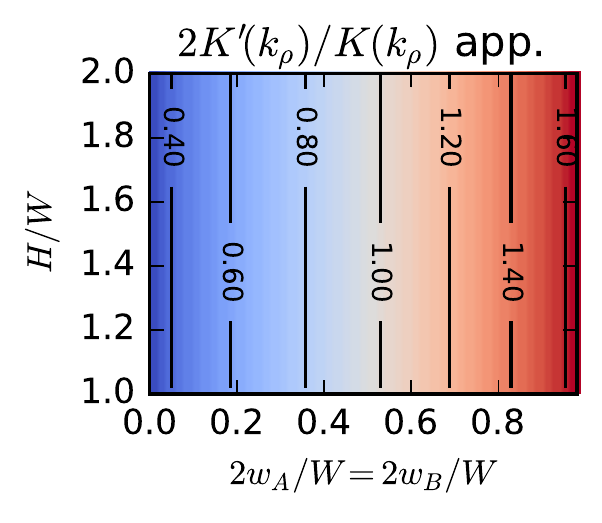}
		\includegraphics{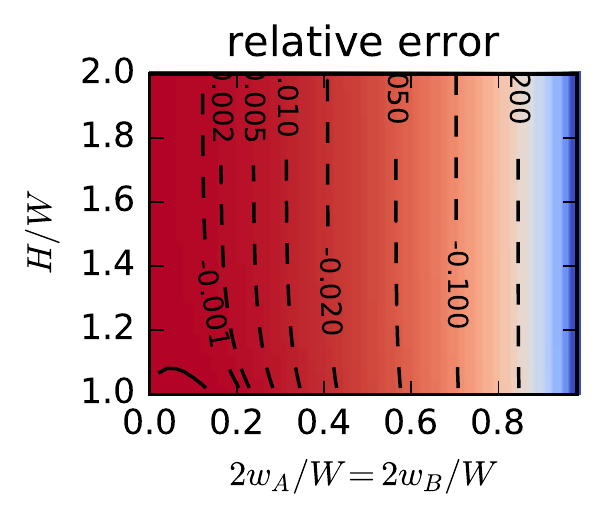}
	}
	\caption{Approximation of the normalized current $2 K'(k_{\rho})/K(k_{\rho}) = \pm (i_{f}^{E}/N_{E} L)/(F n_{e} D_{\sigma} [c_{\sigma,f}^{E} - c_{\sigma,f}^{E'}])$ for tall electrochemical cells ($H/W > 1$).}
	\label{main:fig:K'Krho:HW_tall}
\end{figure}

Fig. \ref{main:fig:K'Krho:HW_tall} shows the approximations
for tall cells $H/W > 1$ as well as their relative errors.
Here the current becomes independent of $H/W$ and
depends only on the relative size of the electrodes.
The approximation for large electrodes becomes accurate
(error less than \SI{+-5}{\percent}) for bands of relative width $> \num{0,46}$.
In the case of small electrodes, the approximation has an error
less than \SI{+-5}{\percent} for bands of relative width $< \num{0,56}$.
The regions of approximation for both sizes of electrodes overlap, thus
covering all cases of $H/W > 1$ with a relative error less than
\SI{+-5}{\percent} in the current.
See Table \ref{main:tab:if_approx} for a summary and
\emph{Supplementary information} \S\ref{si:sec:if_approx} for details on the errors.

\begin{figure}[t]
	\centering
	\subcaptionbox{Approximation in Eq. (\ref{main:eqn:K'K:H0:wEinf}) for large electrodes.}{
		\includegraphics{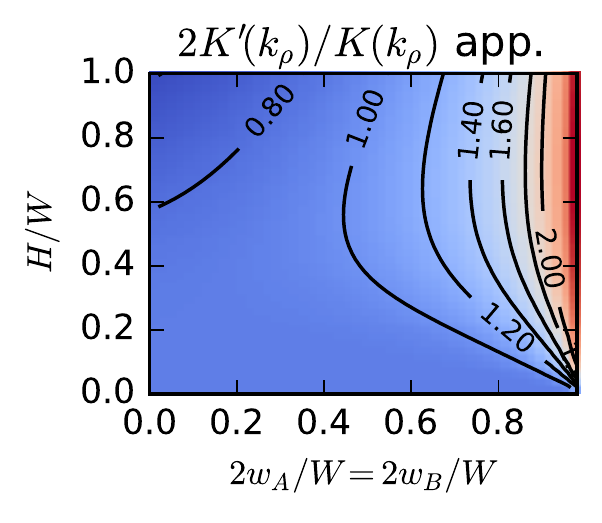}
		\includegraphics{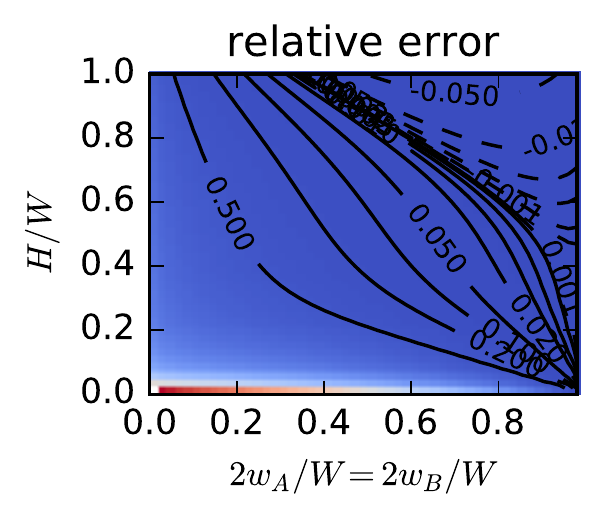}
	} \\
	\subcaptionbox{Approximation in Eq. (\ref{main:eqn:K'K:H0:wE0}) for small electrodes.}{
		\includegraphics{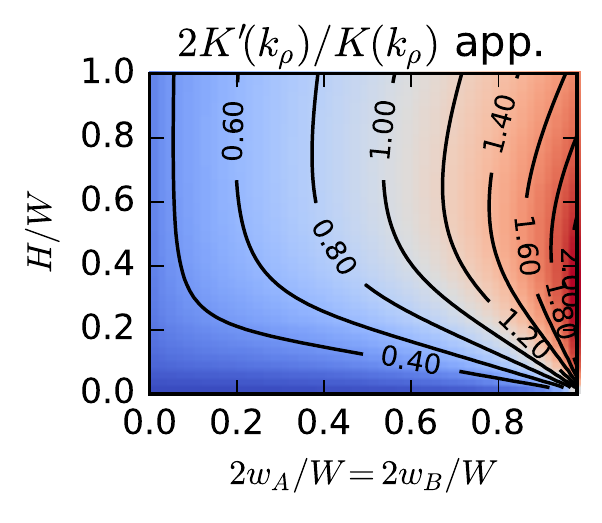}
		\includegraphics{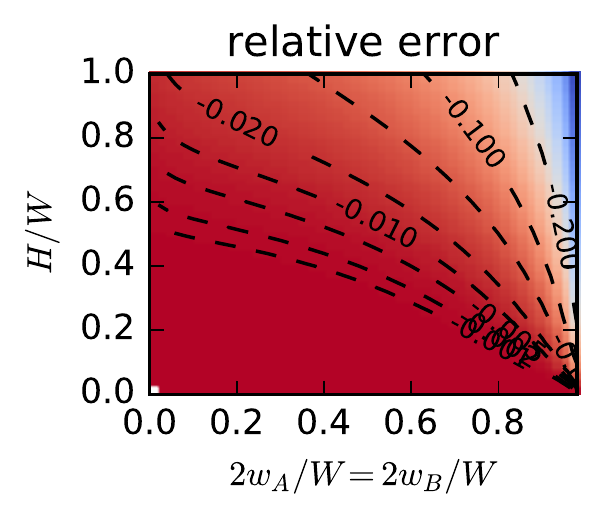}
	}
	\caption{Approximation of the normalized current $2 K'(k_{\rho})/K(k_{\rho}) = \pm (i_{f}^{E}/N_{E} L)/(F n_{e} D_{\sigma} [c_{\sigma,f}^{E} - c_{\sigma,f}^{E'}])$ for shallow electrochemical cells ($H/W \leq 1$).}
	\label{main:fig:K'Krho:HW_shallow}
\end{figure}

\begin{table}
	\centering
	\begin{tabular}{ccccc}
		\hline
		Approx. & Cell & Electrodes & Domain \\
		\hline
		Eq. (\ref{main:eqn:K'K:Hinf:wEinf}) & Tall    & Large &
		$2w_{E}/W > \num{0,46}$ \\
		Eq. (\ref{main:eqn:K'K:Hinf:wE0}) & Tall    & Small & 
		$2w_{E}/W < \num{0,56}$ \\
		Eq. (\ref{main:eqn:K'K:H0:wEinf})   & Shallow & Large & 
		$2w_{E} + (1 - \num{0,28})H \gtrsim W$ \\
		Eq. (\ref{main:eqn:K'K:H0:wE0})     & Shallow & Small &
		$2w_{E} + (1 - \num{0,36})H \lesssim W$ \\
		\hline
	\end{tabular}
	\caption{
		Regions where the approximations hold with a relative error
		less than \SI{+-5}{\percent} for tall ($H/W > 1$) and
		shallow ($H/W \leq 1$) cells.
	}
	\label{main:tab:if_approx}
\end{table}

Approximations of the normalized current $2K'(k_{\rho})/K(k_{\rho})$
in Eqs. \eqref{main:eqn:K'K:H0} for shallow electrochemical cells
are results that have not been published before.
Fig. \ref{main:fig:K'Krho:HW_shallow} shows the approximations
for shallow cells $H/W \leq 1$ as well as their relative errors.
The approximation for large electrodes becomes accurate
(error less than \SI{+-5}{\percent}) for combinations of $(2w_{E}/W, H/W)$
approximately at the right of the line given by the points
(1, 0) and (\num{0,28}, 1),
where the line that passes through the points $(1, 0)$ and $(p, 1)$ is given by
\begin{equation}
	\frac{2 w_{E}}{W} + (1 - p) \frac{H}{W} = 1
\end{equation}
In the case of small electrodes, the approximation has an error less than \SI{+-5}{\percent}
for combinations of $(2w_{E}/W, H/W)$ approximately at the left of
the line given by the points (1, 0) and (\num{0,36}, 1).
The regions of approximation for both sizes of electrodes overlap,
thus covering all cases of $H/W \leq 1$
with a relative error less than \SI{+-5}{\percent} in the current.
See Table \ref{main:tab:if_approx} for a summary and
\emph{Supplementary information} \S\ref{si:sec:if_approx} for details on the errors.


\section{Conclusions}

Thanks to Jacobian elliptic functions,
it is possible to transform the unit cell from the IDAE domain
into a parallel-plates domain.
In this last domain, the solution of the diffusion equation in steady state is simple,
and corresponds to a linear interpolation of the concentrations on both plates. This solution is transformed back into the IDAE domain,
leading to an analytical result for the concentration profile
that depends on elliptic functions and integrals.
Both, current density and current were derived from this concentration profile.

The results for the concentration profile, current density and current
depend geometrically on the relative dimensions of the cell, not on absolute dimensions.
Their behavior approaches that of an IDAE in a semi-infinite cell
as it becomes taller (approximately when the cell is taller than
the separation between centers of consecutive bands).

The shape obtained for the voltammogram is sigmoidal
and can be unipolar (it has either always positive or always negative currents)
or bipolar (presenting positive and negative currents)
depending on whether the IDAE is bipotentiostated using an external counter electrode,
or potentiostated using one of its arrays as internal counter electrode.
In case of using an external counter electrode,
the plateaus of current (limiting current) are proportional to
the weighted sum of initial concentrations
(total initial concentration when the diffusion coefficients are equal). Whereas, in case of using an internal counter electrode,
the plateaus of current (limiting current) are proportional to the limiting species,
which will be the species of least concentration if the diffusion coefficients are equal.

Approximations for the exact results were found.
Trigonometric and hyperbolic functions were used to approximate the cases of
tall and shallow cells respectively.
The approximations are accurate with a relative error smaller than $\pm 5\%$
with respect to the exact values.
Finally, when the approximations are used in combination,
they cover all possibilities of interest for IDAE in confined cells.


\section*{Acknowledgements}

The authors deeply appreciate the aid and comments of Dr. Mithran Somasundrum,
which helped to improve the quality of this manuscript.
The authors would like to thank the financial support provided by
\emph{King Mongkut’s University of Technology Thonburi} through the
\emph{KMUTT 55th Anniversary Commemorative Fund}, and the
\emph{Petchra Pra Jom Klao Ph.~D. scholarship} (Grant No. 28/2558)
for sponsoring {CFGY}.
Finally, the authors acknowledge 
the \emph{Higher Education Research	Promotion}
and \emph{National Research University Project of Thailand},
\emph{Office of the Higher Education Commission},
Ministry of Education, Thailand.

\bibliographystyle{unsrtnat}
\bibliography{references}


\clearpage

\setcounter{page}{1}
\renewcommand*{\thepage}{S\arabic{page}}  

\setcounter{section}{0}
\renewcommand*{\thesection}{S\arabic{section}}  
\renewcommand*{\theHsection}{\thesection}  

\setcounter{figure}{0}
\renewcommand*{\thefigure}{S\arabic{figure}}  
\renewcommand*{\theHfigure}{\thefigure}  

\setcounter{table}{0}
\renewcommand*{\thetable}{S\arabic{table}}  
\renewcommand*{\theHtable}{\thetable}  

\begin{center}
	\LARGE
	Supplementary information \\[0.5\baselineskip]
	\Large
	\documenttitle \\[0.5\baselineskip]
	\normalsize
	\givennames{Cristian F.} \familynames{Guajardo Yévenes} and
	\givennames{Werasak} \familynames{Surareungchai} \\[0.5\baselineskip]
	\footnotesize
	\emph{King Mongkut's University of Technology Thonburi, 49 Soi Thianthale 25, Thanon Bangkhunthian Chaithale, Bangkok 10150, Thailand}
	\normalsize\vspace{3em}
\end{center}


\section{Additional proofs}

\subsection{Transformation of the unit cell domain}
\label{si:proof:T}

\subsubsection{From \texorpdfstring{$\bm{r}$}{r} to \texorpdfstring{$\bm{v}$}{v} domain}

Considering the special values of the function $\arcsn()$
\citex{Table \dlmf[T]{22.5.}{1}}{dlmf}
\begin{subequations}
	\label{elipticas:eqn:arcsn:values}
	\begin{align}
		0 &= \arcsn(0,k) \\
		\label{main:eqn:K}
		\pm K(k) &= \arcsn(\pm 1,k) \\
		\pm K(k) + \bm{i} K'(k) &= \arcsn(\pm 1/k,k) \\
		\bm{i} K'(k) &= \arcsn(\bm{\infty},k)
	\end{align}
\end{subequations}
one can construct a function $(T_{r}^{v})^{-1}$
that maps the upper half-plane of $\bm{v}$ into the IDAE domain $\bm{r}$,
as shown in Fig. \ref{si:fig:T}.
The scale and translation of the function $\arcsn()$ must be chosen
such that $\bm{v}_{a} = -1$ is mapped to $\bm{r}_{a} = 0$ 
and $\bm{v}_{b} = 1$ is mapped to $\bm{r}_{b} = W$,
which leads to
\begin{subequations}
	\begin{gather}
		\bm{r} = (T_{r}^{v})^{-1}(\bm{v}) 
		= \frac{W}{2}
		\left[
			1 + \frac{1}{K(k_{r})}\arcsn(\bm{v},k_{r})
		\right]
		\\
		\label{main:eqn:r-v:alt}
		\bm{v} = T_{r}^{v}(\bm{r}) 
		= \sn\!\left(K(k_{r})\frac{2\bm{r}}{W} - K(k_{r}), k_{r}\right)
		= -\cd\!\left(K(k_{r})\frac{2\bm{r}}{W}, k_{r}\right)
	\end{gather}
\end{subequations}
which is corresponds to Eq. \eqref{main:eqn:r-v}
due to the quarter- and half-period properties
\citex{Table \dlmf[T]{22.4.}{3}}{dlmf}
\begin{equation}
	\label{main:eqn:sn-cd}
	\sn(\bm{u} - K(k),k) = \sn(\bm{u} + K(k) - 2K(k),k) = -\cd(\bm{u},k)
\end{equation}

The appropriate modulus $k_{r}$ can be obtained by forcing $\bm{v}_{m} = -1/k_{r}$
to be mapped to the upper-left corner $\bm{r}_{m} = \bm{i}H$
\begin{equation}
	\label{main:eqn:K1Kr}
	\bm{r}_{m} = (T_{r}^{v})^{-1}(\bm{v}_{m})
	\Leftrightarrow
	\bm{i}H = \bm{i} \frac{W}{2} \frac{K'(k_{r})}{K(k_{r})}
	\Leftrightarrow
	\frac{K'(k_{r})}{K(k_{r})} = \frac{2H}{W}
\end{equation}
which leads to Eq. \eqref{main:eqn:kr}, by applying the inverse nome function
Eq. \eqref{elipticas:eqn:nomo} \citex{Eqs. (\dlmf[E]{19.2.}{9}) and (\dlmf[E]{22.2.}{1})}{dlmf}.
Therefore, the function $T_{r}^{v}$ in Eq. (\ref{main:eqn:r-v:alt})
transforms the IDAE domain $\bm{r}$ into the upper half-plane $\bm{v}$
as shown in Fig. \ref{si:fig:T}.

Finally, the point $\bm{v}_{\alpha}$, in Eq. \eqref{main:eqn:vAB:A},
is obtained directly by evaluating Eq. (\ref{main:eqn:r-v:alt}).
Whereas the point $\bm{v}_{\beta}$, in Eq. \eqref{main:eqn:vAB:B},
is obtained by evaluating Eq. \eqref{main:eqn:r-v:alt},
and by combining the fact that $\sn()$ is an odd function and Eq. \eqref{main:eqn:sn-cd}.

\begin{figure*}
	\centering
	\includegraphics{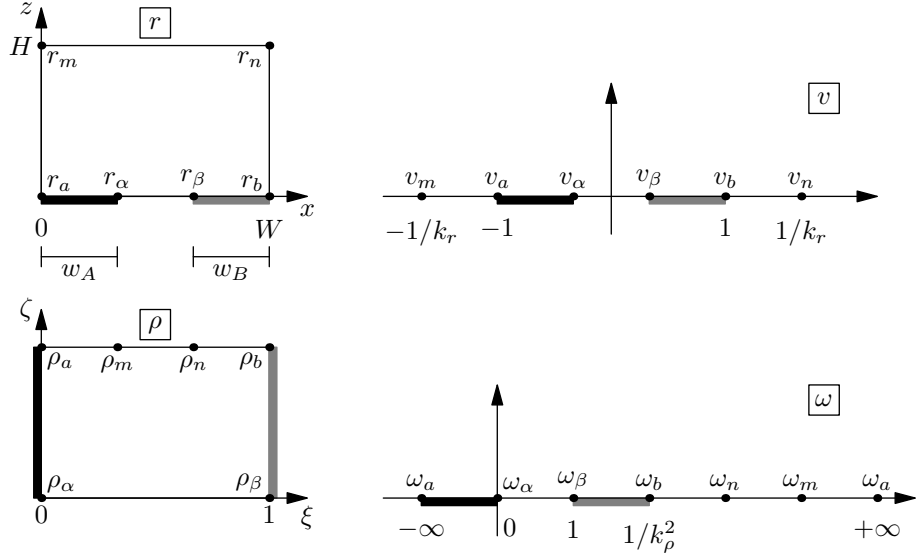}
	\caption{
		Complex transformation of the IDAE domain $\bm{r}=(x,z)$
		into the conformal parallel-plates domain $\bm{\rho}=(\xi,\zeta)$,
		by using the auxiliary complex domains $\bm{v}$ and $\bm{\omega}$.
	}
	\label{si:fig:T}
\end{figure*}

\subsubsection{From \texorpdfstring{$\bm{v}$}{v} to \texorpdfstring{$\bm{\omega}$}{ω} domain}


The function $T_{v}^{\omega}$ in Eq. \eqref{main:eqn:v-omega},
which is also shown below for convenience,
\begin{equation}
	\label{main:eqn:v-omega:alt}
	\bm{\omega} = T_{v}^{\omega}(\bm{v}) =
	\frac{(\bm{v}-\bm{v}_{\alpha})}{(\bm{v}-\bm{v}_{a})}
	\frac{(\bm{v}_{\beta}-\bm{v}_{a})}{(\bm{v}_{\beta}-\bm{v}_{\alpha})}
\end{equation}
corresponds to a Möbius function and it is constructed such that:
(i) it maps the upper half-plane of $\bm{v}$ into the upper half-plane of $\bm{\omega}$ and
(ii) it maps the half band of $A$ to the negative real axis of $\bm{\omega}$
and the half band of $B$ to the real interval $\bm{\omega} \in [1,\bm{\omega}_{b}]$,
see Fig. \ref{si:fig:T}.

Condition (ii) is immediately achieved,
since Eq. (\ref{main:eqn:v-omega:alt}) maps $\bm{v}_{a}$, $\bm{v}_{\alpha}$ and $\bm{v}_{\beta}$
into $\bm{\omega}_{a}=\bm{\infty}$, $\bm{\omega}_{\alpha}=0$ and $\bm{\omega}_{\beta}=1$.
Condition (i) can be analyzed by rewritting Eq. \eqref{main:eqn:v-omega:alt}
as a composition of translations, rotations and scalings, and inversion
\begin{equation}
	\bm{\omega} = T_{v}^{\omega}(\bm{v})
	= \left[
	1 + \frac{\bm{v}_{a} - \bm{v}_{\alpha}}{\bm{v} - \bm{v}_{a}}
	\right]
	\underbrace{\left[
		\frac{\bm{v}_{\beta}-\bm{v}_{a}}{\bm{v}_{\beta}-\bm{v}_{\alpha}}
		\right]}_{p}
\end{equation}
The fact that $\bm{v}_{a} \in \mathds{R}$, $\bm{v}_{a} - \bm{v}_{\alpha} < 0$ and $p > 0$
ensures that the upper half-plane of $\bm{v}$
is mapped to the upper half-plane of $\bm{\omega}$.
See \cite[Eq. (5.7.3)]{Ablowitz2003apr},
\cite[\S V.2 Eqs. (6), (7) and (10)]{Nehari1952}
or \cite[Examples 5.3 and 5.4]{Olver2018mar}
for more details on decomposition and mapping of Möbius functions.

\subsubsection{From \texorpdfstring{$\bm{\omega}$}{ω} to \texorpdfstring{$\bm{\rho}$}{ρ} domain}

The function $T_{\omega}^{\rho}$ in Eq. \eqref{main:eqn:omega-rho},
which is written below for convenience,
\begin{equation}
	\label{main:eqn:omega-rho:alt}
	\bm{\rho} = T_{\omega}^{\rho}(\bm{\omega})
	= \frac{1}{K(k_{\rho})} \arcsn(\sqrt{\bm{\omega}}, k_{\rho})
\end{equation}
is in charge of mapping the upper half-plane of $\bm{\omega}$
into the parallel-plates domain $\bm{\rho}$, see Fig. \ref{si:fig:T}.
This is achieved in two stages:
(i) The upper half-plane of $\bm{\omega}$ is mapped to the first quadrant of $\sqrt{\bm{\omega}}$,
such that the half band of $A$ is mapped to the positive imaginary axis of $\sqrt{\bm{\omega}}$
and the half band of $B$ is mapped to the real interval $\sqrt{\bm{\omega}} \in [1,\sqrt{\bm{\omega}_{b}}]$.
(ii) The first quadrant of $\sqrt{\bm{\omega}}$
is mapped to the parallel-plates domain $\bm{\rho}$
by using the special values of $\arcsn()$ 
in Eqs. \eqref{elipticas:eqn:arcsn:values} or \citex{Table \dlmf[T]{22.5.}{1}}{dlmf}.

The scaling and translation of the function $\arcsn()$ are chosen such that
$\sqrt{\bm{\omega}_{\alpha}}=0$ is mapped to $\bm{\rho}_{\alpha}=0$
and $\sqrt{\bm{\omega}_{\beta}}=1$ is mapped to $\bm{\rho}_{\beta}=1$,
which leads to Eq. \eqref{main:eqn:omega-rho:alt}.
The appropriate modulus $k_{\rho}$ is obtained by choosing $\sqrt{\bm{\omega}_{b}}=1/k_{\rho}$,
such that $\sqrt{\bm{\omega}_{b}}$ be mapped to the upper right corner $\bm{\rho}_{b}$ of the parallel-plates domain,
leading to Eq. \eqref{main:eqn:krho}.

Finally, the point $\bm{\rho}_{a}$, in Eqs. \eqref{main:eqn:rhoAB},
can be obtained directly by evaluating $\sqrt{\bm{\omega}_{a}} = \infty$
in Eq. \eqref{main:eqn:omega-rho:alt}, by using the properties in \eqref{elipticas:eqn:arcsn:values}.

\subsubsection{Conformality of the transformation}

The conformality of $T_{r}^{\rho}$ comes from the fact that
$T_{r}^{\rho}$ has non-zero complex derivative in the interior of the IDAE domain.
This can be seen from Eq. \eqref{main:eqn:dTdr} or \eqref{si:eqn:dTdr},
where the zeros of $\nd(K(k_{r}) 2\bm{r}/W, k_{r})$
\begin{equation}
	\bm{r} = mW + (2n+1) \bm{i} H\quad m, n \in \mathds{Z}
\end{equation}
lay only on the top vertices of the IDAE domain.
These zeros are obtained by using \citex{Table \dlmf[T]{22.4.}{2}}{dlmf} and Eq. \eqref{main:eqn:K1Kr}.

\subsection{Derivative of the domain transformation}
\label{si:proof:dTdr}

This proof concerns about obtaining the complex derivative of
the domain transformation $T_{r}^{\rho}$ in Eqs. \eqref{main:eqn:T},
which is written below for convenience
\begin{subequations}
	\label{si:eqn:T}
	\begin{align}
		\bm{\rho} = T_{r}^{\rho}(\bm{r})
		&= T_{\omega}^{\rho} \circ T_{v}^{\omega} \circ T_{r}^{v}(\bm{r})
		\\
		\bm{\rho} = T_{\omega}^{\rho}(\bm{\omega})
		&= \frac{1}{K(k_{\rho})} \arcsn(\sqrt{\bm{\omega}}, k_{\rho})
		\\
		\bm{\omega} = T_{v}^{\omega}(\bm{v})
		&= \frac{(\bm{v}-\bm{v}_{\alpha})}{(\bm{v}-\bm{v}_{a})}
		\frac{(\bm{v}_{\beta}-\bm{v}_{a})}{(\bm{v}_{\beta}-\bm{v}_{\alpha})}
		\\
		\bm{v} = T_{r}^{v}(\bm{r})
		&= -\cd\!\left(K(k_{r}) \frac{2\bm{r}}{W}, k_{r}\right)
	\end{align}
\end{subequations}

First, the complex derivative of each component of $T_{r}^{\rho}$,
in Eq. \eqref{si:eqn:T}, is taken
\begin{subequations}
	\begin{align}
		\parderiv{\bm{\rho}}{\bm{\omega}} &=
		\frac{1}{2 K(k_{\rho})}\,
		\frac{1}{
			\bm{\omega}^{1/2} (1-\bm{\omega})^{1/2} (1-k_{\rho}^{2}\bm{\omega})^{1/2}
		}
		\\
		\parderiv{\bm{\omega}}{\bm{v}} &=
		\frac{
			\textcolor{Blue1}{(\bm{v}_{\alpha}-\bm{v}_{a})}
		}{
			\textcolor{Red2}{(\bm{v}-\bm{v}_{a})^{2}}
		}
		\frac{
			\textcolor{Yellow4}{(\bm{v}_{\beta}-\bm{v}_{a})}
		}{
			\textcolor{Green4}{(\bm{v}_{\beta}-\bm{v}_{\alpha})}
		}
		\\
		\parderiv{\bm{v}}{\bm{r}} &=
		{k_{r}'}^{2} \sd\!\left(K(k_{r}) \frac{2\bm{r}}{W},k_{r}\right)
		\nd\!\left(K(k_{r}) \frac{2\bm{r}}{W},k_{r}\right)
		\frac{2}{W} K(k_{r})
	\end{align}
\end{subequations}
where
\begin{subequations}
	\begin{align}
		\bm{\omega} &=
		\frac{
			(\bm{v}-\bm{v}_{\alpha}) \textcolor{Yellow4}{(\bm{v}_{\beta}-\bm{v}_{a})}
		}{
			\textcolor{Red2}{(\bm{v}-\bm{v}_{a})} \textcolor{Green4}{(\bm{v}_{\beta}-\bm{v}_{\alpha})}
		}
		\\
		(1 - \bm{\omega}) &=
		\frac{
			(\bm{v}-\bm{v}_{\beta}) \textcolor{Blue1}{(\bm{v}_{\alpha}-\bm{v}_{a})}
		}{
			\textcolor{Red2}{(\bm{v}-\bm{v}_{a})} \textcolor{Green4}{(\bm{v}_{\beta}-\bm{v}_{\alpha})}
		} \e^{\bm{i}(\pi + 2\pi\mathds{Z})}
		\\
		(1 - k_{\rho}^{2}\bm{\omega}) &=
		\frac{
			(\bm{v}_{b}-\bm{v}) \textcolor{Blue1}{(\bm{v}_{\alpha}-\bm{v}_{a})}
		}{
			\textcolor{Red2}{(\bm{v}-\bm{v}_{a})} (\bm{v}_{b}-\bm{v}_{\alpha})
		}
	\end{align}
\end{subequations}
since $1/k_{\rho}^{2} = \bm{\omega}_{b} = T_{v}^{\omega}(\bm{v}_{b})$. 
Later, two components are combined
\begin{equation}
	\parderiv{\bm{\rho}}{\bm{\omega}} \parderiv{\bm{\omega}}{\bm{v}} =
	\frac{\e^{-\bm{i}(\pi/2 + \pi\mathds{Z})}}{2 K(k_{\rho})}\,
	\frac{
		(\bm{v}_{b}-\bm{v}_{\alpha})^{1/2} 
		\textcolor{Yellow4}{(\bm{v}_{\beta}-\bm{v}_{a})^{1/2}}
	}{
	(\bm{v}-\bm{v}_{\alpha})^{1/2} (\bm{v}-\bm{v}_{\beta})^{1/2}
	}
	\frac{1}{
		\textcolor{Red2}{(\bm{v}-\bm{v}_{a})^{1/2}}
		(\bm{v}_{b}-\bm{v})^{1/2}
	}
\end{equation}
which leads to an expression that consists of two complex branches:
$\e^{-\bm{i}(\pi/2 + \pi\mathds{Z})} = \pm \bm{i}$.
Using the identity $1-\cd(\bm{u},k)^{2} = {k'}^{2} \sd(\bm{u},k)^{2}$ from
\citex{Eq. (\dlmf[E]{22.6.}{4})}{dlmf} \cite[Eq. (1.1)]{Carlson2004nov}
and $\bm{v}_{a} = -1$ and $\bm{v}_{b} = 1$ from Fig. \ref{si:fig:T}
\begin{equation}
	(\bm{v}-\bm{v}_{a})(\bm{v}_{b}-\bm{v})
	= 1-\bm{v}^{2}
	= {k_{r}'}^{2} \sd\!\left(K(kr) \frac{2\bm{r}}{W}, kr\right)^{2}
\end{equation}
the final expression for the complex derivative can be obtained
\begin{equation}
	\label{si:eqn:dTdr}
	\parderiv{\bm{\rho}}{\bm{\omega}}
	\parderiv{\bm{\omega}}{\bm{v}}
	\parderiv{\bm{v}}{\bm{r}}
	=
	\pm \bm{i} \frac{k_{r}'}{W} \frac{K(k_{r})}{K(k_{\rho})}\,
	\frac{
		(\bm{v}_{b}-\bm{v}_{\alpha})^{1/2} (\bm{v}_{\beta}-\bm{v}_{a})^{1/2}
	}{
		(\bm{v}-\bm{v}_{\alpha})^{1/2} (\bm{v}-\bm{v}_{\beta})^{1/2}
	}
	\nd\!\left(K(kr) \frac{2\bm{r}}{W}, kr\right)
\end{equation}

In case $c_{O,f}^{B} > c_{O,f}^{A}$,
the current density at each electrode band $B$ must be positive
\begin{equation}
	j_{f}(x) =
	F n_e D_{O} [c_{O,f}^{B} - c_{O,f}^{A}]
	\Im \parderiv{\bm{\rho}}{\bm{r}}(x) 
\end{equation}
In order to achieve this,
the $+\bm{i}$ branch of the complex derivative must be chosen as the one carrying physical meaning,
such that the current density be positive on each electrode band $B$.

\subsection{Differences of concentrations when using internal counter electrode}
\label{si:proof:cE-cbar}

\subsubsection{Difference with respect to the average in the unit cell}

Inside the unit cell the average concentration satisfies \cite[Remark 2.2]{GuajardoYevenes2013sep}
\begin{subequations}
	\begin{gather}
	\int_{\unit} c_{\sigma,f}(x,z) - \bar{c}_{\sigma,i}^{\unit} \ud{x}ç
	= 0
	\\
	\intertext{%
		(either when an integer number of unit cells fits exactly in the whole cell,
		or when it doesn't, at least the number of unit cells is large enough)
		since the counter electrode is internal to the IDAE.
		Considering that the concentrations at the bands $A$ and $B$ are uniform,
		due to reversible eletrode reactions, one obtains
	}
	w_{A} [c_{\sigma,f}^{A} - \bar{c}_{\sigma,i}^{\unit}]
	+ \int_{w_{A}}^{W-w_{B}} 
	c_{\sigma,f}(x,z) - \bar{c}_{\sigma,i}^{\unit}
	\ud{x}
	+ w_{B} [c_{\sigma,f}^{B} - \bar{c}_{\sigma,i}^{\unit}]
	= 0
	\\
	\intertext{%
		If the bands have equal width $2w_{A} = 2w_{B}$,
		the integral in the gap between consecutive bands is zero,
		due to symmetry, which leads to
	}
	[c_{\sigma,f}^{A} - \bar{c}_{\sigma,i}^{\unit}]
	+ [c_{\sigma,f}^{B} - \bar{c}_{\sigma,i}^{\unit}]
	= 0
	\end{gather}
\end{subequations}

\subsubsection{Difference with respect to the average in the whole cell}

Note that the average in the unit cell may not equal that in the whole cell
\begin{subequations}
	\begin{gather}
	\int_{\whole} c_{\sigma,f}(x,z) - \bar{c}_{\sigma,i}^{\whole} \ud{x} = 0
	\\
	\intertext{this can be seen by separating the integral inside the IDAE and outside it (IDAE$'$)}
	\int_{\IDAE} c_{\sigma,f}(x,z) - \bar{c}_{\sigma,i}^{\whole} \ud{x}
	+ \int_{\IDAE'} c_{\sigma,f}(x,z) - \bar{c}_{\sigma,i}^{\whole} \ud{x}
	= 0
	\\
	2W N_{E} [\bar{c}_{\sigma,i}^{\unit} - \bar{c}_{\sigma,i}^{\whole}]
	+ \int_{\IDAE'} c_{\sigma,f}(x,z) - \bar{c}_{\sigma,i}^{\whole} \ud{x}
	= 0
	\\
	\intertext{Since the thickness of the fully-developed diffusion layer is in the order of the center-to-center separation between consecutive bands (namely $k W$), and any concentration outside the diffusion layer should equal the bulk $\bar{c}_{\sigma,i}^{\whole}$, then}
	2W N_{E} [\bar{c}_{\sigma,i}^{\unit} - \bar{c}_{\sigma,i}^{\whole}]
	+ 2 kW [c_{\sigma,f}(x_{o}, z) - \bar{c}_{\sigma,i}^{\whole}]
	= 0
	\end{gather}
\end{subequations}
where $x_{o}$ is some value within the diffusion layer,
due to the \emph{first mean value theorem for integrals}
\citex{Eq. (\dlmf[E]{1.4.}{29})}{dlmf}
(note that the choice of $k$ and $x_{o}$ depends on $z$).

Therefore, the larger the number of bands $N_{E}$, the closer the averages
$\bar{c}_{\sigma,i}^{\unit} = \bar{c}_{\sigma,i}^{\whole}$.
This leads to the following equality when the bands satisfy $2w_{A} = 2w_{B}$
\begin{equation}
	\label{si:eqn:cA-cbar=cB-cbar}
	[c_{\sigma,f}^{A} - \bar{c}_{\sigma,i}^{\whole}]
	= -[c_{\sigma,f}^{B} - \bar{c}_{\sigma,i}^{\whole}]
\end{equation}
due to the result in the previous section,
which allows to estimate the unknown concentration on the internal counter electrode.

\subsubsection{Limits for the differences of concentration}

To obtain the limits for the difference of concentration in steady state,
one considers Eq. \eqref{si:eqn:cA-cbar=cB-cbar} with non-negative concentrations on all electrodes
\begin{subequations}
	\begin{align}
	-\bar{c}_{\sigma,i}^{\whole}
	\leq [c_{\sigma,f}^{E} - \bar{c}_{\sigma,i}^{\whole}]
	&= -[c_{\sigma,f}^{E'} - \bar{c}_{\sigma,i}^{\whole}]
	\leq \bar{c}_{\sigma,i}^{\whole}
	\\ {}
	-\bar{c}_{\sigma',i}^{\whole}
	\leq [c_{\sigma',f}^{E} - \bar{c}_{\sigma',i}^{\whole}]
	&= -[c_{\sigma',f}^{E'} - \bar{c}_{\sigma',i}^{\whole}]
	\leq \bar{c}_{\sigma',i}^{\whole}
	\end{align}
\end{subequations}
where $E'$ are the complementary bands of $E \in \{A, B\}$
and $\sigma'$ is the complementary species of $\sigma \in \{O,R\}$.

The limits for the concentration of species $\sigma'$
may also affect the limits for the concentration of species $\sigma$.
This can be seen by applying the weighted sum of concentrations in Eq. \eqref{main:eqn:cf:total}
at the bands $E$ and $E'$
\begin{subequations}
	\begin{align}
	D_{\sigma} [c_{\sigma,f}^{E} - \bar{c}_{\sigma,i}^{\whole}]
	+ D_{\sigma'} [c_{\sigma',f}^{E} - \bar{c}_{\sigma',i}^{\whole}] &= 0
	\\
	D_{\sigma} [c_{\sigma,f}^{E'} - \bar{c}_{\sigma,i}^{\whole}]
	+ D_{\sigma'} [c_{\sigma',f}^{E'} - \bar{c}_{\sigma',i}^{\whole}] &= 0
	\end{align}
\end{subequations}
which agrees with \cite[Eqs. (15) and (16)]{Morf2006may} when $D_{O} = D_{R}$.

Combining the last four equations leads to
\begin{subequations}
	\begin{align}
	-D_{\sigma} \bar{c}_{\sigma,i}^{\whole}
	\leq D_{\sigma} [c_{\sigma,f}^{E} - \bar{c}_{\sigma,i}^{\whole}]
	&= -D_{\sigma} [c_{\sigma,f}^{E'} - \bar{c}_{\sigma,i}^{\whole}]
	\leq D_{\sigma} \bar{c}_{\sigma,i}^{\whole}
	\\ {}
	-D_{\sigma'} \bar{c}_{\sigma',i}^{\whole}
	\leq D_{\sigma} [c_{\sigma,f}^{E} - \bar{c}_{\sigma,i}^{\whole}]
	&= -D_{\sigma} [c_{\sigma,f}^{E'} - \bar{c}_{\sigma,i}^{\whole}]
	\leq D_{\sigma'} \bar{c}_{\sigma',i}^{\whole}
	\end{align}
\end{subequations}
which can be summarized as
\begin{equation}
	-D_{\lambda} \bar{c}_{\lambda,i}^{\whole}
	\leq D_{\sigma} [c_{\sigma,f}^{E} - \bar{c}_{\sigma,i}^{\whole}]
	= -D_{\sigma} [c_{\sigma,f}^{E'} - \bar{c}_{\sigma,i}^{\whole}]
	\leq D_{\lambda} \bar{c}_{\lambda,i}^{\whole}
\end{equation}
when defining $D_{\lambda} \bar{c}_{\lambda,i}^{\whole} = \min(D_{O} \bar{c}_{O,i}^{\whole}, D_{R} \bar{c}_{R,i}^{\whole})$.

\subsection{Difference of potential (voltage) when using internal counter electrode}
\label{si:proof:etaE-etaE'}

Consider Eq. \eqref{main:eqn:cE-cE':intC} with
$c_{\sigma,f}^{E} - c_{\sigma,f}^{E'} 
= 2[c_{\sigma,f}^{E} - \bar{c}_{\sigma,i}^{\whole}]
= -2[c_{\sigma,f}^{E'} - \bar{c}_{\sigma,i}^{\whole}]$,
which is written below for convenience
\begin{subequations}
	\begin{align}
		\frac{
			c_{\sigma,f}^{E} - \bar{c}_{\sigma,i}^{\whole}
		}{
			\bar{c}_{\sigma,i}^{\whole}
		}
		&= \frac{
			D_{\sigma'} \bar{c}_{\sigma',i}^{\whole}
			- D_{\sigma'} \bar{c}_{\sigma',i}^{\whole}
			\e^{\mp (\eta_{f}^{E} - \eta_{\nul})}
		}{
			D_{\sigma} \bar{c}_{\sigma,i}^{\whole}
			+ D_{\sigma'} \bar{c}_{\sigma',i}^{\whole}
			\e^{\mp (\eta_{f}^{E} - \eta_{\nul})}
		}
		= \frac{
			1 - \e^{\mp (\eta_{f}^{E} - \eta_{\nul})}
		}{
			r_{\sigma}^{-1} + \e^{\mp (\eta_{f}^{E} - \eta_{\nul})}
		}
		\\
		\intertext{%
			where $r_{\sigma} 
			= (D_{\sigma'} \bar{c}_{\sigma',i}^{\whole})
			/(D_{\sigma} \bar{c}_{\sigma,i}^{\whole})$.
			A similar expressions is obtained at the complementary electrode $E'$}
		-\frac{
			c_{\sigma,f}^{E} - \bar{c}_{\sigma,i}^{\whole}
		}{
			\bar{c}_{\sigma,i}^{\whole}
		}
		&= \frac{
			D_{\sigma'} \bar{c}_{\sigma',i}^{\whole}
			- D_{\sigma'} \bar{c}_{\sigma',i}^{\whole}
			\e^{\mp (\eta_{f}^{E'} - \eta_{\nul})}
		}{
			D_{\sigma} \bar{c}_{\sigma,i}^{\whole}
			+ D_{\sigma'} \bar{c}_{\sigma',i}^{\whole}
			\e^{\mp (\eta_{f}^{E'} - \eta_{\nul})}
		}
		= \frac{
			1 - \e^{\mp (\eta_{f}^{E'} - \eta_{\nul})}
		}{
			r_{\sigma}^{-1} + \e^{\mp (\eta_{f}^{E'} - \eta_{\nul})}
		}
	\end{align}
\end{subequations}

Now taking their inverses
\begin{subequations}
	\label{si:eqn:etaE}
	\begin{align}
		\e^{\mp (\eta_{f}^{E} - \eta_{\nul})}
		&= \frac{
			1 - r_{\sigma}^{-1}
			[c_{\sigma,f}^{E} - \bar{c}_{\sigma,i}^{\whole}]
			/\bar{c}_{\sigma,i}^{\whole}
		}{
			1 + \phantom{r_{\sigma}^{-1}}
			[c_{\sigma,f}^{E} - \bar{c}_{\sigma,i}^{\whole}]
			/\bar{c}_{\sigma,i}^{\whole}
		}
		\\
		\e^{\mp (\eta_{f}^{E'} - \eta_{\nul})}
		&= \frac{
			1 + r_{\sigma}^{-1}
			[c_{\sigma,f}^{E} - \bar{c}_{\sigma,i}^{\whole}]
			/\bar{c}_{\sigma,i}^{\whole}
		}{
			1 - \phantom{r_{\sigma}^{-1}}
			[c_{\sigma,f}^{E} - \bar{c}_{\sigma,i}^{\whole}]
			/\bar{c}_{\sigma,i}^{\whole}
		}
	\end{align}
\end{subequations}
and combining both expressions
\begin{equation}
	\e^{\pm (\eta_{f}^{E} - \eta_{f}^{E'})}
	= \frac{
		\displaystyle
		1 + \phantom{r_{\sigma}^{-1}}
		\frac{
			c_{\sigma,f}^{E} - \bar{c}_{\sigma,i}^{\whole}
		}{
			\bar{c}_{\sigma,i}^{\whole}
		}
	}{
		\displaystyle
		1 - r_{\sigma}^{-1}
		\frac{
			c_{\sigma,f}^{E} - \bar{c}_{\sigma,i}^{\whole}
		}{
			\bar{c}_{\sigma,i}^{\whole}
		}
	}
	\cdot \frac{
		\displaystyle
		1 + r_{\sigma}^{-1}
		\frac{
			c_{\sigma,f}^{E} - \bar{c}_{\sigma,i}^{\whole}
		}{
			\bar{c}_{\sigma,i}^{\whole}
		}
	}{
		\displaystyle
		1 - \phantom{r_{\sigma}^{-1}}
		\frac{
			c_{\sigma,f}^{E} - \bar{c}_{\sigma,i}^{\whole}
		}{
			\bar{c}_{\sigma,i}^{\whole}
		}
	}
\end{equation}
leads to
\begin{equation}
	\e^{\pm (\eta_{f}^{E} - \eta_{f}^{E'})}
	= \frac{
		\displaystyle
		1 + \frac{
			c_{\sigma,f}^{E} - \bar{c}_{\sigma,i}^{\whole}
		}{
			\bar{c}_{\sigma,i}^{\whole}
		}
	}{
		\displaystyle
		1 - \frac{
			c_{\sigma,f}^{E} - \bar{c}_{\sigma,i}^{\whole}
		}{
			\bar{c}_{\sigma,i}^{\whole}
		}
	}
	\cdot \frac{
		\displaystyle
		1 + r_{\sigma}^{-1}
		\frac{
			c_{\sigma,f}^{E} - \bar{c}_{\sigma,i}^{\whole}
		}{
			\bar{c}_{\sigma,i}^{\whole}
		}
	}{
		\displaystyle
		1 - r_{\sigma}^{-1}
		\frac{
			c_{\sigma,f}^{E} - \bar{c}_{\sigma,i}^{\whole}
		}{
			\bar{c}_{\sigma,i}^{\whole}
		}
	}
\end{equation}
after reorganizing the factors.

Finally, by using the hyperbolic identity \citex{Eq. (\dlmf[E]{4.37.}{24})}{dlmf}
\begin{equation}
	\arctanh z = \frac{1}{2} \ln\left( \frac{1 + z}{1 - z} \right)
\end{equation}
one obtains
\begin{equation}
	\begin{split}
		\pm (\eta_{f}^{E} - \eta_{f}^{E'})
		&= 2 \arctanh\left( \frac{
			c_{\sigma,f}^{E} - \bar{c}_{\sigma,f}^{\whole}
		}{
		\bar{c}_{\sigma,f}^{\whole}
		} \right)
		\\
		&+ 2 \arctanh\left( \frac{
			D_{\sigma} \bar{c}_{\sigma,f}^{\whole}
		}{
		D_{\sigma'} \bar{c}_{\sigma',f}^{\whole}
		}
		\frac{
			c_{\sigma,f}^{E} - \bar{c}_{\sigma,f}^{\whole}
		}{
		\bar{c}_{\sigma,f}^{\whole}
		} \right)
	\end{split}
\end{equation}
which leads to Eq. \eqref{main:eqn:etaE-etaE':intC} after applying the relation
in Eq. \eqref{main:eqn:cE':intC} $c_{\sigma,f}^{E} - c_{\sigma,f}^{E'}
= 2 [c_{\sigma,f}^{E} - \bar{c}_{\sigma,f}^{\whole}]$.

\subsection{Approximations for $k_{\rho}$ and $k_{\rho}'$}

\subsubsection{Case of tall cells}
\label{si:proof:K'K:Hinf}

Consider first the following approximations when the nome $q = Q(k)$
tends to zero \cite[(10.1)--(10.3)]{Fenton1982jul}
\begin{subequations}
	\label{main:eqn:jacobi:q0}
	\begin{align}
		\lim_{q \to 0^{+}} \sn(\bm{u}, k)
		= \lim_{q \to 0^{+}} \sd(\bm{u}, k)
		&= \sin\!\left( \frac{\pi}{2} \frac{\bm{u}}{K(k)} \right) 
		\\
		\lim_{q \to 0^{+}} \cn(\bm{u}, k)
		= \lim_{q \to 0^{+}} \cd(\bm{u}, k)
		&= \cos\!\left( \frac{\pi}{2} \frac{\bm{u}}{K(k)} \right)
		\\
		\lim_{q \to 0^{+}} \dn(\bm{u}, k)
		= \lim_{q \to 0^{+}} \nd(\bm{u}, k)
		&= 1
	\end{align}
\end{subequations}

Since $q_{r} = Q(k_{r}) = \exp(-\pi\, 2H/W)$ in Eq. \eqref{main:eqn:kr}
tends to zero when $H \to +\infty$,
then
\begin{subequations}
	\label{si:eqn:moduli:alt}
	\begin{align}
		k_{\rho}^{2}
		&= \frac{
			\displaystyle
			4\sn\!\left( K(k_{r}) \frac{g}{W}, k_{r} \right)
		}{
			\displaystyle
			\left[1 + \sn\!\left( K(k_{r}) \frac{g}{W}, k_{r} \right) \right]^{2}
		}
		\\
		{k_{\rho}'}^{2}
		&= {k_{r}'}^{4}\,
		\frac{
			\displaystyle
			\sd\!\left( K(k_{r}) \frac{w_{A}}{W}, k_{r} \right)^{2}
		}{
			\displaystyle
			\cn\!\left( K(k_{r}) \frac{w_{A}}{W}, k_{r} \right)^{2}
		}
		\frac{
			\displaystyle
			\sd\!\left( K(k_{r}) \frac{w_{B}}{W}, k_{r} \right)^{2}
		}{
			\displaystyle
			\cn\!\left( K(k_{r}) \frac{w_{B}}{W}, k_{r} \right)^{2}
		}
	\end{align}
\end{subequations}
can be approximated by
\begin{subequations}
	\begin{align}
		\lim_{q_{r} \to 0^{+}} k_{\rho}^{2}
		&= \frac{
			\displaystyle
			4\sin\!\left( \frac{\pi}{2} \frac{g}{W} \right)
		}{
			\displaystyle
			\left[1 + \sin\!\left( \frac{\pi}{2} \frac{g}{W} \right) \right]^{2}
		}
		\\
		\lim_{q_{r} \to 0^{+}} {k_{\rho}'}^{2}
		&=
		\tan\!\left( \frac{\pi}{2} \frac{w_{A}}{W} \right)^{2}
		\tan\!\left( \frac{\pi}{2} \frac{w_{B}}{W} \right)^{2}
	\end{align}
\end{subequations}
which leads to Eqs. \eqref{main:eqn:moduli:Hinf}.

\subsubsection{Case of shallow cells}
\label{si:proof:K'K:H0}

Consider the following approximations when the associated nome
$q' = Q(k')$ is sufficiently small \cite[Eqs. (11.1)--(11.3)]{Fenton1982jul}
\begin{subequations}
	\label{main:eqn:jacobi:q'0}
	\begin{align}
		\sn(\bm{u}, k) \bigg|_{q' \approx 0} &\approx
		(k^{2})^{-1/4} \tanh\!\left(
		\frac{\pi}{2} \frac{\bm{u}}{K'(k)}
		\right)
		\\
		\cn(\bm{u}, k) \bigg|_{q' \approx 0} &\approx
		\frac{1}{2} \left( \frac{{k'}^{2}}{k^{2} q'} \right)^{1/4} \sech\!\left( \frac{\pi}{2} \frac{\bm{u}}{K'(k)} \right)
		\\
		\dn(\bm{u}, k) \bigg|_{q' \approx 0} &\approx
		\frac{1}{2} \left( \frac{{k'}^{2}}{q'} \right)^{1/4}
		\sech\!\left( \frac{\pi}{2} \frac{\bm{u}}{K'(k)} \right)
		\\
		\intertext{also of the subsidiary function $\sd = \sn/\dn$}
		\sd(\bm{u}, k) \bigg|_{q' \approx 0} &\approx
		2 \left( \frac{q'}{k^{2} {k'}^{2}} \right)^{1/4}
		\sinh\!\left( \frac{\pi}{2} \frac{\bm{u}}{K'(k)} \right)
		\\
		\intertext{%
			and of the modulus $(k^{2})^{1/4}$
			\cite[begining of \S 7]{Fenton1982jul}
		}
		(k^{2})^{1/4} \bigg|_{q' \approx 0} &\approx
		\frac{1 - 2q'}{1 + 2q'} =
		\tanh\!\left(
			\frac{\pi}{2} \frac{K(k)}{K'(k)} - \frac{1}{2} \ln 2
		\right)
	\end{align}
\end{subequations}

Since  $q_{r}' = Q(k_{r}') =\exp(-\pi W/2H)$ in Eq. \eqref{main:eqn:kr}
is approximately zero when $H \approx 0$,
then $k_{\rho}$ and ${k_{\rho}}'$ in Eqs. \eqref{si:eqn:moduli:alt} can be approximated by
\begin{subequations}
	\begin{align}
		k_{\rho}^{2} \bigg|_{\scriptsize \shortstack{$q_{r}' \approx 0$ \\ $g \approx 0$}}
		&\approx
		4\sn\!\left( K(k_{r}) \frac{g}{W}, k_{r} \right)
		\bigg|_{q_{r}' \approx 0}
		\approx
		4\, \frac{
			\displaystyle
			\tan\!\left( 
				\frac{\pi}{2} \frac{K(k_{r})}{K'(k_{r})} \frac{g}{W}
			\right)
		}{
			\displaystyle
			\tan\!\left( 
				\frac{\pi}{2} \frac{K(k_{r})}{K'(k_{r})} - \ln\sqrt{2} 
			\right)
		}
		\\
		{k_{\rho}'}^{2} \bigg|_{q_{r}' \approx 0}
		&\approx 
		16 {q_{r}'}^{2}\,
		\sinh\!\left(
			\pi \frac{K(k_{r})}{K'(k_{r})} \frac{w_{A}}{W}
		\right)^{2}
		\sinh\!\left(
			\pi \frac{K(k_{r})}{K'(k_{r})} \frac{w_{B}}{W}
		\right)^{2}
	\end{align}
\end{subequations}
\noindent where the last equality was obtained through the identity
$2 \sinh(\bm{u}) \cosh(\bm{u}) = \sinh(2\bm{u})$, as shown below for $E \in \{A, B\}$
\begin{equation}
	{k_{r}'}^{2}\,
	\frac{
		\displaystyle
		\sd\!\left( K(k_{r}) \frac{w_{E}}{W}, k_{r} \right)^{2}
	}{
		\displaystyle
		\cn\!\left( K(k_{r}) \frac{w_{E}}{W}, k_{r} \right)^{2}
	}
	\approx
	16 {q_{r}'}\,
	\frac{
		\displaystyle
		\sinh\!\left( \frac{\pi}{2} \frac{K(k_{r})}{K'(k_{r})} \frac{w_{E}}{W} \right)^{2}
	}{
		\displaystyle
		\sech\!\left( \frac{\pi}{2}\frac{K(k_{r})}{K'(k_{r})} \frac{w_{E}}{W} \right)^{2}
	}
	=
	4 {q_{r}'}\,
	\sinh\!\left( \pi \frac{K(k_{r})}{K'(k_{r})} \frac{w_{E}}{W} \right)^{2}
\end{equation}

Finally, due to Eqs. \eqref{main:eqn:kr} and \eqref{elipticas:eqn:nomo},
that is $K(k_{r})/K'(k_{r}) = -\ln Q(k_{r}')/\pi = W/2H$,
the desired result in Eqs. (\ref{main:eqn:moduli:H0}) is obtained.


\section{Numerical calculations}

\subsection{Simulations in steady state}

\subsubsection{Implementation}

A normalized version of the diffusion equation in Eqs. \eqref{main:eqn:pde} was used for the simulations

\begin{subequations}
	\label{si:eqn:pde:normalized}
	\begin{align}
		\parderiv{^2 \xi\simu}{x\simu^{2}}(x, z)
		+ \parderiv{^{2} \xi\simu}{z\simu^{2}}(x, z)
		&= 0
		\\
		\parderiv{\xi\simu}{x\simu}(0, z)
		= \parderiv{\xi\simu}{x\simu}(W, z)
		&= 0,\, \forall z \in [0,H]
		\\
		\parderiv{\xi\simu}{z\simu}(x, H)
		&= 0,\, \forall x \in [0,W]
		\\
		\parderiv{\xi\simu}{z\simu}(x, 0)
		&= 0,\, \forall x \notin A \cup B
		\\
		\xi\simu(x, 0) &= 0,\, \forall x \in A
		\\
		\xi\simu(x, 0) &= 1,\, \forall x \in B
		\label{si:eqn:pde:normalized:onAB}
	\end{align}
\end{subequations}
where 
\begin{equation}
	\label{si:eqn:pde:normalization}
	\xi\simu = \frac{
		c_{\sigma,f} - c_{\sigma,f}^{A}
	}{
		c_{\sigma,f}^{B} - c_{\sigma,f}^{A}
	}
	,\quad x\simu = \frac{x}{W}
	,\quad z\simu = \frac{z}{W}
\end{equation}
The width of each band electrode was taken equal $2w_{A} = 2w_{B} = 2w_{E}$ for all simulations,
and different aspect ratios $H/W$ for the unit cell were considered.

An exponential mesh was used to partition the unit cell,
in order to keep the memory usage low
while maintaining good resolution near the electrode bands
(see \cite[Chapter 7]{Britz2016} for more details on this kind of mesh).
The number of elements of the mesh is $n_{x} \times n_{z}$,
of which their growth factors are $r_{x} = r_{z} = r$,
and the width and height of its smallest element are
$\delta_{x} = \delta_{z} = \delta_{0}$.
Additionally, the mesh parameter $R$ is sometimes given,
which corresponds to the ratio between the largest $(\delta_{0}\,r^{n_{z}-1}$)
and the smallest ($\delta_{0}$) elements in the mesh.

\subsubsection{Selection of the parameters for the exponential mesh}

Here four cases were considered:
$2w_{E}/W \in \mathopen\{ \numlist[list-final-separator={, }]{0,2; 0,4; 0,6; 0,8} \mathclose\}$ with $H/W = \num{1,0}$.
The mesh for each case was succesively refined until the absolute error between the simulated $\xi_{\simu}(x,z)$
and the theoretical concentration profile $\xi(x,z)$ was less than \num{0,005}
(approximately to two decimal places of accuracy). See \emph{Refinement output of Fig. \ref{si:fig:mesh}} at the end of this section.

\begin{figure}
	\centering
	\subcaptionbox{
		$2w_{E}/W = \num{0,2}$.
		Mesh:
		$\delta_{0} = \num{0.0005}$,
		$r = \num{1.2482936464}$,
		$n_{x} \times n_{z} = 84 \times 28$.
		Error conc.:
		\num{+- 0.0039}.
	}{\includegraphics{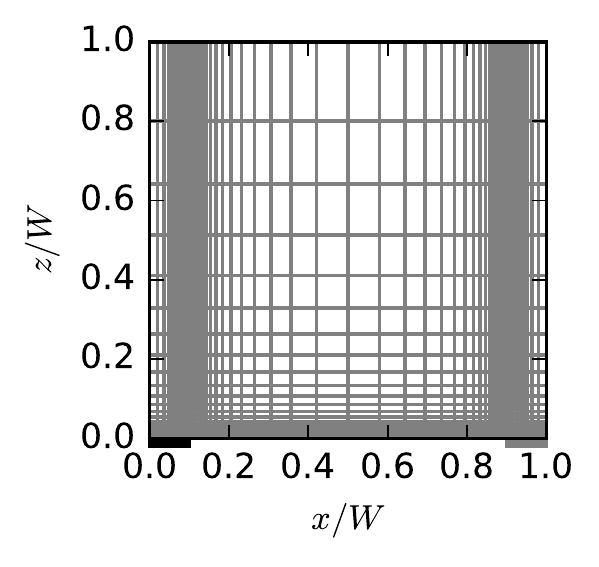}}
	\quad
	\subcaptionbox{
		$2w_{E}/W = \num{0,4}$.
		Mesh:
		$\delta_{0} = \num{0.0005}$,
		$r = \num{1.2482936464}$,
		$n_{x} \times n_{z} = 88 \times 28$.
		Error conc.:
		\num{+- 0.0040}.
	}{\includegraphics{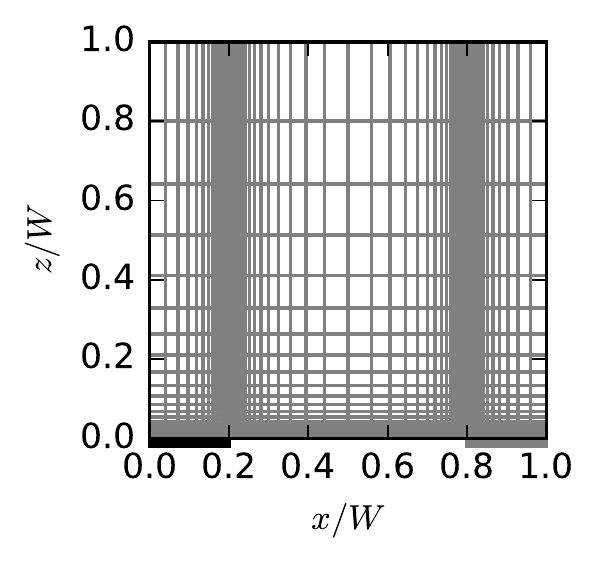}}

	\subcaptionbox{
		$2w_{E}/W = \num{0,6}$.
		Mesh:
		$\delta_{0} = \num{0.0005}$,
		$r = \num{1.2482936464}$,
		$n_{x} \times n_{z} = 88 \times 28$.
		Error conc.:
		\num{+- 0.0046}.
	}{\includegraphics{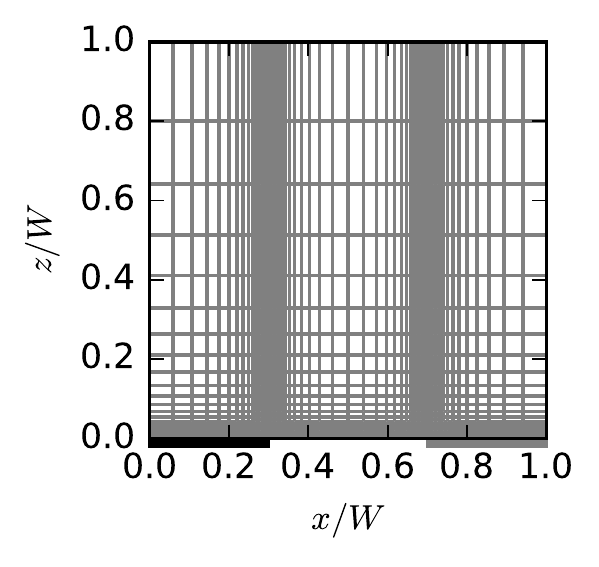}}
	\quad
	\subcaptionbox{
		$2w_{E}/W = \num{0,8}$.
		Mesh:
		$\delta_{0} = \num{0.00025}$,
		$r = \num{1.24958205268}$,
		$n_{x} \times n_{z} = 96 \times 31$.
		Error conc.:
		\num{+- 0.0045}.
	}{\includegraphics{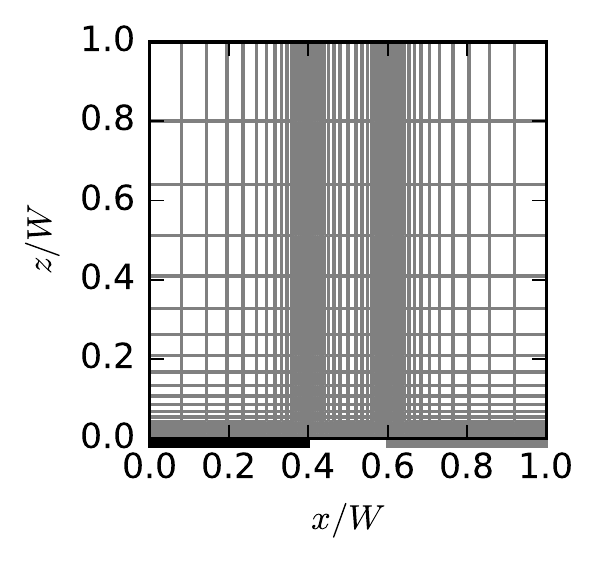}}

	\caption{
		Exponential mesh candidates.
		The size of each mesh is $n_{x} \times n_{z}$,
		the dimensions of its smallest element are $\delta_{x} = \delta_{z} = \delta_{0}$
		and the growth factors are $r_{x} = r_{z} = r$.
		The parameters of the finest mesh among the candidates,
		that is $\delta_{0} = \num{0.00025}$ and $r = \num{1.24958205268}$,
		were selected for all simulations in this text.  
	}
	\label{si:fig:mesh}
\end{figure}

The mesh candidates are shown in Fig. \ref{si:fig:mesh}.
The case for $2w_{E}/W = \num{0,8}$ is the most restrictive of all,
and therefore it requires the finest mesh to reach the desired accuracy.
This is because the current flowing in this cell is the highest,
due to the largest electrode widths,
producing the greatest gradient of concentration among all the cases.

See \emph{Refinement output of Fig. \ref{si:fig:mesh}}
for values of the normalized current $2 K'(k_{\rho})/K(k_{\rho})$.
Note that the normalized current has been writen in terms of the
lattice parameter $\tau$ for brevity, that is  $-2\bm{i}\tau = 2K'(k)/K(k)$
due to Eq. \citex{(\dlmf[E]{22.2.}{12})}{dlmf}.
The ratio $K'(k_{\rho})/K(k_{\rho})$ obtained in the simulations
was computed numerically using the expression
\begin{equation}
	\label{si:eqn:int_dxi7bz}
	-\bm{i}\tau = \frac{K'(k_{\rho})}{K(k_{\rho})}
	= \left. \int_{0}^{w_{A}} \parderiv{\xi}{z} \right|_{z=0} \ud{x}
	= \left. \int_{0}^{w_{A}/W} \parderiv{\xi_{\simu}}{z_{\simu}} \right|_{z_{\simu}=0} \ud{x_{\simu}}
\end{equation}
due to Eqs. \eqref{main:eqn:int_dxi7dz_dx}, \eqref{main:eqn:int_drho} and \eqref{main:eqn:rhoAB}.

Finally, the parameters of the finest mesh among the candidates in Fig. \ref{si:fig:mesh}:
\begin{equation}
	\label{si:eqn:mesh}
	\delta_{0} = \num{0.00025} \text{ and } r = \num{1.24958205268}
\end{equation}
were selected for all the simulations in this report.

\paragraph{Refinement output of Fig. \ref{si:fig:mesh}}
\rule{0pt}{0pt}
\lstinputlisting[basicstyle={\ttfamily \scriptsize}]{fig-S2.txt}

\subsubsection{Simulation of concentration profile in Fig. \ref{main:fig:xi}}
\label{si:sec:xi}

Here a validation is done by comparing
the theoretical normalized concentration $\xi(x,z)$ in Eq. \eqref{main:eqn:cf}
or \eqref{si:eqn:T}
\begin{equation}
	\xi(x,z) = \frac{c_{\sigma,f}(x,z) - c_{\sigma,f}^{A}}{c_{\sigma,f}^{B} -c_{\sigma,f}^{A}} = \Re\, T_{r}^{\rho}(x,z)
\end{equation}
with its simulated counterpart in Eqs. \eqref{si:eqn:pde:normalized}
and \eqref{si:eqn:pde:normalization}.

The width of each band electrode was taken equal to
$2w_{A} = 2w_{B} = \num{0,5} W$ for all simulations
and three different aspect ratios for the unit cell
$H/W \in \{\numlist[list-final-separator={,}]{\approx 0,33; \approx 0,51; 1,0}\}$ were considered.
The parameters used to generate each mesh were obtained from Eq. \eqref{si:eqn:mesh}.

The results of the simulations are shown in Fig. \ref{si:fig:xi_sim},
of which Fig. \ref{main:fig:xi} shows only a part.
In all cases the theoretical concentration $\xi(x,z)$ 
agrees with the simulations $\xi\simu(x,z)$ up to two decimal places,
since the absolute error obtained was not greater than $\approx \num{0,0030}$.
The error appears to be greater near the big-sized elements of the mesh,
but it actually reaches its extrema at the edges of each electrode band
(this is not easy to see from Fig. \ref{si:fig:xi_sim},
but was obtained as output data of the simulations, see section \emph{Simulation output} below).

\begin{figure}
	\centering
	\subcaptionbox{
		$H/W = 1$.
		Mesh: $100 \times 31$.
		Error max/min: $\approx \pm \num{0.0030}$.
		\label{si:fig:xi_sim:HW10}
	}{
		\includegraphics{fig-03-H10-conc}
		\includegraphics{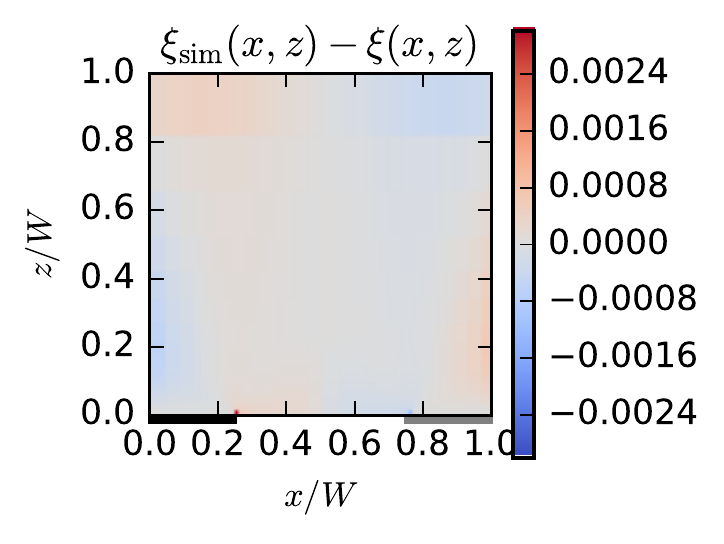}
	}
	\subcaptionbox{
		$H/W \approx \num{0,5}.$
		Mesh: $100 \times 28$.
		Error. max/min: $\approx \pm \num{0.0029}$.
		\label{si:fig:xi_sim:HW05}
	}{
		\includegraphics{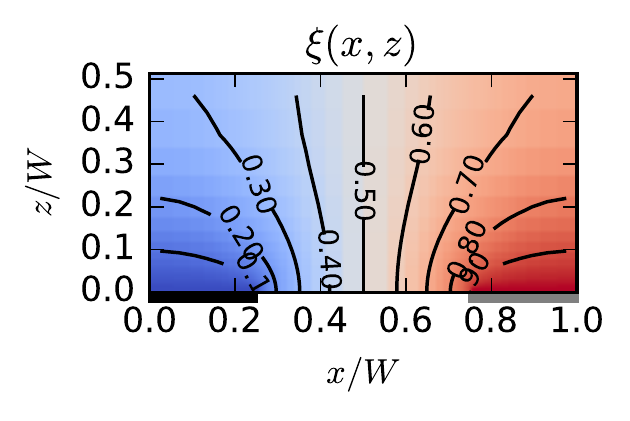}
		\includegraphics{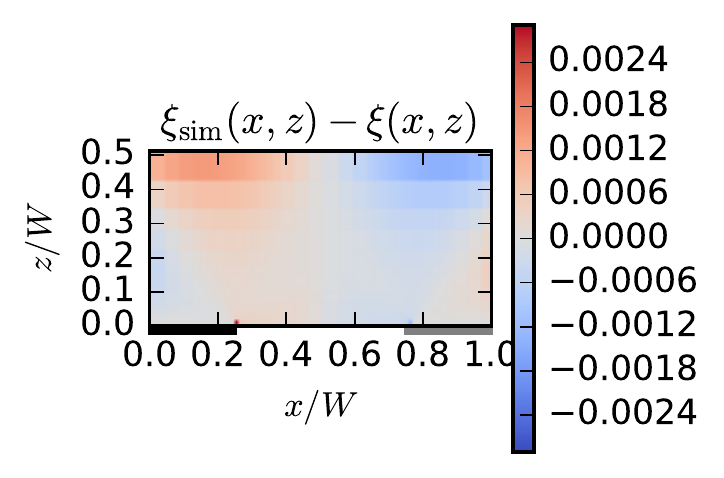}
	}
	\subcaptionbox{
		$H/W \approx \num{0,3}$.
		Mesh: $100 \times 26$.
		Error. max/min: $\approx \pm \num{0.0027}$. 
		\label{si:fig:xi_sim:HW03}
	}{
		\includegraphics{fig-03-H03-conc}
		\includegraphics{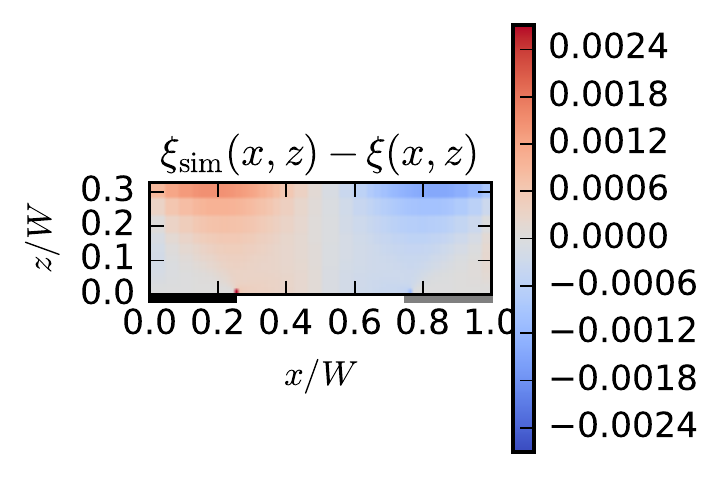}
	}
	\caption{
		Left column:
		Normalized concentration
		$\xi(x,z) = [c_{\sigma,f}(x,z) - c_{\sigma,f}^{A}]/[c_{\sigma,f}^{B} - c_{\sigma,f}^{A}]$
		in the final steady state (isoconcentration lines shown in black).
		Right column:
		Error between simulation and theoretical concentration $\xi_{\mathrm{sim}}(x,z) - \xi(x,z)$.
		Plots consider electrodes of equal size $2w_{A} = 2w_{B} = \num{0,5}W$
		and different aspect ratios $H/W$ for the unit cell.
		The simulations were obtained by numerically solving Eqs. \eqref{si:eqn:pde:normalized} in steady state.
	}
	\label{si:fig:xi_sim}
\end{figure}

\vspace{\baselineskip}
\paragraph{Simulation output}
\rule{0pt}{0pt}
\lstinputlisting[basicstyle={\ttfamily \scriptsize}]{fig-03.txt}

\subsubsection{Exhaustive simulations}
\label{si:sec:exhaustive}

\begin{table}
	\centering
	\lstinputlisting[basicstyle={\ttfamily \scriptsize}]{tab-S1-mesh.txt}
	\caption{
		Mesh sizes $n_{x} \times n_{z}$ obtained using
		the mesh parameters in Eq. \eqref{si:eqn:mesh}
		for different values of $H/W$ and $2w_{E}/W$.
	}
	\label{si:tab:exhaustive:mesh}
\vspace{\baselineskip}
	\centering

	\subcaptionbox{
		Minimum error between normalized concentrations:
		$\min \xi_{\simu}(x,z) - \xi(x,z)$.
	}[\textwidth]{
		\lstinputlisting[basicstyle={\ttfamily \scriptsize}]{tab-S2-xi_error_min.txt}
	}

	\subcaptionbox{
		Maximum error between normalized concentrations:
		$\max \xi_{\simu}(x,z) - \xi(x,z)$.
	}[\textwidth]{
		\lstinputlisting[basicstyle={\ttfamily \scriptsize}]{tab-S2-xi_error_max.txt}
	}

	\caption{
		Error between simulated and theoretical concentrations obtained when
		simulating exhaustively for different values of $H/W$ and $2w_{E}/W$.
	}
	\label{si:tab:exhaustive:xi}
\end{table}

The normalized concentration profile $\xi(x,z)$ and current
$2 K'(k_{\rho})/K(k_{\rho})$ were exhaustively simulated
for all combinations of the following electrode widths
\begin{align}
	\frac{2w_{E}}{W}
	&= \{ \numlist[list-final-separator={, }]{0,2; 0,4; 0,5; 0,6; 0,8} \}
	\\
	\intertext{and aspect ratios}
	\frac{H}{W}
	&\approx \{ \numlist[list-final-separator={, }]{0,05; 0,11; 0,21; 0,33; 0,41; 0,64; 0,80; 1,00; 1,25} \}
\end{align}
The implementation in Eqs. \eqref{si:eqn:pde:normalized} was used,
together with the meshes specified in Table \ref{si:tab:exhaustive:mesh}
and the selected mesh parameters in Eq. \eqref{si:eqn:mesh}.
These simulations were constrasted against their theoretical counterparts,
and the results are shown in Tables \ref{si:tab:exhaustive:xi},
\ref{si:tab:exhaustive:itau} and in Fig. \ref{main:fig:K'Krho:simu}.

Table \ref{si:tab:exhaustive:xi} shows that the minimum and maximum errors
between the simulated and theoretical concentration profile
were not greater than $\approx \num{0,0045}$,
except for the two first band widths when $H/W \approx \num{0,05}$.

Table \ref{si:tab:exhaustive:itau} shows the simulated values of
the normalized current and the error with respect to its theoretical counterpart,
which was not greater than $\approx \num{0,0051}$ for most of the cases
(except for the case $H/W \approx \num{0,05}$ and $2w_{E}/W = \num{0,2}$
with an error of $\approx \num{0,1279}$,
and for the last six aspect ratios $H/W$ when $2w_{E}/W = \num{0,8}$
with an error not greater than $\approx \num{0,0078}$).

\begin{table}
	\centering

	\subcaptionbox{
		Normalized current $2K'(k_{\rho})/K(k_{\rho})_{\simu}$
		obtained through simulations.
	}[\textwidth]{
		\lstinputlisting[basicstyle={\ttfamily \scriptsize}]{tab-S3-itau.txt}
	}

	\subcaptionbox{
		Error between normalized currents:
		$2K'(k_{\rho})/K(k_{\rho})_{\simu} - 2K'(k_{\rho})/K(k_{\rho})$.
	}[\textwidth]{
		\lstinputlisting[basicstyle={\ttfamily \scriptsize}]{tab-S3-itau_error.txt}
	}

	\caption{
		Error between simulated and theoretical currents obtained when
		simulating exhaustively for different values of $H/W$ and $2w_{E}/W$.
	}
	\label{si:tab:exhaustive:itau}
\vspace{\baselineskip}
	\lstinputlisting[basicstyle={\ttfamily \scriptsize}]{tab-S4.txt}
	\caption{
		Normalized currents $2/\pi^{2}\, K'(k_{\rho})/K(k_{\rho})_{\simu}$
		obtained in \cite[Fig. 7a]{GuajardoYevenes2013sep} by
		simulating exaustively for different values of $H/W$ and $2w_{E}/W$.
	}
	\label{si:tab:exaustive:itau:Guajardo2013}
\end{table}

Table \ref{si:tab:exaustive:itau:Guajardo2013} shows exhaustive simulations of
the normalized limiting current (case of internal counter electrode)
obtained in \cite[Fig. 7]{GuajardoYevenes2013sep}.
In order to compare the results in Table \ref{si:tab:exaustive:itau:Guajardo2013}
against the normalized current of this text (Eq. \eqref{main:eqn:if})
\begin{equation}
	\label{si:eqn:if_lim:intC}
	\left| \frac{
		i_{f}^{E}/L
	}{
		F n_{e} \cdot 2 D_{\lambda}\bar{c}_{\lambda,i}^{\whole}
	} \right|
	= 2\frac{K'(k_{\rho})}{K(k_{\rho})}
\end{equation}
($D_{\lambda}|c_{\lambda,f}^{E} - c_{\lambda,f}^{E'}|
= 2 D_{\lambda} \bar{c}_{\lambda,i}^{\whole}$
for the limiting current case, see Eq. \eqref{main:eqn:cE-cE':cotas})
the average flux \cite[$\bar{\phi}_{\lambda}^{\lim}$]{GuajardoYevenes2013sep}
was related to the current per electrode band
$i_{f}^{E} = F n_{e}\, \bar{\phi}_{\lambda}^{\lim}\, 2w_{E} L$,
and then replaced in the expression of \cite[Fig. 7a]{GuajardoYevenes2013sep}
\begin{equation}
	\left| \frac{
		\bar{\phi}_{\lambda}^{\lim} w_{E} 
	}{
		\pi^{2} D_{\lambda} \bar{c}_{\lambda,i}^{\whole}
	} \right|
	= \frac{1}{\pi^{2}}
	\left| \frac{
		i_{f}^{E}/L
	}{
		F n_{e} \cdot 2 D_{\lambda} \bar{c}_{\lambda,i}^{\whole}
	} \right|
	= \frac{1}{\pi^{2}} \cdot 2\frac{K'(k_{\rho})}{K(k_{\rho})}
\end{equation}
showing that both normalized currents are related by a factor of $\pi^{2} \approx 10$.
After scaling by this factor, one can see the agreements between Tables
\ref{si:tab:exhaustive:itau} and \ref{si:tab:exaustive:itau:Guajardo2013}.
See Fig. \ref{main:fig:K'Krho:simu} for a graphical representation of both tables.

\begin{table}[t]
	\lstinputlisting[basicstyle={\ttfamily \scriptsize}]{tab-S5.txt}
	\caption{
		Normalized currents $2K'(k_{\rho})/K(k_{\rho})_{\simu}$
		obtained in \cite[Fig. 7a]{Strutwolf2005feb} by 
		simulating for different values of $2w_{E}/H$ and
		$2w_{E}/g \in \mathopen\{\num{0,5},\, 1 \mathclose\}$.
	}
	\label{si:tab:exaustive:itau:Strutwolf2005}
\end{table}

Table \ref{si:tab:exaustive:itau:Strutwolf2005} shows simulations of
the normalized limiting current (case of external counter electrode)
obtained in \cite[Fig. 7a]{Strutwolf2005feb}.
In order to compare against the normalized current of this text
(Eq. \eqref{main:eqn:if})
\begin{equation}
	\label{si:eqn:if_lim:extC}
	\left| \frac{
		i_{f}^{E}/L
	}{
		F n_{e} \cdot [D_{\sigma} \bar{c}_{\sigma,i}^{\whole} + D_{\sigma'} \bar{c}_{\sigma',i}^{\whole}]
	} \right|
	= 2\frac{K'(k_{\rho})}{K(k_{\rho})}
\end{equation}
($D_{\sigma}|c_{\sigma,f}^{E} - c_{\sigma,f}^{E'}|
= D_{\sigma} \bar{c}_{\sigma,i}^{\whole}
+ D_{\sigma'} \bar{c}_{\sigma',i}^{\whole}$,
see Eq. \eqref{main:eqn:cE-cE':extC} when $\pm \eta^{E} \to +\infty$
for the limiting current case)
the expressions for \cite[$G_{g}$ and $G_{c}$ in p. 171]{Strutwolf2005feb}
were rewritten using the notation of this text
\begin{equation}
	|G_{E}| = \left| \frac{
		N_{E} i_{f}^{E}
	}{
		L N_{E} n_{e} F [D_{\sigma} \bar{c}_{\sigma,i}^{\whole} + D_{\sigma'} \bar{c}_{\sigma',i}^{\whole}]
	} \right|
	= 2\frac{K'(k_{\rho})}{K(k_{\rho})}
\end{equation}
were the normalized current at the generator $G_{g}$ and collector $G_{c}$
have equal magnitude (but opposite sign) in steady state $|G_{ss}| = |G_{g}| = |G_{c}| =: |G_{E}|$.
For plotting the data of \cite[Fig. 7a]{Strutwolf2005feb} in Fig. \ref{main:fig:K'Krho:simu},
the horizontal variable \cite[$w_{e}/h_{c}$ in Fig. 7a]{Strutwolf2005feb}
was also rewritten according to the notation of this text,
that is $2w_{E}/H$ in Table \ref{si:tab:exaustive:itau:Strutwolf2005},
and later related to $H/W$ by
\begin{equation}
	\frac{H}{W} = \left[ 
		\frac{2w_{E}}{H} \cdot \left( \frac{g}{2w_{E}} + \frac{2w_{E}}{2w_{E}} \right)
	\right]^{-1}
\end{equation}
where $g$ is the gap between electrodes of equal width,
such that $W = 2w_{E} + g$.
Only the data for the case $2w_{E} = g$ ($2w_{E}/W =\num{0,5}$)
was plotted in Fig. \ref{main:fig:K'Krho:simu},
which agrees with its theoretical counterpart.

\subsection{Direct evaluations and curve fitting}

\subsubsection{Voltammogram and potentials in Figs. \ref{main:fig:cE-cE':intC:noR} and \ref{main:fig:etaE=DetaE}}
\label{si:sec:voltammogram:intC:noR}

Fig. \ref{main:fig:cE-cE':intC:noR} was generated from Eq. \eqref{main:eqn:etaE-etaE':intC}
with $r_{\lambda} = (D_{\lambda'} \bar{c}_{\lambda',i}^{\whole})
/(D_{\lambda} \bar{c}_{\lambda,i}^{\whole})$
\begin{equation}
	\pm (\eta_{f}^{E} - \eta_{f}^{E'})
	= 2 \arctanh\left(
		\frac{
			c_{\lambda,f}^{E} - c_{\lambda,f}^{E'}
		}{
			2\bar{c}_{\lambda,i}^{\whole}
		}
	\right)
	+ 2 \arctanh\left(
		\frac{1}{r_{\lambda}}
		\cdot \frac{
			c_{\lambda,f}^{E} - c_{\lambda,f}^{E'}
		}{
			2\bar{c}_{\lambda,i}^{\whole}
		}
	\right)
\end{equation}
by evaluating several points for $[c_{\lambda,f}^{E} - c_{\lambda,f}^{E'}] /2 \bar{c}_{\lambda,f}^{\whole} \in \mathopen]-1,1\mathclose[$,
and obtaining a list of points for $\pm (\eta_{f}^{E} - \eta_{f}^{E'})$.

The same list of points $[c_{\lambda,f}^{E} - c_{\lambda,f}^{E'}] /2 \bar{c}_{\lambda,f}^{\whole} \in \mathopen]-1,1\mathclose[$ was evaluated in
Eqs. \eqref{si:eqn:etaE}
\begin{subequations}
	\begin{align}
		\e^{\mp (\eta_{f}^{E} - \eta_{\nul})}
		&= \frac{
			1 - r_{\sigma}^{-1}
			[c_{\sigma,f}^{E} - \bar{c}_{\sigma,i}^{\whole}]
			/\bar{c}_{\sigma,i}^{\whole}
		}{
			1 + \phantom{r_{\sigma}^{-1}}
			[c_{\sigma,f}^{E} - \bar{c}_{\sigma,i}^{\whole}]
			/\bar{c}_{\sigma,i}^{\whole}
		}
		\\
		\e^{\mp (\eta_{f}^{E'} - \eta_{\nul})}
		&= \frac{
			1 + r_{\sigma}^{-1}
			[c_{\sigma,f}^{E} - \bar{c}_{\sigma,i}^{\whole}]
			/\bar{c}_{\sigma,i}^{\whole}
		}{
			1 - \phantom{r_{\sigma}^{-1}}
			[c_{\sigma,f}^{E} - \bar{c}_{\sigma,i}^{\whole}]
			/\bar{c}_{\sigma,i}^{\whole}
		}
	\end{align}
\end{subequations}
obtaining one list of points for $\mp (\eta_{f}^{E} - \eta_{\nul})$ and
another list of points for $\mp (\eta_{f}^{E'} - \eta_{\nul})$.
Both of them were plotted against the list of points for $\pm (\eta_{f}^{E} - \eta_{f}^{E'})$ resulting in Fig. \ref{main:fig:etaE=DetaE}.

\subsubsection{Curve fitting in Fig. \ref{main:fig:fitting}}
\label{si:sec:fitting}

\begin{table}
	\centering
	\lstinputlisting[basicstyle={\ttfamily \scriptsize}]{tab-S6.txt}
	\caption{
		Voltammogram data of \SI{1}{\nano\mole\per\micro\litre}
		ferrocene with $V_{f}^{E'} - V^{R} = \SI{-0,15}{\volt}$
		from \cite[Fig. 7]{Aoki1988dec} for generator and collector.
		Here $V^{R}$ is the potential of the reference electrode
		(saturated calomel electrode).
	}
	\label{si:tab:Aoki1988}
\vspace{\baselineskip}
	\centering
	\lstinputlisting[basicstyle={\ttfamily \scriptsize}]{tab-S7.txt}
	\caption{
		Voltammogram data of \SI{0,20}{\nano\mole\per\micro\litre}
		of each ferrocyanide and ferricyanide from \cite[Fig. 2]{Rahimi2011},
		or equivalently from \cite[Fig. 3.12]{Rahimi2009aug}.
	}
	\label{si:tab:Rahimi2011}
\end{table}

The data used for curve fitting in Fig. \ref{main:fig:fitting:extC}
was obtained from \cite[Fig. 7]{Aoki1988dec},
from which 16 data points for each curve (generator and collector) were extracted.
The extracted data is shown in Table \ref{si:tab:Aoki1988}.

Similarly, the data used for curve fitting in Fig. \ref{main:fig:fitting:intC}
was obtained from \cite[Fig. 2]{Rahimi2011}, or equivalently \cite[3.12]{Rahimi2009aug},
from which 16 data points were extracted.
The extracted data is shown in Table \ref{si:tab:Rahimi2011}.

\subsubsection{Approximated normalized current in Figs. \ref{main:fig:K'Krho:HW_tall} and \ref{main:fig:K'Krho:HW_shallow}.}
\label{si:sec:if_approx}

The relative error between the approximated and exact currents,
$\tilde{i}_{f}^{E}$ and $i_{f}^{E}$, corresponds to that between its normalized counterparts
\begin{equation}
	\text{relative error} =
	\frac{\tilde{i}_{f}^{E}}{i_{f}^{E}} - 1 =
	\left[2 \frac{K'(k_{\rho})}{K(k_{\rho})} \right]_{\text{app.}}
	\left[2 \frac{K'(k_{\rho})}{K(k_{\rho})} \right]^{-1} - 1
\end{equation}
For this reason, it suffices for Figs. \ref{main:fig:K'Krho:HW_tall}
and \ref{main:fig:K'Krho:HW_shallow} to show the approximated version of
the normalized current $2 K'(k_{\rho})/K(k_{\rho})$,
together with its relative error with respect to the exact value,
for tall and shallow cells respectively.

The combinations of $(2w_{E}/H,\, H/W)$ that produce a relative error
less than $\num{+-0,05} = \SI{+-5}{\percent}$ were considered to define
regions where the approximations are valid.
These correspond to $2w_{E}/W > \num{0,46}$ (large electrodes) and
$2w_{E}/W < \num{0,56}$ (small electrodes) for the case of tall cells $H/W > 1$.
In the case of shallow cells $H/W \leq 1$, these regions of validity correspond
to the right of the line connecting $(1,0) \to (\num{0,28},1)$
(large electrodes) and to the left of the line connecting
$(1,0) \to (\num{0,36},1)$ (small electrodes).
See \emph{Output of Figs. \ref{main:fig:K'Krho:HW_tall} and
\ref{main:fig:K'Krho:HW_shallow}} at the following subsection for numerical results.
The combinations of $(2w_{E}/H,\, H/W)$ that produce a relative error
of $\num{+-0,005} = \SI{+-0,5}{\percent}$ are also given as a reference in
subsection \emph{Output of Figs. \ref{main:fig:K'Krho:HW_tall} and \ref{main:fig:K'Krho:HW_shallow}}.

When expressing the ratio $K'(k)/K(k)$ (which is involved in the relative error)
in terms of the nome function $Q(k)$, or equivalently $Q(k')$
\begin{equation}
	\frac{K'(k)}{K(k)} = \frac{\ln Q(k)}{-\pi} = \frac{-\pi}{\ln Q(k')}
\end{equation}
one can realize that the approximations in Eqs. \eqref{main:eqn:K'K:Hinf} and \eqref{main:eqn:K'K:H0}
\begin{equation}
	\left[\frac{K'(k_{\rho})}{K(k_{\rho})} \right]_{\text{app.}} :=
	\begin{cases}
		\frac{
			\displaystyle \ln\tilde{Q}(\tilde{k}_{\rho})
		}{\displaystyle -\pi}, & \text{for } k_{\rho} \approx 0
		\\[1em]
		\frac{\displaystyle -\pi}{
			\displaystyle \ln\tilde{Q}(\tilde{k}_{\rho}')
		}, & \text{for } k_{\rho}' \approx 0
	\end{cases}
\end{equation}
are due to a two-step approximation process:
First, approximation of the nome function by
\begin{equation}
	\tilde{Q}(k) := \frac{k^{2}}{16}
\end{equation}
and later, the approximations of the moduli $\tilde{k}_{\rho}$
and $\tilde{k}_{\rho}'$, which are given in Eqs. \eqref{main:eqn:moduli:Hinf} and \eqref{main:eqn:moduli:H0} in the main text.

Therefore, it is possible to analyze the propagation of the relative error,
for $k_{\rho} \approx 0$, by looking at the following total ratio
\begin{equation}
	\label{si:eqn:ratio:k_rho}
	\left[\frac{K'(k_{\rho})}{K(k_{\rho})} \right]_{\text{app.}}
	\left[\frac{K'(k_{\rho})}{K(k_{\rho})} \right]^{-1} =
	\frac{\ln\tilde{Q}(\tilde{k}_{\rho})}{\ln Q(k_{\rho})} =
	\frac{\ln\tilde{Q}(\tilde{k}_{\rho})}{\ln\tilde{Q}(k_{\rho})}
	\frac{\ln\tilde{Q}(k_{\rho})}{\ln Q(k_{\rho})}
\end{equation}
where a relative error of $\num{+-0,05} = \SI{+-5}{\percent}$ corresponds to
the total ratios \num{0,95} and \num{1,05}.
The ratios $\ln\tilde{Q}(k_{\rho})/\ln Q(k_{\rho})$ and
$\ln\tilde{Q}(\tilde{k}_{\rho})/\ln\tilde{Q}(k_{\rho})$ show
the individual contributions of the approximated nome $\tilde{Q}$ and
the approximated modulus $\tilde{k}_{\rho}$ to
the total ratio $\ln\tilde{Q}(\tilde{k}_{\rho})/\ln Q(k_{\rho})$,
and therefore, to the relative error.

\begin{figure}[t]
	\centering
	\includegraphics{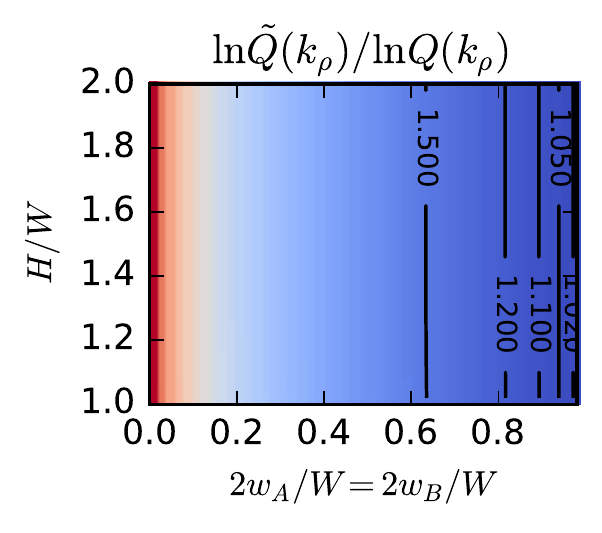} \quad
	\includegraphics{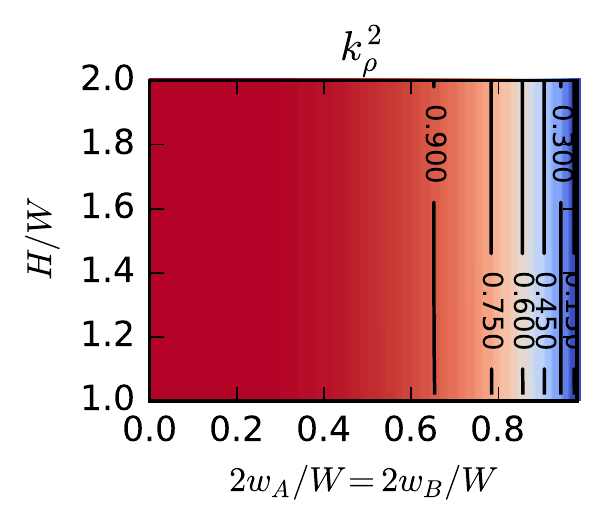}

	\includegraphics{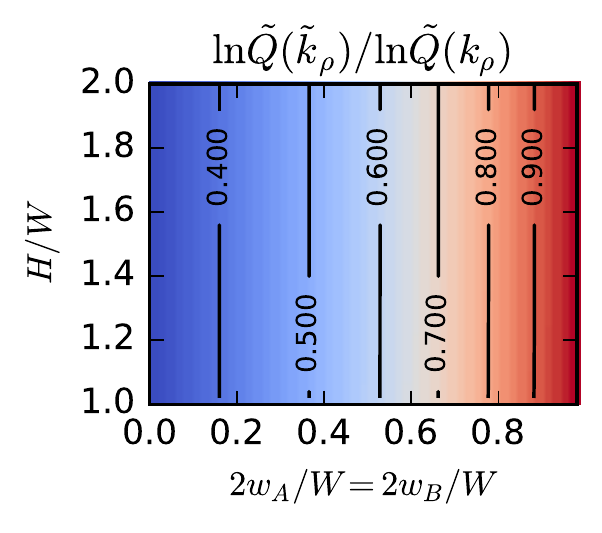} \quad
	\includegraphics{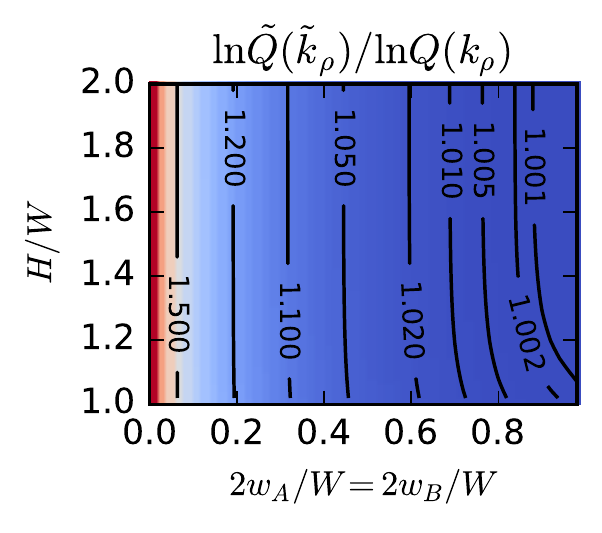}
	\caption{
		Propapation of the ratio of approximation
		$\ln\tilde{Q}(\tilde{k}_{\rho})/\ln Q(k_{\rho})$
		for the case of tall cells and large electrodes.
		Contour lines of selected values are shown in black.
	}
	\label{si:fig:ratios:tall:large}
\end{figure}

This is shown in Fig. \ref{si:fig:ratios:tall:large} for the case of tall cells and large electrodes.
The first ratio $\ln\tilde{Q}(k_{\rho})/\ln Q(k_{\rho}) < \num{1,05}$
(relative error of \SI{5}{\percent}) for widths $2w_{E}/W > \num{0,94}$
($k_{\rho}^{2} \lesssim \num{0,314}$),
which corresponds to very extreme cases of wide electrodes.
This agrees with the relative error of $\approx \SI{4}{\percent}$
in \cite[before Eq. (32)]{Aoki1988dec} when approximating only the nome function
$\tilde{Q}(k_{\rho})$ for gap widths
$g/W < \num{0,059} \Leftrightarrow 2w_{E}/W > \num{0,941}$
($k_{\rho}^{2} < \num{0,310}$).
Fortunately, the contribution due to the approximation of
the modulus $\tilde{k}_{\rho}$ makes the second ratio
$\ln\tilde{Q}(\tilde{k}_{\rho})/\ln\tilde{Q}(k_{\rho}) < 1$
in almost the whole domain $(2w_{E}/W, H/W)$.
This helps to extend the region where the total ratio
$\ln\tilde{Q}(\tilde{k}_{\rho})/\ln Q(k_{\rho}) < \num{1,05}$
(relative error less than \SI{5}{\percent}),
which corresponds to more reasonable electrode widths $2w_{E}/W < \num{0,46}$.

A similar analysis of the propagation of the relative error can be done for $k_{\rho}' \approx 0$ when considering the following total ratio
\begin{equation}
	\label{si:eqn:ratio:k1_rho}
	\left[\frac{K'(k_{\rho})}{K(k_{\rho})} \right]_{\text{app.}}
	\left[\frac{K'(k_{\rho})}{K(k_{\rho})} \right]^{-1} =
	\frac{\ln Q(k_{\rho}')}{\ln\tilde{Q}(\tilde{k}_{\rho}')} =
	\frac{\ln\tilde{Q}(k_{\rho}')}{\ln\tilde{Q}(\tilde{k}_{\rho}')}
	\frac{\ln Q(k_{\rho}')}{\ln\tilde{Q}(k_{\rho}')}
\end{equation}
where the ratios $\ln Q(k_{\rho}')/\ln\tilde{Q}(k_{\rho}')$ and
$\ln\tilde{Q}(k_{\rho}')/\ln\tilde{Q}(\tilde{k}_{\rho}')$ show
the individual contributions of the approximated nome $\tilde{Q}$ and
the approximated complementary modulus $\tilde{k}_{\rho}'$ to
the total ratio $\ln Q(k_{\rho}')/\ln\tilde{Q}(\tilde{k}_{\rho}')$,
and therefore, to the relative error.

\begin{figure}[t]
	\centering
	\includegraphics{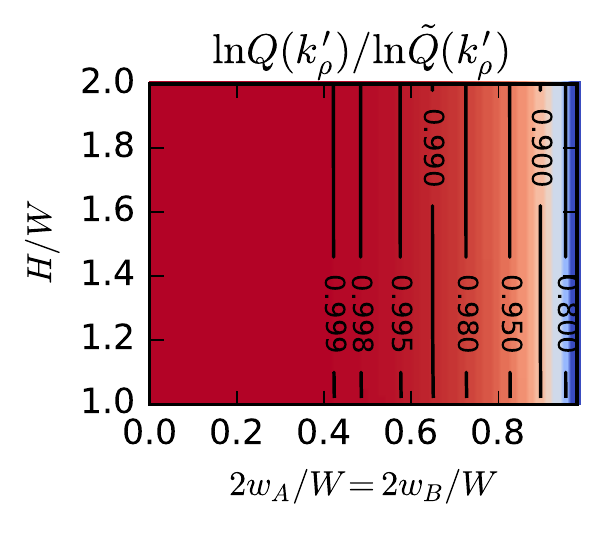} \quad
	\includegraphics{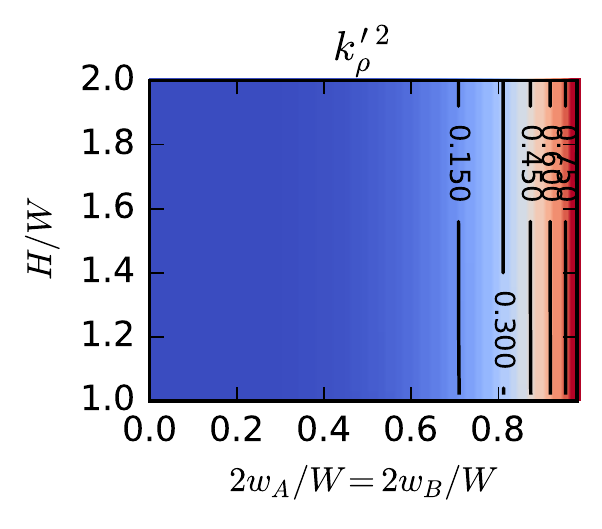}

	\includegraphics{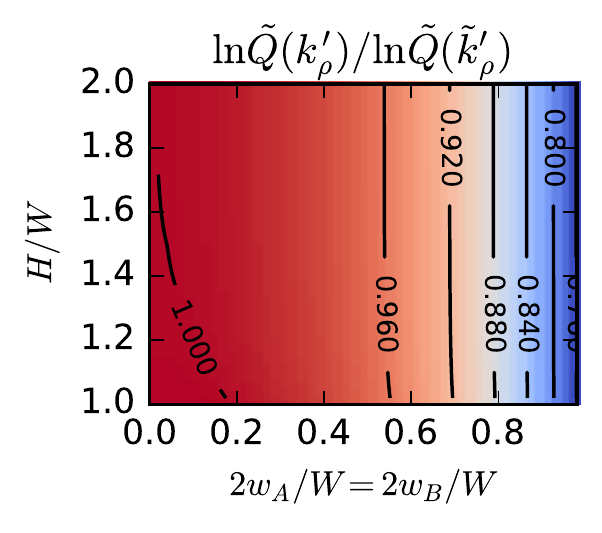} \quad
	\includegraphics{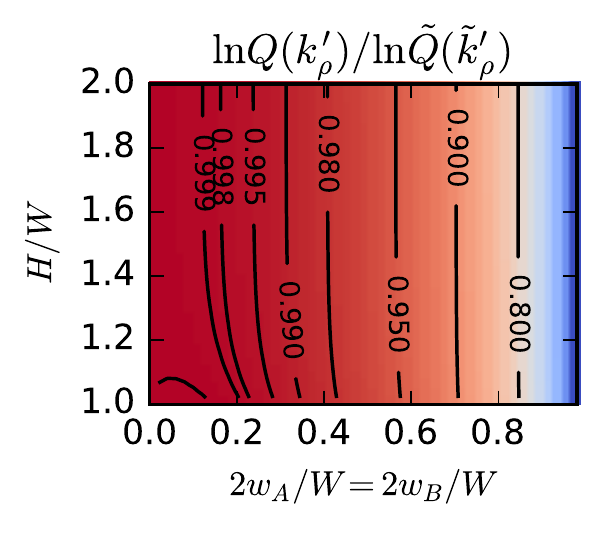}
	\caption{
		Propapation of the ratio of approximation
		$\ln\tilde{Q}(\tilde{k}_{\rho})/\ln Q(k_{\rho})$
		for the case of tall cells and small electrodes.
		Contour lines of selected values are shown in black.
	}
	\label{si:fig:ratios:tall:small}
\end{figure}

This is shown in Fig. \ref{si:fig:ratios:tall:small} for the case of tall cells and small electrodes.
Unlike the previous case, the first ratio
$\ln Q(k_{\rho}')/\ln\tilde{Q}(k_{\rho}') > \num{0,95}$
(relative error of \SI{-5}{\percent})
for widths $2w_{E}/W < \num{0,82}$ (${k_{\rho}'}^{2} \lesssim \num{0,315}$),
which corresponds to a large portion of the parameter domain.
This agrees with the statement in \cite[before Eq. (5)]{Morf2006may},
which says that the approximation converges rapidly for gap widths
$g > \num{0,1}\cdot 2w_{E} \Leftrightarrow 2w_{E}/W < \num{0,909}$
(no quantitative error is specified).
Unfortunately, the contribution due to the approximation of
the complementary modulus $\tilde{k}_{\rho}'$ makes the second ratio
$\ln\tilde{Q}(k_{\rho}')/\ln\tilde{Q}(\tilde{k}_{\rho}') < 1$
in most of the domain $(2w_{E}/W, H/W)$.
This approximation helps to simplify the expression for the complementary
modulus in Eq. \eqref{main:eqn:krho':Hinf} at the expense of shrinking
the domain where $\ln Q(k_{\rho}')/\ln\tilde{Q}(\tilde{k}_{\rho}') > \num{0,95}$
(relative error of approximation less than \SI{-5}{\percent}),
which corresponds to electrode widths $2w_{E}/W \lesssim \num{0,56}$.

The propagation of the relative error for cases of shallow cells
have not been reported previously in the literature and
they are given below as a reference.

\begin{figure}
	\centering
	\includegraphics{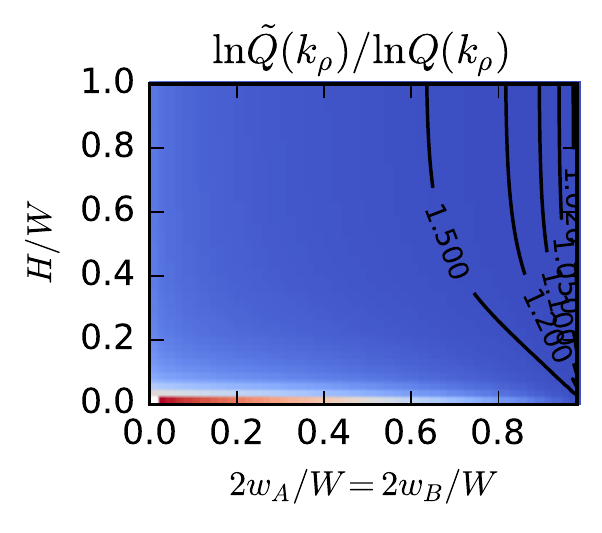} \quad
	\includegraphics{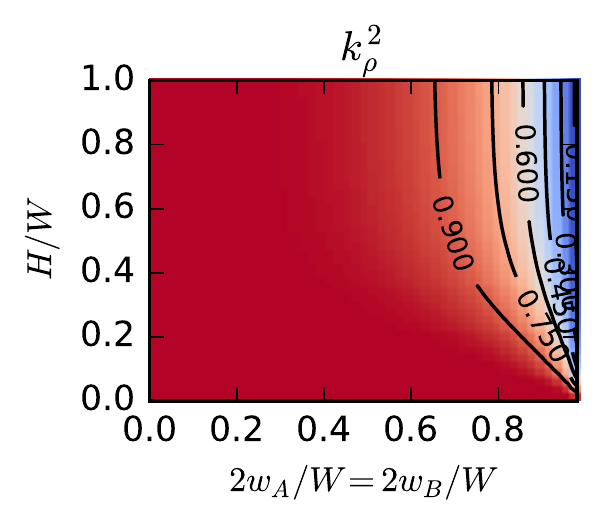}

	\includegraphics{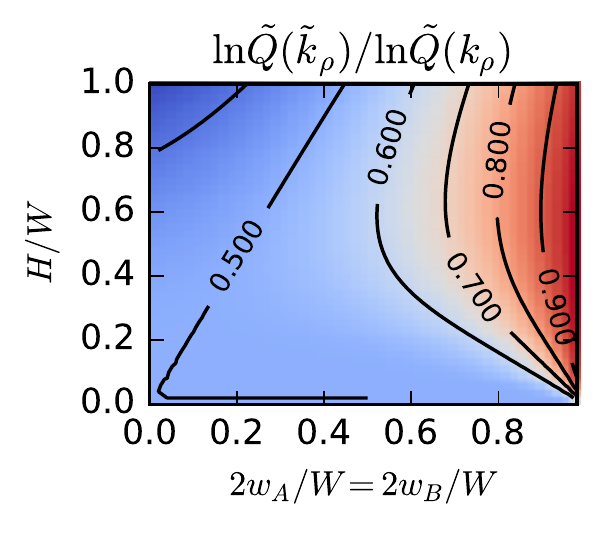} \quad
	\includegraphics{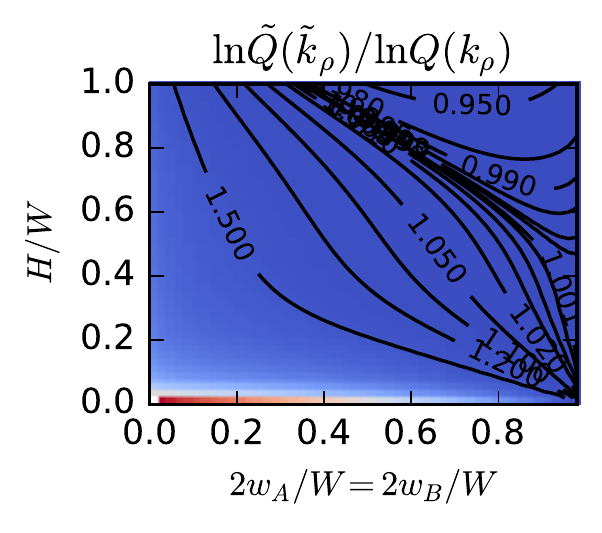}
	\caption{
		Propapation of the ratio of approximation
		$\ln\tilde{Q}(\tilde{k}_{\rho})/\ln Q(k_{\rho})$
		for the case of shallow cells and large electrodes.
		Contour lines of selected values are shown in black.
	}
	\label{si:fig:ratios:shallow:large}
\end{figure}

Fig. \ref{si:fig:ratios:shallow:large} shows the propagation of the total ratio,
in Eq. \eqref{si:eqn:ratio:k_rho}, for the case of shallow cells and large electrodes.
The first ratio $\ln\tilde{Q}(k_{\rho})/\ln Q(k_{\rho}) < \num{1,05}$
(relative error of \SI{5}{\percent}) for combinations of
$(2w_{E}/W, H/W)$ roughly to the right of the line joining the points
$(1,0) \to (\num{0,94}, 1)$ ($k_{\rho}^{2} \lesssim \num{0,316}$),
which again correspond to very wide electrodes.
The contribution due to the approximation of
the modulus $\tilde{k}_{\rho}$ makes the second ratio
$\ln\tilde{Q}(\tilde{k}_{\rho})/\ln\tilde{Q}(k_{\rho}) < 1$
in most of the domain $(2w_{E}/W, H/W)$.
This also extends the region where the total ratio
$\num{0,95} < \ln\tilde{Q}(\tilde{k}_{\rho})/\ln Q(k_{\rho}) < \num{1,05}$
(relative error less than \SI{+-5}{\percent}),
which corresponds to combinations of $(2w_{E}/W, H/W)$ approximately
at the right of the line joining the points $(1,0) \to (\num{0,28}, 1)$.

\begin{figure}[t]
	\centering
	\includegraphics{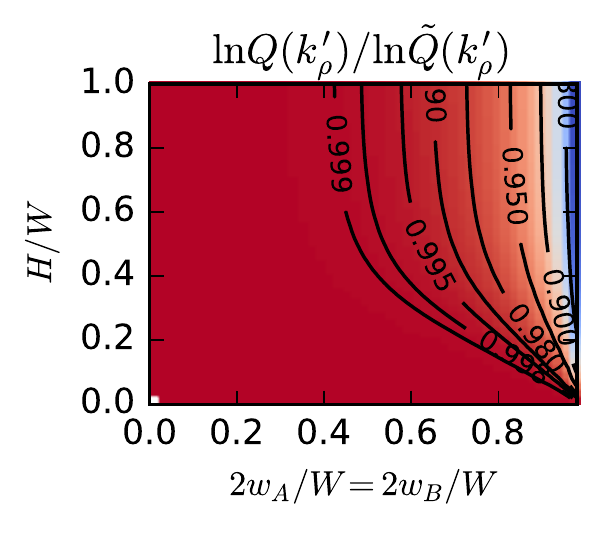} \quad
	\includegraphics{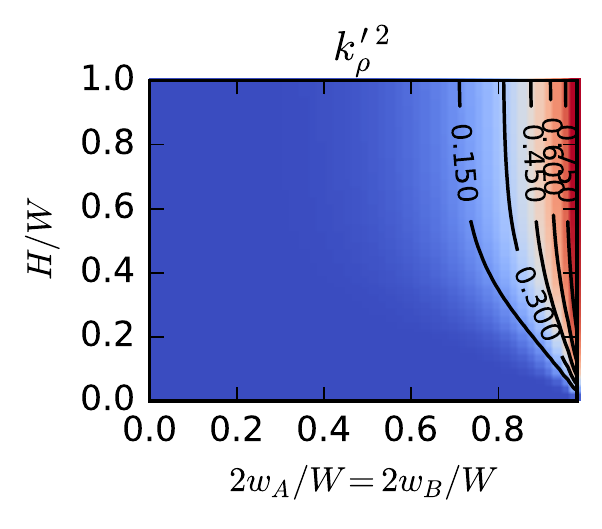}

	\includegraphics{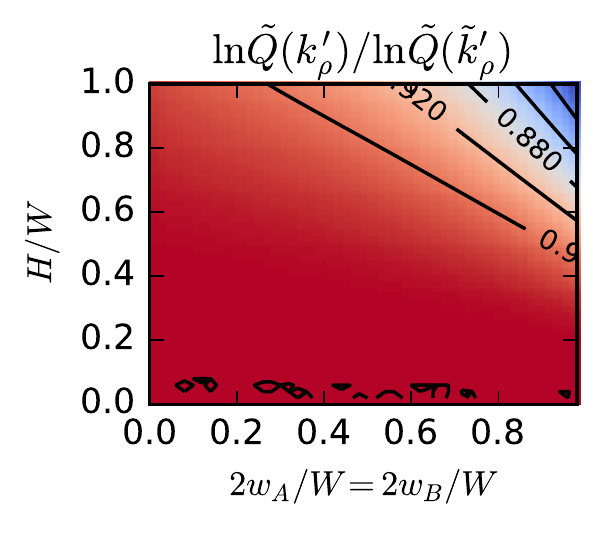} \quad
	\includegraphics{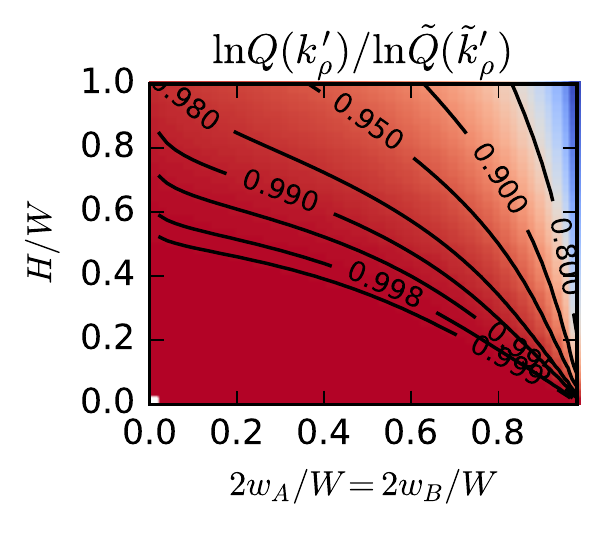}
	\caption{
		Propapation of the ratio of approximation
		$\ln\tilde{Q}(\tilde{k}_{\rho})/\ln Q(k_{\rho})$
		for the case of shallow cells and small electrodes.
		Contour lines of selected values are shown in black.
	}
	\label{si:fig:ratios:shallow:small}
\end{figure}

Fig. \ref{si:fig:ratios:shallow:small} shows the propagation of the total ratio,
in Eq. \eqref{si:eqn:ratio:k1_rho}, for the case of shallow cells and small electrodes.
The first ratio $\ln Q(k_{\rho}')/\ln\tilde{Q}(k_{\rho}') > \num{0,95}$
(relative error of \SI{-5}{\percent}) for combinations of $(2w_{E}/W, H/W)$
 roughly to the left of the line joining the points
$(1,0) \to (\num{0,82}, 1)$ (${k_{\rho}'}^{2} \lesssim \num{0,315}$),
which corresponds to a large portion of the parameter domain.
The approximation of the complementary modulus $\tilde{k}_{\rho}'$ makes
the second ratio $\ln\tilde{Q}(k_{\rho}')/\ln\tilde{Q}(\tilde{k}_{\rho}') < 1$
in most of the domain $(2w_{E}/W, H/W)$.
This approximation helps to simplify the expression for the complementary
modulus in Eq. \eqref{main:eqn:krho':Hinf} at the expense of shrinking
the domain where $\ln Q(k_{\rho}')/\ln\tilde{Q}(\tilde{k}_{\rho}') > \num{0,95}$
(relative error of approximation less than \SI{-5}{\percent}),
which corresponds to combinations of $(2w_{E}/W, H/W)$
approximately at the left of the line joining the points
$(1,0) \to (\num{0,36}, 1)$.

\paragraph{Output of Figs. \ref{main:fig:K'Krho:HW_tall} and \ref{main:fig:K'Krho:HW_shallow}}
\rule{0pt}{0pt}
\lstinputlisting[basicstyle={\ttfamily \scriptsize}]{fig-10.txt}

\lstinputlisting[basicstyle={\ttfamily \scriptsize}]{fig-11.txt}

\paragraph{Output of Figs. \ref{si:fig:ratios:tall:large}--\ref{si:fig:ratios:shallow:small}}
\rule{0pt}{0pt}
\lstinputlisting[basicstyle={\ttfamily \scriptsize}]{fig-S4_S7.txt}

\end{document}